\documentclass[a4paper,11pt,reqno]{article}
\usepackage[utf8]{inputenc} 
\usepackage{amsmath}
\usepackage{amssymb}
\usepackage{amsthm}
\usepackage{dsfont}
\usepackage[usenames,dvipsnames]{color}
\usepackage{caption}
\usepackage{subcaption}
\usepackage{tikz}
\usepackage[ruled,vlined,linesnumbered]{algorithm2e}
\usepackage[toc,page]{appendix}
\usetikzlibrary{positioning,chains,fit,shapes,calc}
\usetikzlibrary{decorations.pathreplacing}
\usetikzlibrary{topaths,calc}
\usetikzlibrary{arrows}
\usepackage{vmargin}
\usepackage{setspace}
\usepackage{lineno}
\usepackage{hyperref} 
\usepackage{url}

\definecolor{mygreen}{RGB}{40,140,120}
\definecolor{truegray}{gray}{0.75}
\definecolor{mydgreen}{RGB}{21,204,0}
\definecolor{mauve}{RGB}{51,0,102}
\definecolor{rose}{RGB}{255,0,153}
\definecolor{comms}{gray}{0.55}
\definecolor{brun}{RGB}{204,0,204}
\definecolor{moche}{RGB}{238,44,44}
\definecolor{bgray}{gray}{0.25}
\definecolor{bbgray}{gray}{0.40}
\definecolor{jaune}{RGB}{238,173,14}
\definecolor{bleu}{RGB}{58,95,205}
\definecolor{beau}{RGB}{255,203,96}
\definecolor{indigo}{RGB}{63,34,4}
\definecolor{vert2}{RGB}{58,242,75}
\definecolor{bl2}{RGB}{0,216,200}

\DeclareMathOperator{\idl}{idl}

\DeclareMathOperator{\opt}{opt}

\DeclareMathOperator{\lbl}{label}

\newcommand{\CCC}{\mathcal{C}{}}

\newcommand{\FFF}{\mathcal{F}{}}
\newcommand{\GGG}{\mathcal{G}{}}
\newcommand{\KKK}{\mathcal{K}{}}

\newcommand{\SSS}{\mathcal{S}{}}

\newcommand{\R}{\mathbb{R}}

\newtheorem{thm}{Theorem}[section]
\newtheorem{lem}{Lemma}[section]

\theoremstyle{definition}

\newtheorem{pbl}{Problem}

\theoremstyle{remark}

\begin{document}
 
\begin{spacing}{1.1}
\begin{center}
\textbf{\Huge Finding a Maximum-Weight Convex Set in a Chordal Graph}\\[1mm]
\textsc{Jean Cardinal, Jean-Paul Doignon, Keno Merckx}\\[2mm]
\end{center}

\begin{abstract}
We consider a natural combinatorial optimization problem on chordal graphs, the class of graphs with no induced cycle of length four or more.
A subset of vertices of a chordal graph is (monophonically) convex if it contains the vertices of all 
chordless paths between any two vertices of the set. The problem is to find a maximum-weight convex subset
of a given vertex-weighted chordal graph. It generalizes previously studied special cases in trees
and split graphs. It also happens to be closely related to the closure problem in partially ordered sets and directed graphs. 
We give the first polynomial-time algorithm for the problem.
\end{abstract}

\section{Introduction}

In many practical optimization problems, feasible solutions consist of one or more sets that are required to
satisfy some kind of convexity constraint. They can take the form of geometrically convex sets, such as in spatial planning
problems~\cite{Williams2003}, electoral district design~\cite{Douglas12}, or underground mine design~\cite{Parkinson2012}. 
Alternatively, convexity can be defined in a combinatorial fashion.

In the closure problem~\cite{Picard_1976}, we are given a directed graph with (positive or negative) vertex weights, and we are asked to find a maximum-weight vertex subset with no outgoing edges. In the case where the directed graph is acyclic, this amounts to find a maximum-weight downset of a partial order. Here, convexity is interpreted as the property of being downward closed. Again, many practical applications are related to the closure problem. For instance, military targeting~\cite{Orlin87}, transportation network design~\cite{ryhs70} and job scheduling~\cite{sidney75}. Recently, a parametric version of the closure problem has been studied by Eppstein~\cite{Eppstein18}.

In their seminal paper, Farber and Jamison~\cite{Farber_86} developed the foundations of a combinatorial abstraction of convexity in graphs.  In particular, they defined convex sets in graphs as subsets of vertices which contain the vertices of all chordless paths between any two vertices of the subset. This particular way of defining convexity in a graph is referred to as {\em monophonic convexity}. The collection of monophonic convex sets of a graph has specific nice properties and forms a convex geometry exactly if the graph is chordal. We consider the problem of finding a maximum-weight convex subset of a vertex-weighted chordal graph. We give a polynomial-time algorithm to solve the problem. Until now, only the special cases of trees~\cite{Wolsey95,Korte_Lovasz_1989_bis} and split graphs~\cite{merckx17} were known to be polynomial-time solvable.

Our algorithm for chordal graphs makes use of an algorithm due to Picard~\cite{Picard_1976} for the similar problem on ordered sets. Its design relies on a better understanding of the structure of a chordal graph from the point of view of its convex geometry. The results can be seen as a generalization of two algorithmic results for trees and split graphs, to all chordal graphs.

\subsection{Previous works}

The notion of a convex geometry appears in various contexts in mathematics and computer science. Dilworth~\cite{Dilworth_1940} first examined structures very close to convex geometries in terms of lattice theory. The convex geometries were formally introduced by Jamison~\cite{Jamison81,Jamison82} and Edelman and Jamison~\cite{Edelman85} in 1985. Later, Korte, Lov{\'a}sz and Schrader~\cite{Korte_Lovasz_Schrader_1991} considered antimatroids, which is the dual concept to the one of a convex geometry, as a subclass of greedoids. Today, the concept of a convex geometry (or antimatroid) appears in many fields of mathematics such as formal language theory (Boyd and Faigle~\cite{Boyd_Faigle_90}), choice theory (Koshevoy~\cite{Koshevoy_1999}), game theory (Algaba~\cite{Algaba_all_04}) and mathematical psychology (Falmagne and Doignon~\cite{Falmagne_Doignon_LS}) among others. 

When weights are assigned to the points of the convex geometry, the natural question of finding a convex set with maximum-weight arises. Particular subproblems are the closure problem~\cite{Picard_1976}, the maximum-weight subtree problem~\cite{Wolsey95,Korte_Lovasz_1989_bis}, the  maximum-weight path-closed set~\cite{Groflin84}, or in a more geometrical setting, some variants of the minimum $k$-gons problem~\cite{Eppstein92}. A more recent example is the problem of finding a maximum-weight convex set in a split graph~\cite{merckx17}. For all of these problems, polynomial-time algorithms were published. We also mention that, without focusing on algorithms, Korte and Lov{\'a}sz~\cite{Korte_Lovasz_1989_bis} gives linear characterizations of the convex set polytope for certain classes of antimatroids.

Searching for a general efficient algorithm to obtain a maximum-weight convex set in convex geometries seems hopeless because the problem is $NP$-hard even for special cases, see Eppstein~\cite{Eppstein_07} and Cardinal, Doignon and Merckx~\cite{merckx17}. However, searching for a polynomial-time algorithm for certain classes of convex geometries could reveal bridges between mathematical areas and lead to better understanding of the underlying mathematical objects.

Chordal graphs and their representations have generated an extensive literature. See for instance Blair and Peyton~\cite{Blair93}, McKee and McMorris~\cite{mckee99} or Golumbic~\cite{Golumbic2004} for theoretical and practical applications in various fields such as computational biology, phylogenetic, database, sparse matrix computation and statistics. But, despite a significant number of results about chordal graphs, there was to our knowledge, no polynomial-time algorithm to find a maximum-weight convex set.

\subsection{Structure of the paper}

In the next section, we give basic definitions and notation regarding convex geometries, graphs and posets, and formally define the optimization problem we consider. We also give the definition of the clique-separator graph of a chordal graph, which will be instrumental in what follows.
In Section~\ref{sec:step1}, we give a procedure solving the problem in a special family of instances. For this family, the problem is reduced to the closure problem in a partially ordered set. In Section~\ref{sec:step2} we generalize the algorithm to handle arbitrary chordal graphs and argue that it runs in polynomial time.

\section{Preliminaries}

We review here some basic notation and results for graphs and convex geometries, we also formally define the problems we investigate.

\subsection{Notation for graphs}

A (simple) graph $G$ is a pair $(V,E)$ where $V$ is the (finite) set of vertices and $E$ the set of edges, for a background on graph theory we recommend the book by Diestel~\cite{Diestel10}. A \emph{path} is a  sequence of distinct vertices $(v_1,\ldots,v_n)$ such that $\{v_i,v_{i+1}\}\in E$ for all $i$ in $\{1,\ldots, n-1\}$. The path is \emph{chordless} if no two vertices are connected by an edge that is not in the path. From a path $p=(v_1,\ldots,v_n)$ we can \emph{extract} a chordless path by taking a shortest path between $v_1$ and $v_n$ in the subgraph induced by the vertices in $p$. The graph is \emph{connected} if for any $u,v$ in $V$ there is a path $(u,\ldots,v)$.  A \emph{connected component} of $G$  is a maximal connected subgraph of $G$. Each vertex belongs to exactly one connected component, as does each edge. A \emph{cycle} is a path $(v_1,\ldots,v_n)$ such that $\{v_n,v_1\}$ is an edge. A cycle is \emph{chordless} if no two vertices of the cycle are connected by an edge that does not itself belong to the cycle.  A graph is \emph{chordal} if  every chordless cycle in the graph has at most three vertices. For $V'\subseteq V$ we denote by $N(V')$ the set of vertices $w$ in $V\setminus V'$ such that $\{w,v\}\in E$ for some $v$ in $V'$. We write $N(v)$ for $N(\{v\})$.

A \emph{clique} $K$ of $G$ is a set of pairwise adjacent vertices, we say that $K$ is a \emph{maximal clique} if there is no  clique $K'$ of $G$ such that $K\subset K'$. We denote by $\KKK_G$ the set of all maximal cliques in $G$. A \emph{separator} $S$ of $G$ is a set of vertices such that there exist two vertices $u,v$ in $V\setminus S$ connected by a path in the graph but not in $G-S$. We say that $S$ is a \emph{minimal separator} if there is no separator $S'$ of $G$ such that $S'\subset S$. For $u,v$ in $V$, a subset $S$ of $V\setminus \{u,v\}$ is a \emph{$uv$-separator} if $u$ and $v$ are connected in $G$ but not in $G-S$. The set $S$ is a \emph{minimal vertex separator} of $G$ if $S$ is a $uv$-separator for some $u,v$ in $V$ and $S$ does not strictly contain any $uv$-separator. Note that any minimal separator is also a minimal vertex separator, but the converse does not hold in general. We denote by $\SSS_G$ the set of all minimal vertex separators in $G$. Note that in chordal graphs, every minimal vertex separator is a clique. We observe that for any chordal graph $G=(V,E)$ we have $|\KKK_G|\leqslant |V|$ and $|\SSS_G|\leqslant |V|-1$, the proofs of those inequalities can be found in Fulkerson and Gross~\cite{fulkerson1965}, and Ho and Lee~\cite{Ho89} respectively.

\subsection{Convex geometries on posets and chordal graphs}

A set system $(V,\CCC)$, where $V$ is a finite set of elements and $ \CCC\subseteq 2^{V}$, is a \emph{convex geometry} when
\begin{align*}
&\varnothing \in \CCC,   \\
&\forall  C_1, C_2 \in \CCC : C_1\cap C_2 \in \CCC,   \\
&\forall C \in \CCC\setminus \{V\}, \, \exists \, c\in V\setminus C: C\cup \{c\} \in \CCC.  
\end{align*}
The \emph{convex sets} of the convex geometry $(V,\CCC)$ are the members of $\CCC$. The  \emph{feasible sets} are the complements in $V$ of the convex sets. An \emph{antimatroid} (or \emph{learning space}~\cite{Falmagne_Doignon_LS}) is a pair $(V,\FFF)$ such that $(V,\FFF^{\complement})$ is a convex geometry where $\FFF^{\complement}=\{V\setminus F : F\in \FFF\}$. All results on antimatroids have their counterpart for convex geometries.

We recall that a \emph{partially ordered set}  (or \emph{poset}) $P$ is a pair $(V,\leqslant)$ formed of a finite set $V$ and a binary relation $\leqslant$ over $V$ which is reflexive, antisymmetric, and transitive. For a poset  $(V,\leqslant)$ an \emph{ideal} $I$ is a subset of $V$ such that for all elements $a$ in $I$ and $b$ in $V$, if $b\leqslant a$, then $b$ is also in $I$. The ideals are also known as \emph{downsets}. We call $\idl(P)$ the set of ideals in $P$. For $u$, $v$ in $V$, we say that $v$ \emph{covers} $u$ in $P$ with $u\neq v$, if $u\leqslant v$ and there is no $x$ in $V\setminus \{u,v\}$ such that $u\leqslant x \leqslant v$.

One particular class of convex geometries described by Farber and Jamison~\cite{Farber_86}  comes from the ideals of a poset. More precisely, let $(V,\leqslant)$ be a poset, then $(V, \idl(V,\leq))$ is a  convex geometry called a \emph{downset alignment}. Thus the convex sets in $(V, \idl(V,\leq))$ are the ideals in $(V,\leqslant)$. The downset alignments  are the only convex geometries closed under union.  

For a graph $G=(V,E)$, a set $C$ of vertices is a \emph{monophonically convex set} (\emph{m-convex set}, or \emph{convex set}) if $C$ contains every vertex on every chordless path between vertices in $C$.   We denote with $\CCC_G$ the set of m-convex sets of $G$. It happens that $(V,\CCC_G)$, is a convex geometry if and only if $G$ is chordal (Farber and Jamison~\cite{Farber_86}).

Many classical problems in combinatorial optimization have the following form. For a set system $(V,\CCC)$  and for a function $w:V \rightarrow \R$, find a set $C$ of $\CCC$ maximizing the value of 
\begin{align*}
w(C)=\sum_{c\in C}w(c).
\end{align*}
For instance, the problem is known to be efficiently solvable for the system of independent sets of a matroid, thanks to the the greedy algorithm (see Oxley \cite{Oxley_2006}). Since convex geometries capture a combinatorial abstraction of convexity in the same way as matroids capture linear dependence, the question of finding a convex set of maximum-weight arises naturally.

The problem of finding efficiently a maximum-weight convex set in a poset was solved by Picard~\cite{Picard_1976}. The described algorithm calls as a subroutine a maximum flow algorithm (for instance Goldberg and Tarjan~\cite{Goldberg1988}) and runs in $O(mn \log(\frac{n^{2}}{m}))$ time, where $n$ and $m$ are respectively  the number of elements and the number of cover relations in the poset.

\subsection{The clique-separator graph for chordal graphs}
 
Ibarra~\cite{Ibarra09} introduces the \emph{clique-separator graph} for chordal graphs. For a chordal graph $G$, he defines a mixed graph where the nodes are the maximal cliques and minimal vertex separators of $G$. Moreover, the (directed) arcs and (undirected) edges respectively represent the containment relations between the maximal cliques and minimal vertex separators of $G$. The clique-separator graph $\mathcal{G}$ of a chordal graph $G$ has a set of \emph{clique nodes}, one for each clique of $G$ and a set of \emph{separator nodes} one for each minimal vertex separator of $G$. The clique-separator graph has also a set $A$ of edges and arcs defined a follow. Each arc $(S,S')$ is from a separator node $S$ to a separator node $S'$ such that $S\subset S'$ and there is no separator node $S''$ such that $S \subset S'' \subset  S'$. Each edge $\{K,S\}$ is between a clique node $K$ and a  separator node $S$ such that $S \subset K$ and there is no separator node $S'$ such that $S \subset S' \subset K$. Later in this work, we will denote by $\mathit{Ar}_{\GGG}$ the set of arcs in a clique-separator graph $\GGG$. Figure~\ref{fig:csepgr} gives us an example of a clique-separator graph of a chordal graph. Two of the mains results obtained by Ibarra~\cite{Ibarra09} are the following theorems.

\begin{thm}\label{ibarracube}
Given a chordal graph $G=(V,E)$, constructing its clique-separator graph can be done in $O(|V|^{3})$ time.
\end{thm}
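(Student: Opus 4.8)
The plan is to establish the bound by exhibiting a three-phase construction and bounding each phase separately. First I would produce the node set of $\GGG$, namely the maximal cliques and the minimal vertex separators of $G$; then I would compute the containment relation among these sets; and finally I would extract from it the \emph{immediate} containments, which are by definition exactly the arcs and edges of $\GGG$. Throughout I would exploit the two cardinality bounds recalled above, $|\KKK_G|\leqslant |V|$ and $|\SSS_G|\leqslant |V|-1$, so that the total number of nodes of $\GGG$ is $O(|V|)$.

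For the first phase, I would compute a perfect elimination ordering of $G$ in linear time (for instance by Lex-BFS or maximum cardinality search) and read off all maximal cliques from it; since there are at most $|V|$ of them, this costs $O(|V|+|E|)=O(|V|^{2})$. From the same ordering one obtains a clique tree, whose edge labels are the intersections of pairs of adjacent maximal cliques; the distinct such intersections are precisely the minimal vertex separators of $G$, so collecting them (again at most $|V|-1$ distinct sets) stays within an $O(|V|^{2})$ budget.

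The second phase is the dominant one. Keeping a membership array for each of the $O(|V|)$ sets, a single containment test $S\subseteq S'$ or $S\subseteq K$ costs $O(|V|)$. Ranging over all ordered pairs of separator nodes and over all separator–clique pairs, that is over $O(|V|^{2})$ pairs, yields the full containment relation in $O(|V|^{3})$ time. In the third phase I would pass from containment to covering: an arc $(S,S')$ of $\GGG$ is present exactly when $S\subset S'$ and no separator node $S''$ satisfies $S\subset S''\subset S'$, and an edge $\{K,S\}$ is present exactly when $S\subset K$ with no intervening separator node $S'$ such that $S\subset S'\subset K$. Each of these is the transitive reduction of an acyclic relation on $O(|V|)$ nodes, computed by checking for every containing pair whether a witnessing intermediate node exists, which is once more $O(|V|^{3})$.

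Summing the three phases gives the claimed $O(|V|^{3})$ running time. The main obstacle I anticipate is not a single hard step but keeping all three phases simultaneously correct and within budget: in particular, arguing that the clique-tree intersections yield \emph{all and only} the minimal vertex separators collected in $\SSS_G$, and that the reduction step returns exactly the immediate-containment arcs and edges required by the definition of $\GGG$ rather than the full containment relation, being careful that the intervening node witnessing removal of an arc or edge is required to be a separator node. Once correctness of the reduction is in hand, the time analysis is a direct count of $O(|V|^{2})$ set-pairs, each handled in $O(|V|)$ time.
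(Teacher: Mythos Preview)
The paper does not give its own proof of this theorem: it is stated as one of the main results of Ibarra~\cite{Ibarra09} and simply quoted, so there is no in-paper argument to compare against.

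Your three-phase plan is a sound self-contained justification of the $O(|V|^{3})$ bound. The node set of $\GGG$ has size $O(|V|)$ by the Fulkerson--Gross and Ho--Lee bounds recalled in the paper; a single containment test between two such sets costs $O(|V|)$; and both building the full containment relation over $O(|V|^{2})$ pairs and reducing it to immediate containment (checking, for each containing pair, whether some separator node sits strictly between) fit in $O(|V|^{3})$. The one step requiring an external fact is the claim that the distinct intersections along clique-tree edges are exactly the elements of $\SSS_G$; this is standard (it appears, for instance, in the Blair--Peyton survey cited in the paper), so invoking it is legitimate. Your caution that the witnessing intermediate node in the reduction must be a \emph{separator} node, for edges as well as for arcs, is exactly right and is the only subtlety in the third phase.
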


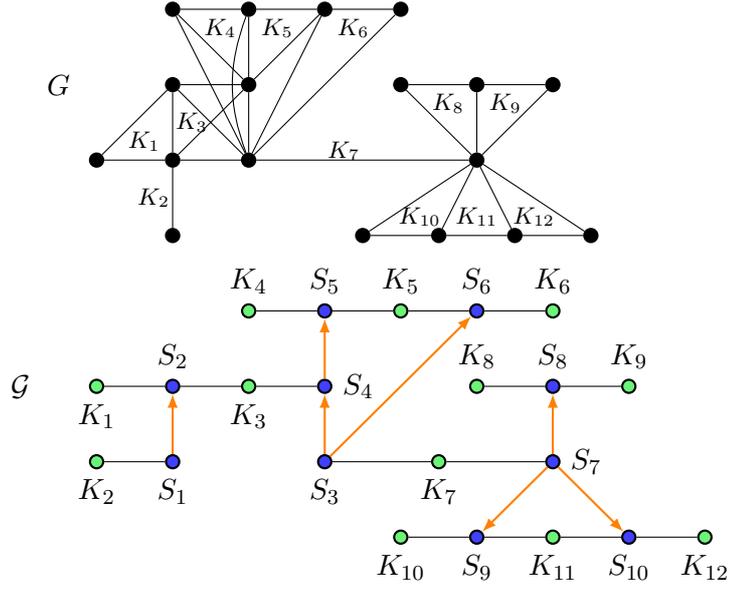
\begin{figure}[ht]
\centering
\begin{tikzpicture}[scale=0.5]
 \tikzstyle{vertex}=[circle,draw,fill=black,thick,inner sep=1.75pt]
 \draw (-1,2) node[] (l1) {$G$};
 \draw (1.25,0.5) node[] (k1) {{\footnotesize $K_1$}};
 \draw (1.5,-1) node[] (k2) {{\footnotesize $K_2$}};
 \draw (2.5,1) node[] (k3) {{\footnotesize $K_3$}}; 
 \draw (3.25,3.5) node[] (k4) {{\footnotesize $K_4$}}; 
 \draw (4.75,3.5) node[] (k5) {{\footnotesize $K_5$}}; 
 \draw (6.5,0.25) node[] (k6) {{\footnotesize $K_7$}}; 
 \draw (9.25,1.5) node[] (k7) {{\footnotesize $K_8$}};
 \draw (10.75,1.5) node[] (k8) {{\footnotesize $K_9$}};
 \draw (8.5,-1.5) node[] (k9) {{\footnotesize $K_{10}$}};
 \draw (10,-1.5) node[] (k10) {{\footnotesize $K_{11}$}};
 \draw (11.5,-1.5) node[] (k11) {{\footnotesize $K_{12}$}};
 \draw (6.75,3.5) node[] (k6) {{\footnotesize $K_6$}};
 \draw (0,0) node[vertex]  (a) {};
 \draw (2,0) node[vertex]  (b) {};
 \draw (2,-2) node[vertex]  (c) {};
 \draw (2,2) node[vertex]  (d) {};
 \draw (4,0) node[vertex]  (e) {};
 \draw (4,2) node[vertex]  (f) {};
 \draw (2,4) node[vertex]  (g) {};
 \draw (4,4) node[vertex]  (h) {};
 \draw (6,4) node[vertex]  (i) {};
 \draw (10,0) node[vertex]  (j) {};
 \draw (10,2) node[vertex]  (k) {};
 \draw (8,2) node[vertex]  (l) {};
 \draw (12,2) node[vertex]  (m) {};
 \draw (7,-2) node[vertex]  (n) {};
 \draw (9,-2) node[vertex]  (o) {};
 \draw (11,-2) node[vertex]  (p) {};
 \draw (13,-2) node[vertex]  (q) {};
 \draw (8,4) node[vertex]  (r) {};
 \draw (a)--(b)--(d)--(a) (b)--(c);
 \draw (d)--(e)--(b) (d)--(f)--(e) (f)--(b);
 \draw (f)--(h)--(g)--(f) (h)--(i)--(f) (g)--(e)--(i);
 \draw (e)--(j)--(k) (j)--(l)--(k)--(m)--(j); 
 \draw (j)--(n)--(o)--(j)--(p)--(q)--(j) (o)--(p) (i)--(r)--(e);
 \draw (h) to[in=110,out=-110] (e);
 \tikzstyle{cli}=[circle,,draw,fill=vert2!75,thick,inner sep=1.75pt]
 \tikzstyle{sep}=[circle,draw,fill=blue!75,thick,inner sep=1.75pt]
 \draw (-2,-6) node (l2a)  {$\mathcal{G}$};
 \draw (0,-8) node[cli, label=below :{$K_2$}]  (K1a) {};
 \draw (2,-8) node[sep, label=below :{$S_1$}]  (S1a) {};
 \draw (0,-6) node[cli, label=below :{$K_1$}]  (K2a) {};
 \draw (2,-6) node[sep, label=above :{$S_2$}]  (S2a) {};
 \draw (4,-6) node[cli, label=below :{$K_3$}]  (K3a) {};
 \draw (6,-6) node[sep, label=right :{$S_4$}]  (S3a) {};
 \draw (4,-4) node[cli, label=above :{$K_4$}]  (K4a) {};
 \draw (6,-4) node[sep, label=above  :{$S_5$}]  (S4a) {};
 \draw (10,-4) node[sep, label=above  :{$S_6$}]  (S6b) {};
 \draw (12,-4) node[cli, label=above  :{$K_6$}]  (K6b) {};
 \draw (8,-4) node[cli, label=above :{$K_5$}]  (K5a) {};
 \draw (6,-8) node[sep, label=below :{$S_3$}]  (S5a) {};
 \draw (9,-8) node[cli, label=below :{$K_7$}]  (K6a) {};
 \draw (12,-8) node[sep, label=right :{$S_7$}]  (S6a) {};
 \draw (12,-6) node[sep, label=above :{$S_8$}]  (S7a) {};
 \draw (10,-6) node[cli, label=above :{$K_8$}]  (K7a) {};
 \draw (14,-6) node[cli, label=above :{$K_9$}]  (K8a) {};
 \draw (12,-10) node[cli, label=below :{$K_{11}$}]  (K9a) {};
 \draw (10,-10) node[sep, label=below :{$S_{9}$}]  (S8a) {};
 \draw (14,-10) node[sep, label=below :{$S_{10}$}]  (S9a) {};
 \draw (8,-10) node[cli, label=below :{$K_{10}$}]  (K10a) {};
 \draw (16,-10) node[cli, label=below :{$K_{12}$}]  (K11a) {};
 \draw (K1a)--(S1a) (K2a)--(S2a)--(K3a)--(S3a) (K4a)--(S4a)--(K5a);
 \draw (S5a)--(K6a)--(S6a)  (K7a)--(S7a)--(K8a) (K10a)--(S8a)--(K9a)--(S9a)--(K11a);
 \draw (K5a)--(S6b)--(K6b);
 \draw[thick,-latex,orange] (S3a)->(S4a)  ;
 \draw[thick,-latex,orange] (S1a)->(S2a)  ;
 \draw[thick,-latex,orange] (S5a)->(S3a)  ;
 \draw[thick,-latex,orange] (S6a)->(S7a)  ;
 \draw[thick,-latex,orange] (S6a)->(S8a)  ;
 \draw[thick,-latex,orange] (S6a)->(S9a)  ;
 \draw[thick,-latex,orange] (S5a)->(S6b)  ;
\end{tikzpicture} 
  \caption{A clique-separator graph $\GGG$ of a chordal graph $G$}\label{fig:csepgr}
\end{figure}

\begin{thm}\label{ibarrath1p1}
Let $G=(V,E)$ be a chordal graph with clique-separator graph $\GGG$ and let $S$ be a separator node of $\GGG$. 
If $G - S$ has connected components $G_1, \ldots, G_t$, then $t>1$ and $\GGG - \{S': S'\in \SSS_G, S'\subseteq S\}$ has connected components $\GGG_1,\ldots,\GGG_t$ such that for every $1\leqslant i \leqslant t$, the vertex set of $G_i$ is the same as the vertex set represented by the nodes of $\GGG_i - S$. 
\end{thm}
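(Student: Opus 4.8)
The plan is to attach to every node of $\GGG' := \GGG - \{S' \in \SSS_G : S' \subseteq S\}$ a connected component of $G-S$, show this assignment is constant on connected components of $\GGG'$, and then identify each $G_i$ with a single connected component of $\GGG'$ carrying the claimed vertex set. Write $\mathrm{vert}(X)$ for the vertex set of $G$ represented by a node $X$ (so $\mathrm{vert}(K)=K$, $\mathrm{vert}(S')=S'$), extended additively to sets of nodes; we may assume $G$ connected, since the separators $S'\subseteq S$ all lie in the single component of $G$ meeting $S$ and the other components of $G$ contribute untouched components on both sides. Since $S$ is a minimal vertex separator it is a $uv$-separator for some $u,v$, so $u,v$ lie in distinct components of $G-S$ and $t>1$. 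The nodes kept in $\GGG'$ are all clique nodes together with the separator nodes $S'$ with $S'\not\subseteq S$, and for every kept node $X$ the set $\mathrm{vert}(X)\setminus S$ is nonempty: this is the defining condition for separator nodes, and for a maximal clique $K$ one has $K\not\subseteq S$ because a maximal clique is never contained in a minimal vertex separator (writing $S=K_a\cap K_b$ with $K_a\neq K_b$ maximal, $K\subseteq S\subseteq K_a$ would force $K=K_a\subseteq K_b$, a contradiction). As $\mathrm{vert}(X)$ is a clique, $\mathrm{vert}(X)\setminus S$ is a clique of $G-S$, hence connected and contained in a unique component, which I call $\phi(X)$.

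Next I would check that $\phi$ is constant along the edges and arcs of $\GGG'$. Each edge $\{K,S'\}$ and each arc $(S',S'')$ encodes a containment ($S'\subseteq K$, resp.\ $S'\subseteq S''$) between kept nodes, so the corresponding nonempty sets satisfy $\mathrm{vert}(X)\setminus S\subseteq \mathrm{vert}(Y)\setminus S$; since the smaller clique already pins down the component, the two endpoints get the same $\phi$-value. Hence no edge or arc of $\GGG'$ joins nodes of different $\phi$-value, and $\GGG'$ is the disjoint union, with no joining edges or arcs, of the induced subgraphs $\GGG_i := \GGG'[\phi^{-1}(G_i)]$. The inclusion $\mathrm{vert}(\GGG_i)\setminus S\subseteq V(G_i)$ is immediate, and the reverse inclusion holds because any $v\in V(G_i)\subseteq V\setminus S$ lies in some maximal clique $K$, a node of $\GGG'$ with $v\in K\setminus S$, so $K\in\GGG_i$ and $v\in\mathrm{vert}(\GGG_i)\setminus S$. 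In particular each $\GGG_i$ is nonempty and there are exactly $t$ of them.

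It remains to show each $\GGG_i$ is connected, which I expect to be the main obstacle. The key lemma is that for every vertex $v$ the nodes of $\GGG$ whose set contains $v$ induce a connected subgraph $\GGG[v]$ — the clique-separator analogue of the subtree property of clique trees. Granting this, note that every node of $\GGG_i$ contains a vertex of $G_i$, while for $v\in V(G_i)$ no node of $\GGG[v]$ is deleted (a deleted separator is $\subseteq S\not\ni v$) and every such node is assigned to $G_i$; hence $\GGG[v]\subseteq\GGG_i$ and $\GGG_i=\bigcup_{v\in V(G_i)}\GGG[v]$. Since consecutive vertices of a path in $G_i$ share a common maximal clique, the connected pieces $\GGG[v]$ overlap along the connected graph $G_i$, forcing $\GGG_i$ to be connected and so a single connected component of $\GGG'$.

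To establish the lemma I would argue as follows. The maximal cliques containing $v$ form a subtree of a clique tree, and the intersection of two cliques adjacent in that subtree is a minimal separator still containing $v$. Every minimal separator containing $v$ lies inside a maximal clique containing $v$, and a short induction on containment — climbing from a separator $S'$ through covers $S'\subset S'''\subseteq K$, all of which contain $v$ because they contain $S'$ — shows that each separator node of $\GGG[v]$ is joined inside $\GGG[v]$ to a clique node of $\GGG[v]$, and that any two cliques sharing a separator are joined; connectivity of $\GGG[v]$ follows. The delicate point throughout is precisely that all the separators one passes through automatically contain $v$, being supersets of sets that contain $v$, which is exactly what keeps the climbing path inside $\GGG[v]$.
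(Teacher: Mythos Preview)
The paper does not give its own proof of this theorem: it is quoted (together with Theorem~\ref{ibarracube}) from Ibarra's paper on the clique--separator graph and used as a black box thereafter. So there is nothing in the present paper to compare your argument against.

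That said, your proposal is a correct self-contained proof, and it follows the natural line one would expect Ibarra's proof to take. The map $\phi$ and the check that it is constant along edges and arcs of $\GGG'$ are clean, and the identification of $\mathrm{vert}(\GGG_i)\setminus S$ with $V(G_i)$ is correct. The only place that deserves a little more detail is the ``climbing'' step in your connectivity lemma for $\GGG[v]$: given a separator $S'$ with $v\in S'$ and a maximal clique $K\supset S'$, take $S^*$ maximal among separators with $S'\subseteq S^*\subset K$; then $\{K,S^*\}$ is an edge of $\GGG$ by the definition of edges, and any maximal chain from $S'$ to $S^*$ in the full separator poset supplies a path of arcs, each of whose terms contains $S'$ and hence $v$, so the whole path lies in $\GGG[v]$. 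Applying this with $K=K_c$ and $K=K_d$ for two clique-tree neighbours containing $v$ connects them through their common separator $K_c\cap K_d\ni v$, and the subtree property of clique trees then connects all clique nodes of $\GGG[v]$. With this spelled out, your proof is complete; you are effectively reconstructing Ibarra's connectivity argument rather than taking a genuinely different route.
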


\subsection{The problems}

Our main problem is to find a maximum-weight convex set in a given vertex-weighted chordal graph.  
It is the \emph{maximum-weight convex set problem in chordal graphs}.

\begin{pbl}\label{MajorPb}
Given a chordal graph $G$ and a weight function $w:V \rightarrow \R$, find a set $C$ in $\CCC_G$ that maximizes the value of $w(C)$.
\end{pbl}

Here is our main result.

\begin{thm}\label{thm:cmplxTOT}
The maximum-weight convex set problem in chordal graphs can be solved in polynomial time.
\end{thm}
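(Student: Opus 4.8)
The plan is to exploit the recursive structure of a chordal graph encoded by its clique-separator graph $\GGG$, and to reduce the problem, on suitably simple pieces, to the closure problem on an associated poset, which Picard's algorithm solves in polynomial time. The first ingredient I would establish is a structural dictionary between m-convex sets and $\GGG$. The crucial facts are that every minimal vertex separator in a chordal graph is a clique, and that if a convex set $C$ meets two distinct components of $G-S$ for a minimal vertex separator $S$, then $C$ must contain all of $S$: picking $u,v$ on the two sides, one can route chordless $uv$-paths through each vertex of $S$, so convexity forces $S\subseteq C$. Conversely, when $C$ sits inside a single component of $G-S$ (together with part of $S$), the convexity constraints "factor" through $S$. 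This lets me describe a convex set by the separator vertices it swallows at each relevant separator, together with its restriction to each component.

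For the special family treated in Section~\ref{sec:step1}, where the inclusion order on separator nodes is simple (for instance admitting a single source, so the "if you cross a separator you swallow it" conditions propagate transitively in one direction), I would show that these conditions become downward-closure conditions. Concretely, I would build a poset $P$ on the vertices (or on clique/separator interfaces) whose ideals are exactly the convex sets of the instance, assigning each element the weight of the corresponding vertex. Maximizing $w$ over $\idl(P)$ is then precisely the closure problem, solved by Picard's algorithm in $O(mn\log(n^2/m))$ time.

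For an arbitrary chordal graph I would use Ibarra's decomposition (Theorem~\ref{ibarrath1p1}): deleting the separators contained in a chosen separator $S$ splits $\GGG$ into components matching the components of $G-S$. Processing $\GGG$ recursively along the separator inclusion order, I would at each separator node enumerate which of its vertices the convex set contains. Since $|\KKK_G|\leqslant|V|$ and $|\SSS_G|\leqslant|V|-1$, and each separator is a clique, there are only polynomially many separators to coordinate across, and the optimal sub-solutions consistent with each interface can be combined via a dynamic program over $\GGG$, invoking the base-case (Picard) solver at the leaves. Summing costs — $O(|V|^3)$ to build $\GGG$ by Theorem~\ref{ibarracube}, polynomially many closure subproblems each solved in polynomial time — yields a polynomial overall bound and establishes Theorem~\ref{thm:cmplxTOT}.

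The main obstacle I anticipate lies in the first two steps: proving the exact equivalence between m-convex sets and the ideals of the constructed poset. Chordless paths can wander through several cliques and separators, so I must verify that the local "swallow the separator you cross" conditions, together with the within-component conditions, capture convexity with no missing constraints, and that they are transitive and mutually consistent enough to be encoded as a poset. The second delicate point is controlling how the interface choices at different separators interact, so that the recursion combines the pieces correctly without an exponential blow-up in the number of coordinated subproblems.
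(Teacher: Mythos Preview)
Your high-level architecture --- decompose along the clique-separator graph, reduce the easy pieces to a closure problem, and glue them together by dynamic programming --- matches the paper's strategy. But the proposal has a genuine gap at the gluing step, and it is exactly the ``second delicate point'' you flag at the end.

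You write that at each separator node you would ``enumerate which of its vertices the convex set contains'' and then combine sub-solutions consistent with each interface. The problem is that a minimal vertex separator of a chordal graph can have size $\Theta(|V|)$ (think of a split graph: the central clique is a separator), so enumerating its subsets is exponential. Your earlier ``swallow the separator you cross'' observation is the right instinct, but as stated it is slightly too strong --- if $S$ is a minimal $xy$-separator and $C$ meets two components of $G-S$ neither of which is the $x$- or $y$-component, you cannot conclude $S\subseteq C$; only a \emph{minimal} $uv$-separator inside $S$ is forced into $C$. So the interface at a generic separator is \emph{not} binary, and you have not explained how to avoid the blow-up.

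The paper resolves this by two ideas you are missing. First, it \emph{roots} the problem: it looks for a maximum-weight convex set containing a fixed dummy vertex $d_K$, and iterates over all $K\in\KKK_G$ (Lemmas~\ref{CeitherKK}--\ref{dkcupC}). Second, it isolates the obstruction precisely: an arc $(S_1,S_2)$ of $\GGG$ is called \emph{$K$-blocking} when $S_1$ is a minimal $s_2d_K$-separator for every $s_2\in S_2\setminus S_1$. When no such arc exists, the paper's explicit poset $P_K$ (with $u\leqslant_K v$ iff some chordless $v$--$d_K$ path passes through $u$) has ideals that coincide exactly with the $d_K$-rooted convex sets (Theorem~\ref{thmCCCIDL}), so Picard applies directly. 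When a $K$-blocking arc $a=(S_1,S_2)$ exists, Lemma~\ref{G1G2} shows that any $d_K$-rooted convex set touching $(G\ominus a)-S_1$ must contain \emph{all} of $S_1$; hence the interface really is binary, and one can collapse $(G\ominus a)-S_1$ to a single vertex of weight $w(\opt(G\ominus a,S_1))-w(S_1)$ without changing the optimum (Theorem~\ref{thm_agr}). The labels $\opt(G\ominus a,S_1)$ are computed bottom-up in the inclusion order on the subgraphs $G\ominus a$ (Algorithm~\ref{Algo2}), and Lemma~\ref{TreatSort} guarantees that every blocking arc encountered in a subproblem is already labeled. This is what turns your vague DP into a polynomial one: rooting plus the $K$-blocking notion is exactly the mechanism that collapses each separator interface from $2^{|S|}$ states to two.
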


The well-known problem of finding a maximum-weight connected subtree in a tree can be solved by selecting a vertex as ``root", finding a maximum-weight subtree that contains the root, and iterating this procedure for all possible roots (see Wolsey \textit{et al.}~\cite{Wolsey95}). In order to use a similar approach to solve Problem~\ref{MajorPb}, we define a notion of root. It will be easier to work with chordal graphs which are connected. Note that our results straightforwardly extend to the non-connected case.

In order to simplify some of the later statements and arguments, we want to have in each maximal clique some vertex which is adjacent to no vertex outside the clique and which has weight zero.  To this aim, we add such a vertex to any maximal clique (without changing the result of the optimization problems, see the end of the present subsection).  Formally, let $G=(V,E)$ be a vertex-weighted graph. For each maximal clique $K$ of $G$, we add a new vertex $d_K$ to the graph and we make $d_K$ adjacent to exactly the vertices in $K$.  The weight of $d_K$ is set to $0$, while the other vertices keep their weight.  The resulting vertex-weighted graph is called the \emph{extension} $G'$ of $G$.  Notice that the maximal cliques of $G'$ are all of the form $K \cup \{d_K\}$, where $K$ is a maximal clique of $G$; we call $d_K$ the \emph{dummy vertex} of the maximal clique $K \cup \{d_K\}$.
Given a vertex-weighted chordal graph $G=(V,E)$, its extension $G'=(V',E')$ is also a vertex weight chordal graph.  Remark that $G$ and $G'$ essentially have the same clique-separator graph. When $G=G'$, we say that the vertex-weighted chordal graph $G$ is \emph{extended}.

%

For a set $R$ of vertices of an extended vertex weight chordal graph~$G$, we say that a convex set $C$ of $\CCC_{G}$ is \emph{$R$-rooted} if $R\subseteq C$. 
If $R$ is a singleton $\{r\}$ we write $r$-rooted instead of $\{r\}$-rooted. This modification allows us to define the following problem.

\begin{pbl}\label{subProbs}
Given an extended chordal graph $G$ with a weight function $w:V \rightarrow \R$ 
and a maximal clique $K$ of $G$, find a $d_K$-rooted convex set $C$ of $G$ that maximizes the value of $w(C)$.
\end{pbl}

We show below that, given any vertex-weighted chordal graph $G$, solving Problem~\ref{subProbs} for the extension $G'$ of $G$ for all $K$ in $\KKK_G$ gives us a solution to Problem~\ref{MajorPb}. The first lemma states the obvious link between the convex sets of $G$ and $G'$.

\begin{lem}\label{CeitherKK}
Let $G=(V,E)$ be a chordal graph, $C$ be a convex set of $G$ and $C'$ be a convex set of $G'=(V',E')$, the extension of $G$. Then $C$ is a convex set of $G'$ and $C' \cap V$ is a convex set of $G$. 
\end{lem}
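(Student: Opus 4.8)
The plan is to prove the two assertions separately, and in both cases to reduce everything to a comparison of chordless paths in $G$ and in $G'$. The central structural fact I would establish first is that every dummy vertex $d_K$ is simplicial in $G'$: by construction $N(d_K)=K$, and $K$ is a clique, so the neighborhood of $d_K$ induces a complete graph. The consequence I would then exploit is that no chordless path of $G'$ whose two endpoints lie in $V$ can use a dummy vertex as an internal vertex. Indeed, if such a path contained a subpath $a - d_K - b$, then $a,b\in K$ would be adjacent, and the edge $\{a,b\}$ would be a chord of the path, contradicting chordlessness. Since the endpoints lie in $V$ and hence are not dummy vertices, a dummy vertex could only appear internally, which is now excluded. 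I will also record the elementary observation that the extension only adds edges incident to dummy vertices, so $G$ is exactly the subgraph of $G'$ induced by $V$; in particular, for vertices of $V$, adjacency (and hence the presence of a chord) is the same in $G$ and in $G'$.

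For the first assertion, let $C\in\CCC_G$, so that $C\subseteq V$. To show $C\in\CCC_{G'}$, I would take $u,v\in C$ and an arbitrary chordless path $P$ of $G'$ between them. By the observation above, $P$ uses only vertices of $V$; by the edge-identification, $P$ is then a path of $G$ whose chordlessness in $G'$ is equivalent to chordlessness in $G$. Hence $P$ is a chordless path of $G$ joining two vertices of $C$, so all its vertices lie in $C$ by convexity of $C$ in $G$, yielding $C\in\CCC_{G'}$.

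For the second assertion, let $C'\in\CCC_{G'}$ and set $D=C'\cap V$. I would take $u,v\in D$ and a chordless path $P$ of $G$ between them. All vertices of $P$ lie in $V$, and by the same edge-identification $P$ is also a chordless path of $G'$. Since $u,v\in C'$ and $C'$ is convex in $G'$, every vertex of $P$ lies in $C'$; as these vertices are in $V$, they lie in $C'\cap V=D$. Thus $D=C'\cap V\in\CCC_G$.

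I do not expect a genuine obstacle here. The only point requiring real care is the claim that chordless paths between non-dummy vertices avoid dummy vertices, and this rests entirely on the simpliciality of $d_K$; once that is in place, both inclusions follow from the fact that $G$ is the induced subgraph of $G'$ on $V$, so that restricting any chordless path to $V$ behaves identically in the two graphs.
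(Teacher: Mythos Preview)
Your proof is correct and follows essentially the same approach as the paper: both directions come down to the fact that a chordless path in $G'$ between two vertices of $V$ is a chordless path in $G$, and vice versa. The paper's proof simply asserts these two implications in one line each; your version makes explicit the underlying reason, namely that each dummy vertex $d_K$ is simplicial in $G'$ and hence cannot occur as an internal vertex of a chordless path between two vertices of $V$, and that $G$ is the subgraph of $G'$ induced by $V$.
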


\begin{proof}
First, $C$ is a convex set of $G'$ because any chordless path in $G'$ between two vertices of $C$ is a chordless path in $G$. Second, $C' \cap V$ is convex in $G$ because any chordless path in $G$ between two vertices of $C' \cap V$ is a chordless path in $G'$.
\end{proof}

The next lemma shows a stronger result than what we need for proving the equivalence between Problem~\ref{MajorPb} and Problem~\ref{subProbs}, but it will be useful.

\begin{lem}\label{dkcupC}
Let $G=(V,E)$ be a chordal graph with a convex set $C$ in $G$, and $G'$ be the extension of $G$. Let $K_{C}$ be a maximal clique of the graph induced by $C$. Then, for every $K'$ in $\KKK_{G'}$ such that $K_{C}\subseteq K'$, the set $\{d_{K'}\} \cup C$ is convex in $G'$.
\end{lem}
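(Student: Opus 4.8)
The plan is to verify the defining property of convexity directly: taking any two vertices $x,y\in\{d_{K'}\}\cup C$ and any chordless path $P$ between them in $G'$, I will show that every vertex of $P$ lies in $\{d_{K'}\}\cup C$. Write $K'=K\cup\{d_K\}$ with $K$ a maximal clique of $G$, so that $d_{K'}=d_K$; since $K_C\subseteq C\subseteq V$, the containment $K_C\subseteq K'$ forces $K_C\subseteq K$. There are two cases. If $x,y\in C$, then because $C$ is already convex in $G'$ by Lemma~\ref{CeitherKK}, the whole path stays in $C$ and we are done. The only substantial case is $x=d_K$, $y\in C$.

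So consider a chordless path $P=(d_K,v_1,\dots,v_n=y)$ in $G'$. First I would pin down its structure near $d_K$: since $d_K$ is adjacent in $G'$ to exactly the vertices of $K$ and $v_1$ is its unique neighbour on $P$, chordlessness forces $v_1\in K$ and $v_i\notin K$ for every $i\geqslant 2$ (a later vertex of $K$ would yield a chord incident to $d_K$). Next I would check that $P$ uses no other dummy vertex: an internal dummy $d_{K''}$ would have both of its path-neighbours inside the clique $K''$, and those two neighbours, being in a common clique, would be adjacent, producing a chord. Consequently $P'=(v_1,\dots,v_n)$ is a chordless path of $G$, and it suffices to prove $v_1,\dots,v_{n-1}\in C$.

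The crux, and the step I expect to be the main obstacle, is showing $v_1\in C$; the remainder is bookkeeping. I would argue by contradiction, assuming $v_1\notin C$. Let $v_k$ be the first vertex of $P'$ lying in $C$, which exists since $y\in C$ and satisfies $k\geqslant 2$. Fix any $z\in K_C$. Since $z,v_1\in K$ and $z\neq v_1$ (because $K_C\subseteq C$ while $v_1\notin C$), $z$ is adjacent to $v_1$, so $(z,v_1,\dots,v_k)$ is a genuine path of $G$. Extracting from it a chordless path between $z$ and $v_k$ within the vertex set $\{z,v_1,\dots,v_k\}$ gives, by convexity of $C$, a path lying in $C$; as $v_1,\dots,v_{k-1}\notin C$ it must avoid these vertices, so it reduces to the single edge $\{z,v_k\}$. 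Thus every $z\in K_C$ is adjacent to $v_k$, and since $v_k\notin K$ (as $k\geqslant 2$) we have $v_k\notin K_C$; hence $K_C\cup\{v_k\}$ is a clique of the subgraph induced by $C$ strictly larger than $K_C$, contradicting the maximality of $K_C$. Therefore $v_1\in C$.

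Finally, once $v_1\in C$, the path $P'$ is a chordless path of $G$ between the two vertices $v_1,y\in C$, so convexity of $C$ yields $v_1,\dots,v_n\in C$; together with $d_K$ this shows $P\subseteq\{d_K\}\cup C$, completing the verification and the proof.
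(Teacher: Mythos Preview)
Your proof is correct and follows essentially the same route as the paper's: both hinge on showing that the first vertex of $C$ reached along the offending path is adjacent to every element of $K_C$, contradicting the maximality of $K_C$ inside $C$. You are more explicit than the paper about excluding dummy vertices from the interior of the path and about the case $x,y\in C$, but the core argument is the same.
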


\begin{proof}
For $K'$ in $\KKK_{G'}$ such that $K_{C}\subseteq K$, suppose that $\{d_{K'}\} \cup C$ is not convex in $G'$. So there is a chordless path $(d_{K'},f_1,\ldots,f_t,c)$ in $G'$ with $c$ in $C$ but $f_1,\ldots,f_t$ not in $C$.  Because $f_1$ must be in $K'$, we know that for all $v$ in $K_{C}$ we must have $\{v,c\} \in E$ (otherwise any monophonic path in $G$ we can extract from $(v,f_1,\ldots,f_t,c)$ contradicts the convexity of $C$).  There results a contradiction with the maximality of $K_{C}$.
\end{proof}

 
Lemmas~\ref{CeitherKK} and~\ref{dkcupC} combined show that any algorithm solving Problem~\ref{subProbs} in polynomial time establishes Theorem~\ref{thm:cmplxTOT}. Indeed, we run the algorithm solving Problem~\ref{subProbs} on every maximal clique and save a maximum-weight solution $C^*$ among all the outputs of the executions. Then we remove the dummy vertices from $C^*$ and we are done. 

In what follows, the chordal graphs we consider are extended:  we consider that every maximal clique $K$ contains a fixed, dummy vertex $d_K$.

\section{A special case} \label{sec:step1}

In this section, we solve Problem~\ref{subProbs} for a family of special instances. We first define a partial order relation on the vertices of a given chordal graph. Then we use this relation to reduce instances of Problem~\ref{subProbs} in this family to the closure problem in posets. The latter  problem can be solved in polynomial time using Picard's algorithm~\cite{Picard_1976}.
\subsection{The rooted poset}

Let $K$ be a maximal clique of a chordal graph $G=(V,E)$. We define the binary relation $\leqslant_K$ on $V$ as the set of pairs $(u,v)\in V \times V$ such that there is a chordless path $(v,\ldots,d_K)$  that contains $u$. For the reduction we need to check that the relation is indeed a partial order.

\begin{thm}\label{OrderOK}
For $G=(V,E)$ a chordal graph and $K$ a maximal clique of $G$, the pair $(V,\leqslant_K)$ is a poset.
\end{thm}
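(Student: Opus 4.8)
The plan is to read the relation geometrically: $u\leqslant_K v$ holds exactly when $u$ lies on some chordless path from $v$ to $d_K$. Before checking the three axioms I would record two structural facts about the extended graph. First, since $d_K$ is adjacent precisely to the clique $K\setminus\{d_K\}$, any chordless path ending at $d_K$ meets $K\setminus\{d_K\}$ in exactly one vertex, namely its penultimate one: a second vertex of $K\setminus\{d_K\}$ on the path would be adjacent to $d_K$ and create a chord. Second, the only chordless path from $d_K$ to $d_K$ is the trivial one, so $d_K$ is the least element, with $d_K\leqslant_K v$ for every $v$ while $v\leqslant_K d_K$ fails whenever $v\neq d_K$. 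In particular any violation of antisymmetry must involve two vertices $u\neq v$ with $u,v\neq d_K$. Reflexivity is then immediate: $G$ is connected, so there is a $v$–$d_K$ path, from which the extraction procedure of the preliminaries yields a chordless one; that path contains its own endpoint $v$, giving $v\leqslant_K v$ (and for $v=d_K$ one uses the length-zero path).

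For antisymmetry I would argue by contradiction. Suppose $u\neq v$, $u,v\neq d_K$, with a chordless path $P$ from $v$ to $d_K$ through $u$ and a chordless path $Q$ from $u$ to $d_K$ through $v$. Splitting $P$ at $u$ produces a chordless $v$–$u$ path together with a chordless $u$–$d_K$ suffix $\sigma$, and splitting $Q$ at $v$ produces a chordless $u$–$v$ path together with a chordless $v$–$d_K$ suffix $\tau$. Among all witnessing pairs $(P,Q)$ I would fix one minimizing $|V(P)|+|V(Q)|$. The goal is to show that the two $u$–$v$ connections, glued to the suffixes $\sigma$ and $\tau$ running to the common endpoint $d_K$, enclose an induced cycle on at least four vertices. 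Any chord between the two paths, or any shared vertex away from $u,v,d_K$, would let me shorten one path while preserving both relations, contradicting minimality; and the resulting induced closed walk cannot collapse to a triangle, because $u,v\neq d_K$ and $\sigma,\tau$ each reach $d_K$ through a vertex of $K\setminus\{d_K\}$, a configuration already constrained by the penultimate-vertex remark. An induced cycle on four or more vertices contradicts chordality, so antisymmetry holds.

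For transitivity, suppose $u\leqslant_K v$ via a chordless path $P$ from $v$ to $d_K$ through $u$, and $v\leqslant_K w$ via a chordless path $Q$ from $w$ to $d_K$ through $v$. Let $\sigma$ be the chordless $u$–$d_K$ suffix of $P$, and let $R$ be a chordless $w$–$u$ path extracted from the concatenation of the $w$–$v$ prefix of $Q$ with the $v$–$u$ prefix of $P$; I would then assemble the desired chordless $w$–$d_K$ path through $u$ out of $R$ and $\sigma$. The main obstacle, and the place where chordality is essential, is exactly here: $R$ and $\sigma$ originate in different paths, so they may share a vertex other than $u$ or be joined by a chord, and a careless extraction could short-circuit past $u$, destroying the very conclusion I need. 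I expect to handle this with the same extremal device, choosing $R$ and $\sigma$ so as to minimize $|V(R)|+|V(\sigma)|$ subject to $R$ ending and $\sigma$ starting at $u$; any obstruction to $R\cup\sigma$ being an induced $w$–$d_K$ path then either offers a shortcut contradicting minimality or exhibits an induced cycle of length at least four through $d_K$. Proving that this minimal configuration cannot avoid $u$ is the crux of the whole statement.

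An alternative route that I would keep in reserve is to identify, for each $v$, the down-set $\{u:u\leqslant_K v\}$ with the monophonic convex hull of $\{v,d_K\}$ in $\CCC_G$. The inclusion of the down-set in the hull is automatic, since any convex set containing $v$ and $d_K$ must contain every chordless $v$–$d_K$ path. If one can show the reverse inclusion, that is, that the set of vertices on chordless $v$–$d_K$ paths is already convex in a chordal graph, then transitivity becomes formal: $v\leqslant_K w$ places $v$ (and $d_K$) in the hull of $\{w,d_K\}$, whence the hull of $\{v,d_K\}$ is contained in it, so $u\leqslant_K v\leqslant_K w$ forces $u\leqslant_K w$. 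This reduces the difficult step to the single statement that intervals toward $d_K$ are m-convex, which again rests on chordality through a forbidden-induced-cycle argument of the kind sketched above.
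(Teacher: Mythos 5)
You have correctly isolated where the real difficulty of this theorem lives --- in both your antisymmetry and transitivity arguments you point at the step where a chordless path extracted from a concatenated walk might bypass the vertex it is supposed to contain --- but the proposal never actually overcomes that difficulty, so there is a genuine gap. For antisymmetry, the assertion that any chord between the two witnessing paths, or any shared vertex away from $u,v,d_K$, ``would let me shorten one path while preserving both relations'' is precisely the unjustified step: shortcutting along a chord can eject $u$ (or $v$) from the path, or yield a $v$--$d_K$ path that is no longer chordless, so minimality of $|V(P)|+|V(Q)|$ over witnessing pairs buys you nothing without further argument (note also that the two $u$--$v$ portions of $P$ and $Q$ may simply coincide, in which case no cycle appears between them). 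For transitivity you state outright that proving the minimal configuration cannot avoid $u$ ``is the crux of the whole statement'' and leave it unproved. The reserve route is in no better shape: the lemma that the union of all chordless $v$--$d_K$ paths is itself m-convex is nontrivial and unproven, and the paper's Theorem~\ref{thmCCCIDL} shows that ideals of $P_K$ are \emph{not} convex in general chordal graphs, so convexity of these intervals is exactly the kind of statement that cannot be waved through.

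For comparison, the paper's proof sidesteps the extraction-control problem with a different device. For antisymmetry, with witnessing paths $(a=u_1,\ldots,b=u_j,\ldots,d_K=u_n)$ and $(b=v_1,\ldots,a=v_l,\ldots,d_K=v_m)$, it forms the walk $(u_{j+1},\ldots,u_n,v_{m-1},\ldots,v_l,u_2,\ldots,u_{j-1})$ joining the two path-neighbours of $b$ while avoiding $b$ entirely, extracts from it a chordless path $p$ --- which has at least three vertices because $\{u_{j-1},u_{j+1}\}\notin E$ --- and then re-attaches $b$. Chordlessness of the two original paths guarantees that $b$ is adjacent to no vertex of the walk other than $u_{j-1}$ and $u_{j+1}$, so $p\cup\{b\}$ is a chordless cycle on at least four vertices, contradicting chordality; the same re-attachment trick disposes of the bad case in transitivity, where the extracted $a$--$d_K$ path fails to pass through $c$. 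The point is that one never needs to control which vertices survive an extraction: one only needs the extracted path to avoid $b$, which is automatic because $b$ is not on the walk at all. As written, both of your main arguments stop one step short of a proof, and this is the mechanism you would need to import to close them.
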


We give a proof for Theorem~\ref{OrderOK} in Appendix~\ref{app:proofORD}. It can be shown that the order relation we just defined is a special case of the $C$-factor relation defined by Edelman and Jamison~\cite{Edelman85} (taking the convex set $C$ equal to $\{d_K\}$). The poset $P_K=(V,\leqslant_K)$ will be referred to as the \emph{$K$-rooted poset} of $G$.  Figure~\ref{fig:Gtreat1} shows a chordal graph and the Hasse diagram for $(V,\leqslant_K)$ with $K=\{1,2,d_{\{1,2\}}\}$.

\begin{figure}[ht]
\centering
\begin{tikzpicture}[scale=0.45]
 \tikzstyle{vertex}=[circle,draw,fill=black,thick,inner sep=1.75pt]
 \tikzstyle{vertex2}=[circle,gray,draw,fill=yellow!50,thick,inner sep=1pt]
  \tikzstyle{vertex3}=[circle,draw,fill=mydgreen!75,thick,inner sep=1.75pt]
 \draw (-11,4) node[] {$G=(V,E)$};
 \draw (-2,8) node[vertex, label=right :{$1$}]  (a) {};
 \draw (-2,6) node[vertex, label=right :{$2$}]  (b) {};
 \draw (-4,6) node[vertex, label=left :{$3$}]  (c) {};
 \draw (-2,4) node[vertex, label=right :{$4$}]  (d) {};
 \draw (-4,4) node[vertex, label=left :{$5$}]  (e) {};
 \draw (-4,2) node[vertex, label=left :{$6$}]  (f) {};
 \draw (-2,2) node[vertex, label=right :{$7$}]  (g) {};
 \draw (-4,0) node[vertex, label=left :{$8$}]  (h) {};
 \draw (a)--(b)--(d)--(c)--(b);
 \draw (c)--(e)--(d) (e)--(d)--(f)--(e) (f)--(g)--(d) (f)--(h);
 \draw (-3.75,7) node[vertex2, label=left :{{\tiny $d_{\{1,2\}}$}}]  (dk1) {};
 \draw (-5.75,5) node[vertex2, label=left :{{\tiny $d_{\{3,4,5\}}$}}]  (dk2) {};
 \draw (-5.75,3) node[vertex2, label=left :{{\tiny $d_{\{4,5,7\}}$}}]  (dk3) {};
 \draw (-0.75,5) node[vertex2, label=right :{{\tiny $d_{\{2,3,4\}}$}}]  (dk4) {};
 \draw (-0.75,3) node[vertex2, label=right :{{\tiny $d_{\{4,6,7\}}$}}]  (dk5) {};
 \draw (-2.25,1) node[vertex2, label=right :{{\tiny $d_{\{7,8\}}$}}]  (dk6) {};
 \draw[gray!50] (dk4)--(c) (dk4)--(b) (dk4)--(d);
 \draw[gray!50] (dk5)--(g) (dk5)--(f) (dk5)--(d);
 \draw[gray!50] (dk2)--(c) (dk2)--(d) (dk2)--(e);
 \draw[gray!50] (dk3)--(d) (dk3)--(e) (dk3)--(f);
 \draw[gray!50] (dk1)--(a) (dk1)--(b);
 \draw[gray!50] (dk6)--(f) (dk6)--(h);
 \draw (5.5,4) node[] {$\left(V,\leqslant_{\{1,2,d_{\{1,2\}}\}}\right)$};
\draw (11,0) node[vertex3, label=below :{{\tiny $d_{\{1,2\}}$}}]  (dab2) {}; 
\draw (9,2) node[vertex3, label=below :{$1$}]  (a2) {};
\draw (13,2) node[vertex3, label=below :{$2$}]  (b2) {};
\draw (11,4) node[vertex3, label=below :{$3$}]  (c2) {};
\draw (13,4) node[vertex3, label=below right:{$4$}]  (d2) {};
\draw (15,4) node[vertex3, label=below right :{{\tiny $d_{\{2,3,4\}}$}}]  (dbcd2) {};
\draw (11,6) node[vertex3, label=below left:{{\tiny $d_{\{3,4,5\}}$}}]  (dcde2) {};
\draw (13,6) node[vertex3, label=below right :{$5$}]  (e2) {};
\draw (11,8) node[vertex3, label=below :{$6$}]  (g2) {};
\draw (15,8) node[vertex3, label=below right :{{\tiny $d_{\{4,5,7\}}$}}]  (ddeg2) {};
\draw (10,10) node[vertex3, label=below :{$7$}]  (f2) {};
\draw (8,10) node[vertex3, label=below left:{{\tiny $d_{\{7,8\}}$}}]  (dgh2) {};
\draw (12,10) node[vertex3, label=below :{$8$}]  (h2) {};
\draw (14,10) node[vertex3, label=below right  :{{\tiny $d_{\{4,6,7\}}$}}]  (ddgf2) {};
\draw (b2)--(dab2)--(a2)  (dbcd2)--(b2)--(d2) (b2)--(c2)--(dcde2)--(d2)--(e2)--(c2);
\draw (ddeg2)--(e2)--(g2)--(dgh2) (f2)--(g2)--(h2) (g2)--(ddgf2);
\end{tikzpicture}
  \caption{A chordal graph and its $\{1,2,d_{\{1,2\}}\}$-rooted poset.}\label{fig:Gtreat1}
\end{figure}
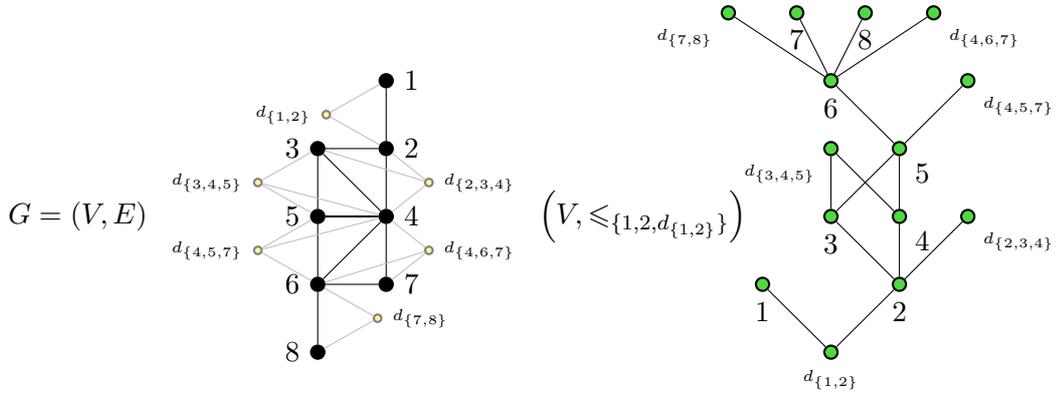

\subsection{A reduction to the maximum-weight ideal in poset problem}

We now give a sufficient condition on a pair $(G,K)$, where $G$ is a chordal graph and $K$ a maximal clique of $G$, for the existence of a 
one-to-one correspondence between the nonempty ideals of the $K$-rooted poset and the $d_K$-rooted convex sets. 
Given a chordal graph $G$ with clique-separator graph $\GGG$, for $K$ in $\KKK_G$ and  $a=(S_1,S_2)$ in  $\mathit{Ar}_{\GGG}$, we say that $a$ is \emph{$K$-blocking} if $S_1$ is a minimal $s_2d_K$-separator for every $s_2$ in $S_2\setminus S_1$. There is also an interpretation of the $K$-blocking property in the clique-separator graph. An arc $(S_1,S_2)$ is $K$-blocking if $\GGG - \{S': S'\in \SSS_G, S'\subset S_1\}$ has connected components $\GGG_1,\ldots,\GGG_t$ such that $S_2$ is included in $\GGG_i$ and $K$ in $\GGG_j$ for some distinct $i$ and $j$, and that there is no $S$ in $\GGG_j$ such that $(S,S_1)$ is an arc in $\GGG$. Figure~\ref{fig:Kblock} shows a clique-separator graph in which $(S_1,S_2)$ and $(S_6,S_7)$ are $K_1$-blocking arcs but $(S_2,S_3)$ and $(S_5,S_4)$ are not.  

\begin{figure}[ht]
\centering
\begin{tikzpicture}[scale=0.5]
 \tikzstyle{vertex}=[circle,draw,fill=black,thick,inner sep=1.75pt]
 \tikzstyle{cli}=[circle,,draw,fill=vert2!75,thick,inner sep=1.75pt]
 \tikzstyle{sep}=[circle,draw,fill=blue!75,thick,inner sep=1.75pt]
 \draw (0,0) node[cli, label=below :{$K_1$}]  (K1) {};
 \draw (2,0) node[sep, label=below :{$S_1$}]  (S1) {};
 \draw (0,2) node[cli, label=below :{$K_2$}]  (K2) {};
 \draw (2,2) node[sep, label=below left:{$S_2$}]  (S2) {};
 \draw (0,4) node[cli, label=above :{$K_3$}]  (K3) {};
 \draw (2,4) node[sep, label=above :{$S_3$}]  (S3) {};
 \draw (4,4) node[cli, label=above :{$K_4$}]  (K4) {};
 \draw (6,4) node[sep, label=above :{$S_4$}]  (S4) {};
 \draw (8,4) node[cli, label=above :{$K_5$}]  (K5) {};
 \draw (6,2) node[sep, label=below:{$S_5$}]  (S5) {};
 \draw (8,2) node[cli, label=below:{$K_6$}]  (K6) {};
 \draw (12,2)  node[sep, label=below:{$S_6$}]  (S6) {};
 \draw (14,2) node[cli, label=below:{$K_7$}]  (K7) {};
 \draw (12,4)  node[sep, label=above:{$S_7$}]  (S7) {};
 \draw (10,4) node[cli, label=above:{$K_8$}]  (K8) {};
 \draw (14,4) node[cli, label=above:{$K_9$}]  (K9) {};
 \draw (K1)--(S1) (K2)--(S2) (K3)--(S3)--(K4)--(S4)--(K5) (S5)--(K6)--(S6)--(K7) (K8)--(S7)--(K9) ;
 \draw[thick,-latex,orange] (S1)->(S2)  ;
 \draw[thick,-latex,orange] (S2)->(S3)  ;
 \draw[thick,-latex,orange] (S5)->(S4)  ;
 \draw[thick,-latex,orange] (S6)->(S7)  ;
\end{tikzpicture} 
  \caption{A clique-separator graph with exactly two $K_1$-blocking arcs.}\label{fig:Kblock}
\end{figure}
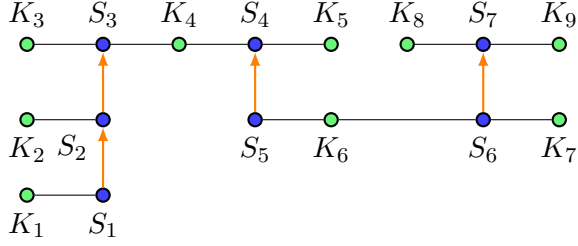

As shown in Theorem~\ref{thmCCCIDL}, the absence of $K$-blocking arcs is a sufficient condition for the correspondence between ideals of $P_K$ and $d_K$-rooted convex sets. 
Figure~\ref{fig:IdltoCCC} below gives a schematic view of the second part of the proof.

\begin{thm}\label{thmCCCIDL}
Let $G=(V,E)$ be a chordal graph with clique-separator graph $\GGG$, a maximal clique $K$ in $\KKK_{G}$ such that there is no $K$-blocking arc in $\GGG$  
and $P_K$ be the $K$-rooted poset of $G$. Then a subset $I$ of $V$ is an nonempty ideal of $P_K$ if and only if $I$ is a $d_K$-rooted convex set in $G$.
\end{thm}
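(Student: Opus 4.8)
The statement is an equivalence, and I would prove the two implications separately; only the ``ideal $\Rightarrow$ convex'' direction should require the absence of $K$-blocking arcs.

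First I would settle the implication ``$d_K$-rooted convex $\Rightarrow$ nonempty ideal'', which needs no hypothesis on $\GGG$. Let $I$ be a $d_K$-rooted convex set. It is nonempty since $d_K \in I$, so it remains to check that $I$ is downward closed in $P_K$. Take $v \in I$ and $u \leqslant_K v$: by definition of $\leqslant_K$ there is a chordless path from $v$ to $d_K$ containing $u$, and since both endpoints $v$ and $d_K$ lie in $I$ while $I$ is convex, every vertex of this path---in particular $u$---belongs to $I$. Hence $I \in \idl(P_K)$. I would also record here that, as $G$ is connected, each vertex $v$ has a chordless path to $d_K$, so $d_K \leqslant_K v$; thus $d_K$ is the least element of $P_K$, which is why every nonempty ideal contains $d_K$.

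For the converse, let $I$ be a nonempty ideal of $P_K$. By the previous remark $d_K \in I$, so $I$ is $d_K$-rooted, and it remains to prove convexity. I would fix $x,y \in I$ together with a chordless path $P = (x = v_0, \dots, v_\ell = y)$ and show that every internal vertex lies in $I$. Since $I$ is a downset, it suffices to prove the key claim: each internal vertex $p$ of a chordless path between $x$ and $y$ satisfies $p \leqslant_K x$ or $p \leqslant_K y$, i.e.\ $p$ lies on a chordless path from $x$ to $d_K$ or on one from $y$ to $d_K$; together with $x,y \in I$ this forces $p \in I$. To establish this claim I would argue by contradiction, assuming that some internal vertex $p$ lies on no chordless path from $x$ to $d_K$ and on no chordless path from $y$ to $d_K$. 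Using the two sub-paths of $P$ from $p$ to $x$ and from $p$ to $y$, each concatenated with a chordless path from $p$ down to $d_K$, the failure of $p \leqslant_K x$ and $p \leqslant_K y$ should translate, through the minimal-separator structure of the chordal graph, into a minimal vertex separator $S_1$ that detaches $p$ from $d_K$ while the descent from $x$ and $y$ to $d_K$ stays on the $d_K$-side. I would then exhibit a minimal vertex separator $S_2 \supset S_1$ met by $P$ so that $(S_1,S_2) \in \mathit{Ar}_{\GGG}$ and every vertex of $S_2 \setminus S_1$ is separated from $d_K$ by $S_1$; translating the components of $G - S_1$ to components of $\GGG$ via Theorem~\ref{ibarrath1p1} (and locating the component of $\GGG$ that contains $K$), this arc is $K$-blocking, contradicting the hypothesis and proving the claim.

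The two set-containments and the reduction to the key claim are routine; the whole difficulty is concentrated in producing the $K$-blocking arc from the failure of the key claim. The delicate points, which I expect to be the main obstacle, are (i)~selecting the nested pair $S_1 \subset S_2$ so that $S_1$ is a minimal $s_2 d_K$-separator for \emph{every} $s_2 \in S_2 \setminus S_1$, not merely for some, and (ii)~passing correctly between separators of $G$ and separator nodes of $\GGG$ through Theorem~\ref{ibarrath1p1} while keeping track of which side carries $d_K$; the underlying chordality manipulations of induced paths must be handled with care so that the extracted separators genuinely certify the blocking arc.
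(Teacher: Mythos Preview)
Your easy direction and the overall contrapositive strategy for the hard direction match the paper. The gap is exactly where you locate it: the construction of the $K$-blocking arc. The paper's construction, however, is not the one you sketch.

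The paper does \emph{not} isolate a single offending internal vertex $p$ and build separators from paths through $p$, and it does not use Theorem~\ref{ibarrath1p1}. Instead it keeps the whole ideal $I$ in play: assuming $I$ is not convex, take $x,y\in I$ and a chordless path $(x,f_1,\dots,f_t,y)$ with all $f_i\notin I$; after noting $x,y$ are $\leqslant_K$-incomparable, arrange $\{x,d_K\}\notin E$. The decisive move is to pick a minimal $xy$-separator $T$ \emph{contained in $N(x)$}, and set $S=T\cap I$. One then shows $S$ is an $xd_K$-separator (both $x$ and $y$ reach $d_K$ inside $I$, so if $S$ separated neither from $d_K$ one could walk $x\to d_K\to y$ inside $I$ avoiding $S$, hence avoiding $T$). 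Now choose a minimal $xd_K$-separator $S_1\subseteq S$ and a minimal vertex separator $S_2$ with $S_1\subset S_2\subseteq T$ and $(S_1,S_2)\in \mathit{Ar}_{\GGG}$; such a pair exists because $T$ is an $xy$-separator while $S$ is not.

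Your ``delicate point (i)'' then evaporates thanks to $T\subseteq N(x)$: every $t\in S_2\setminus S_1$ is adjacent to $x$, so a chordless $t$--$d_K$ path avoiding $S_1$ would yield an $x$--$d_K$ path avoiding $S_1$, contradicting the choice of $S_1$. Hence $S_1$ is a minimal $td_K$-separator for all such $t$, and $(S_1,S_2)$ is $K$-blocking straight from the definition; point (ii) and the detour through the clique-separator graph are unnecessary. The adjacency trick $T\subseteq N(x)$ is the idea your outline is missing; along the route you propose (starting from a single vertex $p$), establishing that $S_1$ separates \emph{every} element of $S_2\setminus S_1$ from $d_K$ is not clear.
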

\begin{proof}
First, let $C$ be a convex set containing $d_K$. For $c$ in $C\setminus \{d_K\}$, any vertex $u$ such that $u \leqslant_K c$  belongs to some chordless path. By convexity, we have $u\in C$ so $C$ is an ideal of $P_K$.

Now let $I$ be an ideal of $P_K$ and suppose, for contradiction, that $I$ is not convex. Then by definition, there must exist 
$x$ and $y$ in $I$ and a chordless path $(x,f_1,f_2,\ldots ,f_t,y)$ such that $f_1,f_2,\ldots ,f_t$ do not belong to $I$.
Note that $x$ and $y$ must be incomparable in $P_K$, as for otherwise $f_1$ or $f_t$ would be contained in $I$.
In particular, they are both different from $d_K$.
Moreover we cannot have both $\{x,d_K\}$ and $\{y,d_K\}$ as edges, since otherwise $\{x,y\}\subseteq K$. So without loss
of generality, we assume that $\{x,d_K\}\not\in E$.

Let $T$ be a minimal $xy$-separator included in the neighborhood $N(x)$ of $x$. Let $S=T\cap I$.
We claim that $S$ is either an $xd_K$-separator or a $yd_K$-separator. Suppose otherwise.
Then there must be two chordless paths of the form $(x,u_1,u_2,\ldots ,u_n,d_K)$ and $(y,v_1,v_2,\ldots ,v_{n'},d_K)$
contained in $I$ and avoiding $S$. By concatenating them, we obtain a path from $x$ to $y$ in $I$ avoiding
$S$, which contradicts the fact that $T$ was an $xy$-separator. This proves the claim.

In fact, $S$ is an $xd_K$-separator because otherwise, we can extract a chordless path from  $(y,f_t,\ldots ,f_1,x,\ldots,d_K)$ that avoids $S$ and contradicts the previous claim. 
Now consider a minimal $xd_K$-separator $S_1\subseteq S$ and a minimal vertex separator $S_2\subseteq T$ such that $a=(S_1,S_2)$ is an
arc of $\mathcal G$. We know such an arc exists because $T$ is an $xy$-separator while $S$ is not. 
We now show that $a$ is $K$-blocking, a contradiction. 

By definition, $a$ is $K$-blocking if and only if $S_1$ is a $td_K$-separator for any $t\in S_2\setminus S_1$. 
Suppose for contradiction that for some such $t$ there exists a chordless path from $t$ to $d_K$ avoiding $S_1$. 
We recall that $t\in T$ and $T\subseteq N(x)$, hence $\{x,t\}$ is in $E$.
But then there is a chordless path from $x$ to $d_K$ avoiding $S_1$, contradicting that $S_1$ is an $xd_K$-separator. 
Hence $a$ is indeed $K$-blocking.
\end{proof}
 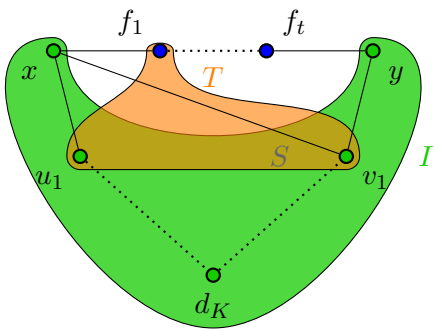
\begin{figure}[ht]
        \centering
 \begin{tikzpicture}[scale=0.7] 
   \begin{scope}[fill opacity=0.75]
    \filldraw[fill=mydgreen] ($(0,0.25)$)
    to[out=0,in=90] ($(0.25,0)$)
    to[out=-90,in=-90] ($(5.75,0)$)
    to[out=90,in=180] ($(6,0.25)$)
    to[out=0,in=0] ($(3,-5.25)$)
    to[out=180,in=180] ($(0,0.25)$);
    \end{scope}
      \begin{scope}[fill opacity=0.6]
    \filldraw[fill=orange] ($(0.25,-2)$)
    to[out=-90,in=180] ($(0.5,-2.25)$)
    to[out=-0,in=180] ($(5.5,-2.25)$)
    to[out=0,in=-90] ($(5.75,-2)$)
    to[out=90,in=-90] ($(2.25,0)$)
    to[out=90,in=90] ($(1.75,0)$)
    to[out=-90,in=90] ($(0.25,-2)$);
    \end{scope}
\draw (0,0) node[circle,draw,fill=mydgreen,thick,inner sep=1.75pt,label=below left :{$x$}] (x) {};
\draw (6,0) node[circle,draw,fill=mydgreen,thick,inner sep=1.75pt,label=below right :{$y$}] (y) {};
\draw (2,0) node[circle,draw,fill=blue,thick,inner sep=1.75pt,label=above left:{$f_1$}] (f1) {};
\draw (4,0) node[circle,draw,fill=blue,thick,inner sep=1.75pt,label=above right:{$f_t$}] (ft) {};
\draw (0.5,-2) node[circle,draw,fill=mydgreen,thick,inner sep=1.75pt,label=below left :{$u_1$}] (u1) {};
\draw (5.5,-2) node[circle,draw,fill=mydgreen,thick,inner sep=1.75pt,label=below right :{$v_1$}] (v1) {};
\draw (3,-4.25) node[circle,draw,fill=mydgreen,thick,inner sep=1.75pt,label=below :{$d_K$}] (dK) {};
\draw (u1)--(x)--(f1) (ft)--(y)--(v1)--(x) ;
 \draw[dotted, thick] (f1)--(ft);
 \draw[dotted, thick] (u1)--(dK)--(v1);
 \draw (3,-0.5) node[] {{\color{orange}  $T$}};
 \draw (7,-2) node[] {{\color{mydgreen}  $I$}};
 \draw (4.25,-2) node[] {{\color{black!60}  $S$}};
\end{tikzpicture}
        \caption{Illustration of the second part of proof for Theorem~\ref{thmCCCIDL}.}
        \label{fig:IdltoCCC}
\end{figure}

Hence whenever $G$ has no $K$-blocking arc, it is possible to compute a maximum-weight $d_K$-rooted convex set of $G$ in polynomial time by first computing the cover relation of the $K$-rooted poset, then using Picard's algorithm~\cite{Picard_1976}. Note that the relation $\leqslant_K$ can be computed in polynomial time as we show latter. There are some well-known examples of chordal graphs $G$  such that for every $K$ in $\KKK_G$, the clique-separator graph of $G$ has no $K$-blocking arc. For example, $k$-trees have no arc in their clique-separator graph (see Patil~\cite{Patil86} for details). We recall that a \emph{$k$-tree} is a graph formed by starting with a clique of size $k + 1$ and then repeatedly adding vertices with exactly $k$ neighbors inducing a clique. In the next section, we will see how to deal with the case where the clique-separator graph contains a $K$-blocking arc.

\section{A Polynomial-time Algorithm}
\label{sec:step2}

We now consider chordal graphs $G$ with one or more $K$-blocking arcs in their clique-separator graph, for some $K$ in $\KKK_G$. 
We describe an algorithm for finding a maximum-weight convex set rooted in $K$. 

For a chordal graph $G$ with clique-separator graph $\GGG$ we define the subgraph $G\ominus a$ for $a=(S_1,S_2)$ in $\mathit{Ar}_{\GGG}$ as the graph induced by the union of $S_1$ and the connected component of $G-S_1$ that intersects $S_2$. 
Figure~\ref{fig:Gominus} shows an example of the $\ominus$ operation. 
Note that $G \ominus a$  is also a chordal graph (as any induced subgraph of a chordal graph is also chordal).   

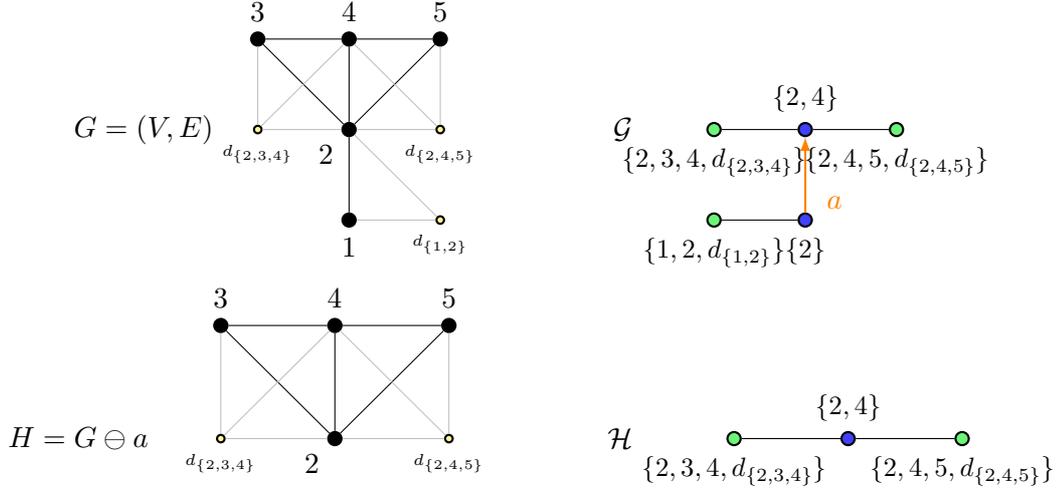
\begin{figure}[ht]
\centering
\begin{tikzpicture}[scale=0.6]
   \tikzstyle{vertex}=[circle,draw,fill=black,thick,inner sep=1.75pt]
   \tikzstyle{vertex2}=[circle,draw,fill=yellow!50,thick,inner sep=1pt]
     \tikzstyle{vertex3}=[circle,draw,fill=mydgreen!75,thick,inner sep=1.75pt]
 \draw (-4.5,2) node[] {$G=(V,E)$};
 \draw (0,0) node[vertex, label=below :{$1$}]  (r1) {};
 \draw (0,2) node[vertex, label=below left :{$2$}]  (a1) {};
 \draw (-2,4) node[vertex, label=above  :{$3$}]  (b1) {};
 \draw (0,4) node[vertex, label=above :{$4$}]  (c1) {};
 \draw (2,4) node[vertex, label=above :{$5$}]  (d1) {};
 \draw (-2,2) node[vertex2, label=below :{\tiny $d_{\{2,3,4\}}$}]  (dk1) {};
 \draw (2,2) node[vertex2, label=below :{{\tiny $d_{\{2,4,5\}}$}}]  (dk2) {};
 \draw (2,0) node[vertex2, label=below :{{\tiny $d_{\{1,2\}}$}}]  (dk3) {};
 \draw (r1)--(a1)--(b1)--(c1)--(a1)--(d1)--(c1);
 \draw[gray!50] (r1)--(dk3)--(a1);
 \draw[gray!50] (a1)--(dk2)--(c1) (dk2)--(d1);
 \draw[gray!50] (b1)--(dk1)--(a1) (dk1)--(c1);
  \tikzstyle{cli}=[circle,,draw,fill=vert2!75,thick,inner sep=1.75pt]
 \tikzstyle{sep}=[circle,draw,fill=blue!75,thick,inner sep=1.75pt]
 \draw (6,2) node (l2)  {$\mathcal{G}$};
 \draw (8,0) node[cli, label=below :{\small{$\{1,2,d_{\{1,2\}}\}$}}]  (K0) {};
 \draw (10,0) node[sep, label=below :{\small{$\{2\}$}}]  (S1) {}; 
 \draw (8,2) node[cli, label=below :{\small{$\{2,3,4,d_{\{2,3,4\}}\}$}}]  (K1) {};
 \draw (10,2) node[sep, label=above :{\small{$\{2,4\}$}}]  (S2) {};
 \draw (12,2) node[cli, label=below :{\small{$\{2,4,5,d_{\{2,4,5\}}\}$}}]  (K2) {};
 \draw (K0)--(S1) (K1)--(S2)--(K2);
 \draw[thick,-latex,orange] (S1)->(S2) node[midway,label=below right:{$a$}] {};
\end{tikzpicture} 
\begin{tikzpicture}[scale=0.75]
   \tikzstyle{vertex}=[circle,draw,fill=black,thick,inner sep=1.75pt]
   \tikzstyle{vertex2}=[circle,draw,fill=yellow!50,thick,inner sep=1pt]
     \tikzstyle{vertex3}=[circle,draw,fill=mydgreen!75,thick,inner sep=1.75pt]
 \draw (-4.5,2) node[] {$H=G\ominus a$};
 \draw (0,2) node[vertex, label=below left :{$2$}]  (a1) {};
 \draw (-2,4) node[vertex, label=above  :{$3$}]  (b1) {};
 \draw (0,4) node[vertex, label=above :{$4$}]  (c1) {};
 \draw (2,4) node[vertex, label=above :{$5$}]  (d1) {};
 \draw (-2,2) node[vertex2, label=below :{\tiny $d_{\{2,3,4\}}$}]  (dk1) {};
 \draw (2,2) node[vertex2, label=below :{{\tiny $d_{\{2,4,5\}}$}}]  (dk2) {};
 \draw (a1)--(b1)--(c1)--(a1)--(d1)--(c1);
 \draw[gray!50] (a1)--(dk2)--(c1) (dk2)--(d1);
 \draw[gray!50] (b1)--(dk1)--(a1) (dk1)--(c1);
  \tikzstyle{cli}=[circle,,draw,fill=vert2!75,thick,inner sep=1.75pt]
 \tikzstyle{sep}=[circle,draw,fill=blue!75,thick,inner sep=1.75pt]
 \draw (5,2) node (l2)  {$\mathcal{H}$};
 \draw (7,2) node[cli, label=below :{\small{$\{2,3,4,d_{\{2,3,4\}}\}$}}]  (K1) {};
 \draw (9,2) node[sep, label=above :{\small{$\{2,4\}$}}]  (S2) {};
 \draw (11,2) node[cli, label=below :{\small{$\{2,4,5,d_{\{2,4,5\}}\}$}}]  (K2) {};
 \draw   (K1)--(S2)--(K2);
\end{tikzpicture} 
  \caption{An example of the $\ominus$ operation.} \label{fig:Gominus}
\end{figure}

For a chordal graph $G=(V,E)$, a subset $R$ of $V$ and a weight function $w$, we denote by $\opt(G,R)$ a maximum-weight $R$-rooted convex set of $G$ with respect to $w$. If $R$ is a singleton $\{r\}$, we will write $\opt(G,r)$ instead of $\opt(G,\{r\})$. 
The algorithm proceeds in two main steps. In a first preprocessing phase, for each arc $a=(S_1,S_2)$, we compute $\opt(G\ominus a,S_1)$ that is, a maximum-weight convex set of $G\ominus a$ rooted in the vertex separator $S_1$. After this preprocessing phase we denote by $\lbl(a)$ the solution of this subproblem. An algorithm for this preprocessing phase is described in Section~\ref{sec:setp2PR}. 

\subsection{Computation phase}

In this second phase, we are going to use the labels of the arcs to compute a maximum-weight $d_K$-rooted convex.
The algorithm proceeds essentially by collapsing the vertices of the subgraph $(G\ominus a)-S_1$ into a single vertex $z_a$ for each arc $a=(S_1,S_2)$ that is $K$-blocking. The weight of $z_a$ is then set to $w(\lbl (a))-w(S_1)$, so that the weight of an optimal solution remains unchanged. 
This is detailed in Algorithm~\ref{Algo1}.

\begin{algorithm}[ht]
  \KwIn{a chordal graph $G$ and its clique-separator graph $\GGG$, a maximal clique $K$ of $G$, a weight function $w$, the function $\lbl$}
  \KwOut{a maximum-weight $K$-rooted convex set $C$}
 \While{$\exists \, a=(S_1,S_2) \in \mathit{Ar}_{\GGG}$ such that $a$ is  $K$-blocking}
 {
  Identify the vertices of $(G\ominus a) - S_1$ into a new vertex $z_a$\\ \label{ag1:id}
  $w(z_{a}) \gets w(\lbl(a)) - w(S_1)$\\ \label{ag1:w}
  Add a dummy vertex to the new maximal clique $\{z_{a}\}\cup S_1$\\ \label{ag1:dummy}
  Update $\GGG$\\ \label{ag1:upggg}
  }
  Use Picard's algorithm to compute a maximal weight $d_K$-rooted convex set $C$ of $G$\\
  Return $C$
 \caption{Finding a maximum $d_K$-rooted convex set in a chordal graph}
 \label{Algo1}
\end{algorithm}

Note that the number of $K$-blocking arcs decreases at each iteration of the loop. Indeed, at least the vertex separator $S_2$ disappears. 
One step of the algorithm is illustrated by Figure~\ref{fig:algo1}. 
Since the goal is to find a maximum-weight convex set in the graph, we need to remember that including the vertex $z_a$ in a solution 
for the collapsed instance amounts to choosing the set $\lbl(a)\setminus S_1$ in a solution of the original instance. 
 
 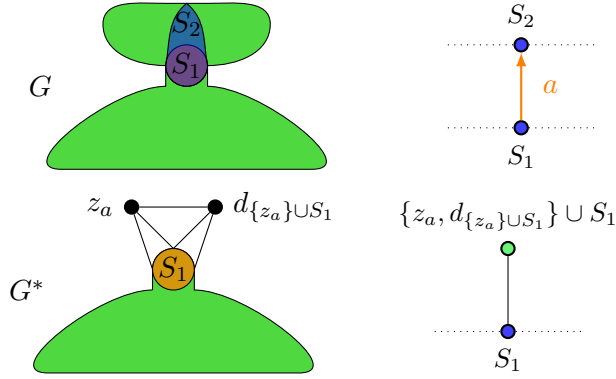
\begin{figure}[ht]
        \centering
  \begin{tikzpicture}[scale=0.55] 
 \tikzstyle{cli}=[circle,,draw,fill=vert2!75,thick,inner sep=1.75pt]
 \tikzstyle{sep}=[circle,draw,fill=blue!75,thick,inner sep=1.75pt]
   \node (a) at (0.5,0) {} ;
   \node (b) at (-2,-2) {} ;
   \node (c) at (4,-2) {} ;
   \node (d) at (1.5,0) {} ;
   \node (v1) at (-1,1.5) {} ;
   \node (v2) at (3,1.5) {} ;
   \node (p1) at (0.5,2) {} ;
   \node (p2) at (1.5,2) {} ;
   \node (pc) at (1,2) {} ;
   \node (c1) at (1,0) {} ;
   \node (c2) at (1.5,0.5) {} ;
   \node (c3) at (1,1) {} ;
   \node (c4) at (0.5,0.5) {} ;
   \begin{scope}[fill opacity=0.75]
    \filldraw[fill=mydgreen] (a.center)
    to[out=180,in=180] (b.center)
    to[out=0,in=180] (c.center)
    to[out=0,in=0] (d.center)
    to[out=90,in=-90] (c2.center)
    to[out=0,in=-90] (v2.center)
    to[out=90,in=0] (p2.center)
    to[out=180,in=0] (p1.center)
    to[out=180,in=90] (v1.center)
    to[out=-90,in=180] (c4.center)
    to[out=-90,in=90] (a.center);
    \end{scope}
      \begin{scope}[fill opacity=0.76]
    \filldraw[fill=orange] (c1.center)
    to[out=0,in=-90] (c2.center)
    to[out=90,in=0] (c3.center)
    to[out=180,in=90] (c4.center)
    to[out=-90,in=180] (c1.center);
    \end{scope}
         \begin{scope}[fill opacity=0.5]
    \filldraw[fill=blue] (c4.center)
    to[out=-90,in=180] (c1.center)
    to[out=0,in=-90] (c2.center)
    to[out=100,in=-40] (pc.center)
    to[out=220,in=80] (c4.center);
    \end{scope}
    \draw (1,0.5) node[] {{\color{black}  $S_1$}};
    \draw (1,1.5) node[] {{\color{black}  $S_2$}};
    \draw (-2.5,0) node[] {{\color{black}  $G$}};
    \node (g1) at (7,-1) {} ;
    \draw (9,-1) node[sep, label=below :{\small{$S_1$}}]  (S1) {}; 
    \node (g2) at (11,-1) {} ;
    \node (g3) at (7,1) {} ;
    \draw (9,1) node[sep, label=above :{\small{$S_2$}}]  (S2) {}; 
    \node (g4) at (11,1) {} ;
    \draw[dotted] (g1)--(S1)--(g2) (g3)--(S2)--(g4);
     \draw[thick,-latex,orange] (S1)->(S2)node[midway,label=right:{$a$}] {};
\end{tikzpicture}
\hspace{1cm}
  \begin{tikzpicture}[scale=0.55] 
 \tikzstyle{cli}=[circle,,draw,fill=vert2!75,thick,inner sep=1.75pt]
 \tikzstyle{sep}=[circle,draw,fill=blue!75,thick,inner sep=1.75pt]
  \tikzstyle{vertex}=[circle,draw,fill=black,thick,inner sep=1.75pt]
   \node (a) at (0.5,0) {} ;
   \node (b) at (-2,-2) {} ;
   \node (c) at (4,-2) {} ;
   \node (d) at (1.5,0) {} ;
   \node (v1) at (-1,1.5) {} ;
   \node (v2) at (3,1.5) {} ;
   \node (p1) at (0.5,2) {} ;
   \node (p2) at (1.5,2) {} ;
   \node (c1) at (1,0) {} ;
   \node (c2) at (1.5,0.5) {} ;
   \node (c3) at (1,1) {} ;
   \node (c4) at (0.5,0.5) {} ;
   \begin{scope}[fill opacity=0.75]
    \filldraw[fill=mydgreen] (a.center)
    to[out=180,in=180] (b.center)
    to[out=0,in=180] (c.center)
    to[out=0,in=0] (d.center)
    to[out=90,in=-90] (c2.center)
    to[out=90,in=0] (c3.center)
    to[out=180,in=90] (c4.center)
    to[out=-90,in=90] (a.center);
    \end{scope}
      \begin{scope}[fill opacity=0.76]
    \filldraw[fill=orange] (c1.center)
    to[out=0,in=-90] (c2.center)
    to[out=90,in=0] (c3.center)
    to[out=180,in=90] (c4.center)
    to[out=-90,in=180] (c1.center);
    \end{scope}
    \draw (1,0.5) node[] {{\color{black}  $S_1$}};
    \draw (-2.5,0) node[] {{\color{black}  $G^\ast$}};
    \node (g1) at (7,-1) {} ;
    \draw (9,-1) node[sep, label=below :{\small{$S_1$}}]  (S1) {}; 
    \node (g2) at (11,-1) {} ;
    \node (g3) at (7,1) {} ;
    \draw (9,1) node[cli, label=above :{\small{$\{z_{a},d_{\{z_a\}\cup S_1}\}\cup S_1$}}]  (S2) {}; 
    \node (g4) at (11,1) {} ;
    \draw[dotted] (g1)--(S1)--(g2) ;
    \draw  (S1)--(S2) {};
    \draw (2,2) node[vertex, label=right :{$d_{\{z_a\}\cup S_1}$}]  (1) {};
    \draw (0,2) node[vertex, label=left :{$z_{a}$}]  (2) {};
    \draw (1)--(2) (c3.center)--(1)--(c2.center) (c4.center)--(2)--(c3.center);
\end{tikzpicture}
        \caption{Illustration of the transformation in Algorithm~\ref{Algo1} and the implication for the clique-separator graph.}
        \label{fig:algo1}
\end{figure}

\begin{thm}\label{thm_agr}
Let $G=(V,E)$ be a chordal graph with a maximal clique $K$ and let $a=(S_1,S_2)$ be a $K$-blocking arc of $\mathit{Ar}_{\GGG}$. Let $G^\ast$ be the graph obtained from $G$ after applying Steps \ref{ag1:id}--\ref{ag1:dummy} of Algorithm~\ref{Algo1} on $a$. Then $w(\opt(G,d_K)) = w(\opt(G^\ast,d_K))$.
\end{thm}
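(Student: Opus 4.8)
The plan is to prove the two inequalities $w(\opt(G,d_K)) \le w(\opt(G^\ast,d_K))$ and $w(\opt(G^\ast,d_K)) \le w(\opt(G,d_K))$ separately, each time transforming an optimal convex set on one side into a feasible convex set of at least the same weight on the other. Throughout, write $D$ for the connected component of $G-S_1$ meeting $S_2$ (so that $G\ominus a = G[S_1\cup D]$ and $(G\ominus a)-S_1 = D$), and $W := V\setminus(S_1\cup D)$ for the union of the remaining components. Since $a$ is $K$-blocking, $S_1$ separates $S_2\setminus S_1\subseteq D$ from $d_K$, hence $d_K\in W$. Recall that in $G^\ast$ the vertex $z_a$ is adjacent to exactly $S_1$, that $\{z_a\}\cup S_1$ with its dummy forms a maximal clique (so $z_a$ and the dummy are simplicial), and that on $W\cup S_1$ the graphs $G$ and $G^\ast$ induce the same subgraph.

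The combinatorial engine is the following absorption claim: any convex set of $G$ containing $d_K$ and some vertex of $D$ must contain all of $S_1$, and likewise any convex set of $G^\ast$ containing $d_K$ and $z_a$ contains $S_1$. To prove it, fix $s\in S_1$; since $a$ is $K$-blocking, $S_1$ is a minimal $s_2d_K$-separator for $s_2\in S_2\setminus S_1\subseteq D$, so $s$ has a neighbour in $D$ and a neighbour in the component $W_0$ of $d_K$ in $G-S_1$. Concatenating an induced path in $G[D\cup\{s\}]$ from the vertex of $D$ to $s$ with an induced path in $G[W_0\cup\{s\}]$ from $s$ to $d_K$ produces a chordless path whose only vertex of $S_1$ is $s$; convexity forces $s$ into the set. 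In $G^\ast$ the vertex $z_a$, adjacent to all of $S_1$, plays the role of the vertex of $D$, and the argument is identical. I will also repeatedly invoke that a chordless path meets the clique $S_1$ in at most two vertices, which are then consecutive on the path; consequently a chordless path with both endpoints in $W\cup S_1$ (resp. in $S_1\cup D$) cannot visit $D$ (resp. $W$).

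For the first inequality, take an optimum $C$ with $d_K\in C$. If $C\cap D=\varnothing$, then $C\subseteq W\cup S_1$ is already convex in $G^\ast$, because no chordless path can pass through $z_a$ (its two path-neighbours would be non-consecutive vertices of the clique $S_1$) and $z_a\notin C$. Otherwise the absorption claim gives $S_1\subseteq C$, so $C\cap(S_1\cup D)$ is an $S_1$-rooted convex set of $G\ominus a$ and thus has weight at most $w(\lbl(a))$. Setting $C^\ast := (C\cap W)\cup S_1\cup\{z_a\}$, the weight assignment $w(z_a)=w(\lbl(a))-w(S_1)$ yields $w(C^\ast)=w(C\cap W)+w(\lbl(a))\ge w(C)$, and $C^\ast$ is convex in $G^\ast$ by the clique-crossing observation together with the simpliciality of $z_a$ and the dummy. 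For the reverse inequality, take an optimum $C^\ast$ with $d_K\in C^\ast$. If $z_a\notin C^\ast$, then $C := C^\ast\cap V$ works: the dummy has weight $0$ and convexity transfers, since any chordless path between its vertices stays in $W\cup S_1$, where $G$ and $G^\ast$ agree. If $z_a\in C^\ast$, absorption gives $S_1\subseteq C^\ast$, and I set $C := (C^\ast\cap W)\cup\lbl(a)$; the same weight identity gives $w(C)=w(C^\ast)$.

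The step I expect to require the most care is verifying that this glued set $C=(C^\ast\cap W)\cup\lbl(a)$ is convex in $G$, because $\lbl(a)$ is an \emph{arbitrary} optimal convex set of $G\ominus a$ pasted onto the near part along $S_1$. The key is to split any putative chordless path between two vertices of $C$ at its at most two (consecutive) vertices of the clique $S_1$: the portion on the $W$-side is a chordless path of $G^\ast$ between vertices of $C^\ast$, so its interior lies in $C^\ast\cap W\subseteq C$, while the portion on the $D$-side is a chordless path of $G\ominus a$ between vertices of $\lbl(a)$, so its interior lies in $\lbl(a)\subseteq C$; and since $S_1\subseteq\lbl(a)\subseteq C$, the splitting vertices themselves lie in $C$. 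Combining the two inequalities establishes the claimed equality $w(\opt(G,d_K)) = w(\opt(G^\ast,d_K))$.
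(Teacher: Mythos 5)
Your proof is correct and follows essentially the same route as the paper's: two inequalities, the same two set transformations (gluing $\lbl(a)$ onto the $W$-side along $S_1$, resp.\ replacing the $D$-side by $z_a$), with your ``absorption claim'' playing the role of the paper's Lemma~\ref{G1G2} and your clique-crossing/splitting argument playing the role of Lemma~\ref{C1cupC2}. You simply inline and elaborate these two auxiliary facts (and treat the dummy vertex and the $C\cap D=\varnothing$ cases more explicitly), which makes the write-up more self-contained but not a different proof.
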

 
Before proving Theorem~\ref{thm_agr}, we make two simple observations.

\begin{lem}\label{C1cupC2}
Let $G$ be a chordal graph,  $S$ be a minimal vertex separator of $G$ and let $V_1$ and $V_2$ be the vertex sets of two distinct components of $G - S$. If $C_1$ and $C_2$ are two $S$-rooted convex sets in the graphs induced by $V_1\cup S$ and $V_2\cup S$  respectively, then $C_1\cup C_2$ is a convex set of $G$.
\end{lem}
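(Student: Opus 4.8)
The plan is to verify the definition of convexity directly: take any two vertices $x,y \in C_1\cup C_2$ and any chordless path $P$ between them in $G$, and show that every vertex of $P$ lies in $C_1\cup C_2$. The whole argument rests on two facts about $S$. First, $S$ separates $V_1$ from $V_2$, so no edge of $G$ joins a vertex of $V_1$ to a vertex of $V_2$. Second, in a chordal graph every minimal vertex separator is a clique (as recalled in the preliminaries); since $P$ is chordless, i.e.\ induced, any subset of its vertices forming a clique of $G$ must form a clique of the path graph, hence $P$ meets $S$ in at most two vertices, and if in two then they are consecutive along $P$. I expect this last observation to be the structural engine of the proof.

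Having recorded that fact, I would split into cases according to the location of $x$ and $y$ among $V_1$, $S$, $V_2$, using throughout that $S\subseteq C_1\cap C_2$, that a vertex of $V_1$ lying in $C_1\cup C_2$ must lie in $C_1$, and symmetrically for $V_2$ and $C_2$. If both $x,y\in V_1\cup S$, I would argue that $P$ never enters $V_2$: otherwise a maximal subpath of $P$ contained in $V_2$ would be flanked by two vertices of $S$ which are distinct and non-consecutive on $P$, hence adjacent (clique) yet joined by a chord, a contradiction. Thus $P$ is a chordless path of the graph induced by $V_1\cup S$ with both endpoints in $C_1$, so convexity of $C_1$ gives $P\subseteq C_1\subseteq C_1\cup C_2$. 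The case $x,y\in V_2\cup S$ is symmetric via $C_2$.

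The remaining, and main, case is $x\in V_1$ and $y\in V_2$ (up to symmetry). Here $P$ must cross $S$, so $P\cap S$ consists of one or two consecutive vertices, which splits $P$ into a prefix and a suffix. Before the first separator vertex $s_1$ the path cannot already be in $V_2$ (reaching $V_2$ from $V_1$ requires passing through $S$), so the prefix from $x$ to $s_1$ lies in $V_1\cup S$; it is a chordless path of the graph induced by $V_1\cup S$ with endpoints $x\in C_1$ and $s_1\in S\subseteq C_1$, hence lies in $C_1$ by convexity. Symmetrically, the suffix from the last separator vertex $s_2$ to $y$ lies in $V_2\cup S$ with both endpoints in $C_2$, hence lies in $C_2$. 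Since any separator vertices occurring on $P$ belong to $S\subseteq C_1\cup C_2$, the whole path lies in $C_1\cup C_2$, which completes the case and the lemma.

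The main obstacle, and really the only delicate point, is ruling out that a chordless path weaves back and forth across $S$ — that is precisely where the clique property of $S$ combines with chordlessness to force $P\cap S$ to be a single consecutive block, after which the decomposition into a $C_1$-part and a $C_2$-part is immediate.
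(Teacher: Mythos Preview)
Your proof is correct. The structural observation you isolate---a chordless path meets the clique $S$ in at most two vertices, and those must be consecutive---is exactly what makes the decomposition into a $C_1$-segment and a $C_2$-segment work. One minor slip: in the case $x,y\in V_1\cup S$ you should rule out $P$ entering \emph{any} other component of $G-S$, not just $V_2$; your flanking-vertices-in-$S$ argument handles that verbatim, so this is only a wording issue.

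The paper takes a terser and differently organised route. It argues by contradiction: take a chordless path $(c,f_1,\dots,f_n,c')$ with $c,c'\in C_1\cup C_2$ and all $f_i\notin C_1\cup C_2$; since $S\subseteq C_1\cap C_2$, the $f_i$ avoid $S$ and hence lie in a single component of $G-S$, after which the convexity of $C_1$ or $C_2$ in the corresponding induced subgraph forces the $f_i$ back into $C_1\cup C_2$. Your direct case analysis is longer but makes every appeal to chordality explicit---in particular the clique property of $S$, which the paper's argument also needs (to dispose of the case where the $f_i$ land in a third component) but leaves implicit. The paper buys brevity; your version buys transparency.
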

\begin{proof}
By contradiction, suppose there are $c$ and $c'$ in $C_1\cup C_2$ and a chordless path $(c,f_1,\ldots,f_n,c')$ of $G$ with $f_1,\ldots,f_n$ outside of $C_1\cup C_2$. There must exist $i$ in $\{1,\ldots,n\}$ such that $f_i$ is in $S$ otherwise we have a contradiction with the convexity of $C_1$ or $C_2$. But then, $f_i \in C_1\cup C_2$ because $S\subseteq C_1\cap C_2$, and we have a contradiction.
\end{proof}

\begin{lem}\label{G1G2}
Let $G=(V,E)$ be a chordal graph with a maximal clique $K$ and let $a=(S_1,S_2)$ be a $K$-blocking arc of $\mathit{Ar}_{\GGG}$. Then, for a $d_K$-rooted convex set $C$ in $G$  that contains some vertex of $(G\ominus a)- S_1$, we have $S_1\subset C$.
\end{lem}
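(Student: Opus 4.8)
The plan is to prove that every vertex of $S_1$ lies on a chordless path between $d_K$ and the given vertex of the component, so that convexity of $C$ forces all of $S_1$ inside $C$. Write $q$ for the vertex of $(G\ominus a)-S_1$ that belongs to $C$, and recall that $d_K\in C$ as well.

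First I would set up the two relevant components. Put $A:=(G\ominus a)-S_1$, which by definition of $\ominus$ is the connected component of $G-S_1$ meeting $S_2$; choose a vertex $s_2\in S_2\setminus S_1$ lying in $A$ (it exists because $A$ meets $S_2$ and $A\cap S_1=\varnothing$). Since $a$ is $K$-blocking, $S_1$ is a \emph{minimal} $s_2d_K$-separator; in particular $s_2$ and $d_K$ lie in distinct components of $G-S_1$, and $d_K\notin S_1$ by the very definition of a separator. Let $B$ be the component of $G-S_1$ containing $d_K$, so $A\neq B$.

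The crucial structural input is that minimality makes $A$ and $B$ \emph{full} components: every $x\in S_1$ has a neighbor in $A$ and a neighbor in $B$. I would prove this as a short standalone claim: if some $x\in S_1$ had no neighbor in $A$, then $S_1\setminus\{x\}$ would still separate $s_2$ from $d_K$, since tracing any $s_2 d_K$-path that avoids $S_1\setminus\{x\}$ up to its first vertex of $S_1$ exhibits a neighbor of $x$ inside $A$ and hence a contradiction; this contradicts the minimality of $S_1$, and the argument for $B$ is symmetric.

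Now fix $x\in S_1$. Using fullness, $x$ has a neighbor in $B$ and a neighbor in $A$; since $B$ and $A$ are connected, I can extract (in the sense of the preliminaries) a chordless path $P_B$ from $d_K$ to $x$ inside $G[B\cup\{x\}]$ and a chordless path $P_A$ from $x$ to $q$ inside $G[A\cup\{x\}]$. Concatenating them at $x$ gives a walk $d_K,\ldots,x,\ldots,q$ in which $x$ occurs once, the vertices before $x$ lie in $B$, and those after $x$ lie in $A$. The one delicate step — and the only real obstacle — is checking that this concatenation is chordless. Chords within $P_B$ or within $P_A$ are excluded since each path is chordless, and $x$ meets no non-consecutive vertex of either for the same reason; the only remaining possibility is an edge joining an interior vertex of $P_B$ (in $B$) to an interior vertex of $P_A$ (in $A$), which cannot exist because $A$ and $B$ are distinct connected components of $G-S_1$. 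Hence the concatenation is a chordless $d_K q$-path through $x$, and convexity of $C$ (which contains $d_K$ and $q$) yields $x\in C$. Since $x\in S_1$ was arbitrary we get $S_1\subseteq C$, and since $d_K\in C\setminus S_1$ the inclusion is strict, i.e.\ $S_1\subset C$.
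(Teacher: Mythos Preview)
Your proof is correct and follows essentially the same approach as the paper: for each $x\in S_1$ you build a chordless $d_K\text{--}q$ path through $x$ by concatenating a chordless path in $G[B\cup\{x\}]$ with one in $G[A\cup\{x\}]$ and observing that no edge can cross between the two components. The paper's proof is the same idea argued by contradiction and stated more tersely; the only cosmetic difference is that on the $A$-side the paper reaches a neighbor of $s_1$ in $A$ via the clique $S_2\supset S_1$, whereas you use the full-component property from minimality of $S_1$, which is an equally valid (and arguably cleaner) justification.
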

\begin{proof}
By contradiction, let $s_1$ be in $S_1\setminus C$ and let $c$ be in $C\cap (G\ominus a - S_1)$. We know that $d_K$ and $c$ are not in the same connected component of $G - S_1$. Because $a$ is $K$-blocking, there is a chordless path $(d_K,v_1,\ldots,v_n,s_1,s_2)$ with $s_2\in S_2\setminus S_1$, $s_1\in S_1$ and $v_n\notin S_1$. There is also a path $(s_1,s'_2,\ldots,c)$ in $G\ominus a$ with $s'_2\in S_2\setminus S_1$ from which we can extract a chordless path that only intersects $S_1$ in $s_1$. So we build a path $(d_K,v_1,\ldots,v_n,s_1,\ldots,c)$ that can not have a chord, a contradiction.
\end{proof}

\begin{proof}[Proof of Theorem \ref{thm_agr}]
We decompose the equality into two inequalities. First, we show $w(\opt(G,d_K)) \geqslant w(\opt(G^\ast,d_K))$. More precisely, we show that for every $d_K$-rooted convex $C^\ast$ of $G^\ast$, we have a $d_K$-rooted convex set $C$ of  $G$ with $w(C) = w(C^\ast)$. If $z_{a}\notin C$ we take $C^\ast=C$ and we are done. Now, if $z_{a}\in C^\ast$, we define $C$ as the union of $C^\ast\setminus \{z_{a}\}$ (which is convex because $z_{a}$ is simplicial) with $\lbl(a)$. Obviously $w(C)=w(C^\ast)$, we only need to check that $C^\ast$ is convex. Because $a$ is $K$-blocking, $N(z_{a})\setminus \{d_{\{z_a\}\cup S_1}\}=S_1$ is a minimal $d_Kz_{a}$-separator, hence $C\setminus \{z_a\}$ must includes $S_1$. We also know that $S_1$ is contained in $\lbl(a)$. From Lemma~\ref{C1cupC2}, $C$ is convex in $G$, therefore $w(\opt(G,d_K)) \geqslant w(\opt(G^\ast,d_K))$.

We now show $w(\opt(G,d_K)) \leqslant w(\opt(G^\ast,d_K))$. More precisely, for every $d_K$-rooted convex $C$ of $G$, we have a $d_K$-rooted convex set $C^\ast$ of  $G^\ast$ with $w(C)\leqslant w(C^\ast)$. If $C$ does not intersect $(G\ominus a) - S_1$, we take $C=C^\ast$ and we are done. If $C$ does intersect $(G\ominus a) - S_1$,  we define $C^\ast$ as the union of ${z_{a}}$ with the vertices of $C$ that also are in $G^\ast$. We have $w(C)\leqslant w(C^\ast)$, otherwise we contradict the maximality of $\lbl(a)$. From Lemma~\ref{G1G2}, the vertices of $C$ that are also in $G^\ast$ form a convex set containing $S_1$. Since $\{z_{a}\}\cup S_1$ is a clique, hence is convex, Lemma~\ref{C1cupC2} implies that $C^\ast$ is also convex. 
So we have $w(\opt(G,d_K)) \leqslant w(\opt(G^\ast,d_K))$.
\end{proof}

\subsection{Preprocessing}
\label{sec:setp2PR}
We now describe the algorithm for computing the labels for the arcs in $\GGG$.  
This step is done only once and does not depend on the root of the convex set we are looking for. 
Recall that the label of an arc $a=(S_1,S_2)$ is the maximum-weight convex set of $G\ominus a$ rooted in $S_1$. 
Note that this algorithm uses Algorithm~\ref{Algo1} as a subroutine on smaller graphs.

The algorithm is composed of two main ingredients. First, we need to label the arcs in an order such that the computation only involves arcs that are already labeled. We prove that we can achieve this by following the order of inclusion of the graphs $G\ominus a$. Second, in order to compute the optimal convex set rooted in $S_1$, we need to check all possible roots $d_K$ such that $S_1$ is contained in $K$.
This is detailed in Algorithm~\ref{Algo2}.

\begin{algorithm}[ht]
  \KwIn{a chordal graph $G$ and its clique separator graph $\GGG$, a maximal clique $K$ of $G$, a weight function $w$}
  \KwOut{the $\lbl$ function}
 \While{$\exists$ an arc in $\mathit{Ar}_{\GGG}$ without label}
 {
  Select $a=(S_1,S_2)  \in \mathit{Ar}_{\GGG}$ without label such that every arc $a'$ with $G\ominus a' \subset G\ominus a$ is already labeled\\  \label{ag2:selec}
   $M\gets S_1$\\ \label{ag2:M1}
   Let $\KKK_a$ be the set of maximal cliques of $G$ that contain $S_1$ and are contained in $G\ominus a$\\ \label{ag2:Ka}
 \For{$K$ in $\KKK_a$}{
 Using Algorithm~\ref{Algo1}, compute a maximum-weight convex set $C^{*}$ of $G\ominus a$ rooted in $d_K$ and containing $S_1$\\  \label{ag2:al1}
 $M\gets \max_{w}\{M,C^{*}\}$\\ \label{ag2:max}
}
$\mathit{label}(a) \gets M$
}  
 \caption{Labeling the arcs in $\GGG$}
 \label{Algo2}
\end{algorithm}

In step~\ref{ag2:al1} of Algorithm~\ref{Algo2}, we can force $S_1$ to be in the solution $C^{*}$ by assigning sufficiently large weight to each vertex of $S_1$ before calling Algorithm~\ref{Algo1}. More precisely, we assign them the weight $\sum_{v\in V} |w(v)|$. By Lemma~\ref{dkcupC}, looking for all the $d_K$-rooted convex sets with $K$ in $\KKK_a$ ensures that we will find a maximum-weight $S_1$-rooted convex set of $G\ominus a$. 

Note that Algorithm~\ref{Algo2} labels the arcs in an order compatible with the partial order of inclusion of the graphs $G\ominus a$. 
The following lemma guarantees that the $K$-blocking arcs that will be processed by Algorithm~\ref{Algo1} are all already labeled.


\begin{lem}\label{TreatSort}
Let $G$ be a chordal graph with clique-separator graph $\GGG$ and let $a=(S_1,S_2)$ in $\mathit{Ar}_{\GGG}$ and $a'=(S_3,S_4)$ an arc of the clique-separator graph of $G\ominus a$. If $G\ominus a'\not\subset G\ominus a$, then $a'$ is not  $K$-blocking for any maximal clique $K$ in $\KKK_a$.
\end{lem}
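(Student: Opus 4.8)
The plan is to prove the contrapositive: assuming that $a'=(S_3,S_4)$ is $K$-blocking for some $K\in\KKK_a$, I will show that $G\ominus a'\subsetneq G\ominus a$, contradicting the hypothesis $G\ominus a'\not\subset G\ominus a$. Write $H=G\ominus a$, so that $V(H)=S_1\cup W$, where $W$ is the connected component of $G-S_1$ meeting $S_2$. Since $S_3$ and $S_4$ are minimal vertex separators of $H$, both are contained in $S_1\cup W$, and since $H$ is chordal they are cliques; as the arc $a'$ forces $S_3\subset S_4$, the set $S_4\setminus S_3$ is a nonempty clique, hence lies in a single connected component $W'$ of $G-S_3$, and $V(G\ominus a')=S_3\cup W'$. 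Throughout I will use three facts: that $d_K\in W$ (because $d_K\in K\subseteq S_1\cup W$ and $d_K$, being simplicial, cannot belong to the separator $S_1$); that every vertex of $S_1$ is adjacent to $d_K$ (because $S_1\cup\{d_K\}\subseteq K$ is a clique); and that, by the $K$-blocking hypothesis, $S_3$ is an $s_4d_K$-separator for every $s_4\in S_4\setminus S_3$.

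The heart of the argument is to show $V(G\ominus a')\subseteq V(H)=S_1\cup W$. The inclusion $S_3\subseteq S_1\cup W$ is immediate, so it remains to treat $W'$. Suppose toward a contradiction that $W'$ contains a vertex $z\notin S_1\cup W$, and fix any $s_4\in S_4\setminus S_3\subseteq W'$. Since $W'$ is connected there is a path from $s_4$ to $z$ inside $G-S_3$. Because $W$ is a connected component of $G-S_1$, any walk leaving $S_1\cup W$ must pass through $S_1$; hence either $s_4\in S_1$ already, or the path meets some $s_1\in S_1$. In either case this vertex lies in $S_1\setminus S_3$ (the path avoids $S_3$) and is adjacent to $d_K$, so concatenating the initial segment of the path with the edge to $d_K$ yields a walk from $s_4$ to $d_K$ in $G-S_3$. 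This contradicts the fact that $S_3$ separates $s_4$ from $d_K$. Therefore $W'\subseteq S_1\cup W$, and $V(G\ominus a')\subseteq V(H)$.

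To conclude, I observe that the containment is strict: the $K$-blocking hypothesis places $s_4$ and $d_K$ in distinct components of $G-S_3$, so $d_K\notin W'$, and since $d_K$ is simplicial $d_K\notin S_3$; hence $d_K\in V(H)\setminus V(G\ominus a')$. Combined with the previous paragraph, this gives $G\ominus a'\subsetneq G\ominus a$, the desired contradiction. I expect the main obstacle to be the containment step of the middle paragraph, and in particular the careful verification that a path escaping $S_1\cup W$ must cross $S_1$ at a vertex lying outside $S_3$; everything then hinges on exploiting the adjacency between $S_1$ and the root $d_K$ to violate the separation guaranteed by the $K$-blocking property.
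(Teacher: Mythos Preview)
Your proof is correct and follows essentially the same route as the paper's, merely recast as the contrapositive: both arguments hinge on the observation that any path in $G-S_3$ from $s_4$ to a vertex outside $G\ominus a$ must cross $S_1$, and that every vertex of $S_1$ is adjacent to $d_K$ (since $S_1\subseteq K$), yielding an $s_4$--$d_K$ path that avoids $S_3$ and so refutes the $K$-blocking property. Your version is somewhat more explicit about auxiliary facts (e.g.\ that $d_K$, being simplicial, lies in no minimal vertex separator), but the substance is the same.
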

\begin{proof}
Suppose that $G\ominus a'\not\subset G\ominus a$. We show that $S_3$ is not a minimal $s_4d_K$-vertex separator for $s_4$ in $S_4\setminus S_3$. If $G\ominus a' = G\ominus a$, then $d_K$ is connected to $s_4$ in $(G\ominus a') - S_3$ and we have the result.  

If $G\ominus a' \neq G\ominus a$,  there is $v$ in $G\ominus a'$ such that $v$ is not in $G\ominus a$. So there must exist a chordless path $p$ from $s_4$ to $v$ that avoids $S_3$. But, because $S_4$ is in  $G\ominus a$ and $v$ is not, the path $p$ must contains a vertex $s_1\in S_1$.  Now, because $\{s_1,d_K\}$ is an edge in $G$, we can deduce the existence of a path from $s_4$ to $d_K$ that avoids $S_3$.
\end{proof} 
A complete execution of the algorithm on an example is given in Appendix~\ref{app:example}.
%

\subsection{Time complexity}

From Algorithms~\ref{Algo1} and~\ref{Algo2}, it seems straightforward that the time complexity needed to solve the maximum-weight convex set problem on a chordal graph $G=(V,E)$ is bounded by a polynomial in $|V|$ and $|E|$. More precisely, we see that the complexity of the algorithm used to solve Problem~\ref{MajorPb} on $G$ will be bounded by that of the preprocessing step. Indeed, the preprocessing step involves $|V||E|$ calls to Picard's algorithm. We recall that the time complexity of Picard's algorithm in our case is $O(|V||E|\log(\frac{|V|^{2}}{|E|}))$.
Hence the overall running time of our algorithm is $O(|V|^{2}|E|^{2}\log(\frac{|V|^{2}}{|E|}))$. If we denote by $n$ the number of vertices of the input graph, then this running time is $O(n^6\log n)$.

To prove that the maximum-weight convex set problem on a chordal graph can be solved in this running time we need to show that all the information we need in Algorithms~\ref{Algo1} and~\ref{Algo2} can be computed in a time bounded asymptotically by the time of the preprocessing step. More precisely, given the chordal graph $G$, we can compute the following information in $O(|V|^{2}|E|^{2}\log(\frac{|V|^{2}}{|E|}))$ time: 
the clique-separator graph $\GGG$ of $G$, 
the vertices in $G\ominus a$ for each $a$ in $\mathit{Ar}_{\GGG}$,
the cliques in $\KKK_G$ for which $a$ is $K$-blocking for each $a$  in $\mathit{Ar}_{\GGG}$,
the clique in $\KKK_a$ for each $a$  in $\mathit{Ar}_{\GGG}$, 
the matrix of the relations $\leqslant_K$ for each $K$ in $\KKK_G$ and
a total order on the arcs $\GGG$ such that $G\ominus a \subseteq G\ominus a'$ implies $a<a'$ for $a$, $a'$ in $\mathit{Ar}_{\GGG}$.

The detailed proofs are given in Appendix~\ref{app:cmplx}.
This concludes the proof of Theorem~\ref{thm:cmplxTOT}. 

\newpage

\bibliographystyle{plainurl}
\bibliography{bibGLOBAL.bib}

\begin{thebibliography}{10}

\bibitem{Wolsey95}
El~Houssaine Aghezzaf, Thomas~L. Magnanti, and Laurence~A. Wolsey.
\newblock Optimizing constrained subtrees of trees.
\newblock {\em Math. Programming}, 71(2, Ser. A):113--126, 1995.
\newblock URL: \url{https://doi.org/10.1007/BF01585993}, \href
  {http://dx.doi.org/10.1007/BF01585993} {\path{doi:10.1007/BF01585993}}.

\bibitem{Algaba_all_04}
Encarnación~Algaba Algaba, Jesús~Mario Bilbao, René van~den Brink, and
  Andrés Jiménez-Losada.
\newblock Cooperative games on antimatroids.
\newblock {\em Discrete Mathematics}, 282(1-3):1--15, 2004.
\newblock URL:
  \url{http://dblp.uni-trier.de/db/journals/dm/dm282.html#DuranBBJ04}.

\bibitem{Blair93}
Jean R.~S. Blair and Barry Peyton.
\newblock An introduction to chordal graphs and clique trees.
\newblock In {\em Graph theory and sparse matrix computation}, volume~56 of
  {\em IMA Vol. Math. Appl.}, pages 1--29. Springer, New York, 1993.
\newblock URL: \url{http://dx.doi.org/10.1007/978-1-4613-8369-7_1}, \href
  {http://dx.doi.org/10.1007/978-1-4613-8369-7_1}
  {\path{doi:10.1007/978-1-4613-8369-7_1}}.

\bibitem{Boyd_Faigle_90}
E.~Andrew Boyd and Ulrich Faigle.
\newblock An algorithmic characterization of antimatroids.
\newblock {\em Discrete Appl. Math.}, 28(3):197--205, 1990.
\newblock URL: \url{http://dx.doi.org/10.1016/0166-218X(90)90002-T}, \href
  {http://dx.doi.org/10.1016/0166-218X(90)90002-T}
  {\path{doi:10.1016/0166-218X(90)90002-T}}.

\bibitem{merckx17}
Jean Cardinal, Jean-Paul Doignon, and Keno Merckx.
\newblock {On the shelling antimatroids of split graphs}.
\newblock {\em {Discrete Mathematics \& Theoretical Computer Science}}, {Vol 19
  no. 1}, March 2017.
\newblock URL: \url{http://dmtcs.episciences.org/3201}.

\bibitem{Diestel10}
Reinhard Diestel.
\newblock {\em Graph theory}, volume 173 of {\em Graduate Texts in
  Mathematics}.
\newblock Springer, Heidelberg, fourth edition, 2010.
\newblock URL: \url{http://dx.doi.org/10.1007/978-3-642-14279-6}, \href
  {http://dx.doi.org/10.1007/978-3-642-14279-6}
  {\path{doi:10.1007/978-3-642-14279-6}}.

\bibitem{Dilworth_1940}
Robert~P. Dilworth.
\newblock Lattices with unique irreducible decompositions.
\newblock {\em Ann. of Math. (2)}, 41:771--777, 1940.

\bibitem{Edelman85}
Paul~H. Edelman and Robert~E. Jamison.
\newblock The theory of convex geometries.
\newblock {\em Geom. Dedicata}, 19(3):247--270, 1985.
\newblock URL: \url{http://dx.doi.org/10.1007/BF00149365}, \href
  {http://dx.doi.org/10.1007/BF00149365} {\path{doi:10.1007/BF00149365}}.

\bibitem{Eppstein_07}
David Eppstein.
\newblock Pruning antimatroids is hard, 2007.
\newblock URL: \url{http://11011110.livejournal.com/91976.html}.

\bibitem{Eppstein18}
David Eppstein.
\newblock The parametric closure problem.
\newblock {\em {ACM} Trans. Algorithms}, 14(1):2:1--2:22, 2018.
\newblock URL: \url{http://doi.acm.org/10.1145/3147212}, \href
  {http://dx.doi.org/10.1145/3147212} {\path{doi:10.1145/3147212}}.

\bibitem{Eppstein92}
David Eppstein, Mark Overmars, G\"unter Rote, and Gerhard Woeginger.
\newblock Finding minimum area {$k$}-gons.
\newblock {\em Discrete Comput. Geom.}, 7(1):45--58, 1992.
\newblock URL: \url{https://doi.org/10.1007/BF02187823}, \href
  {http://dx.doi.org/10.1007/BF02187823} {\path{doi:10.1007/BF02187823}}.

\bibitem{Falmagne_Doignon_LS}
Jean-Claude. Falmagne and Jean-Paul Doignon.
\newblock {\em Learning Spaces}.
\newblock Springer-Verlag, Berlin, 2011.

\bibitem{Farber_86}
Martin Farber and Robert~E. Jamison.
\newblock Convexity in graphs and hypergraphs.
\newblock {\em SIAM J. Algebraic Discrete Methods}, 7(3):433--444, 1986.
\newblock URL: \url{http://dx.doi.org/10.1137/0607049}, \href
  {http://dx.doi.org/10.1137/0607049} {\path{doi:10.1137/0607049}}.

\bibitem{fulkerson1965}
Delbert~R. Fulkerson and Oliver~A. Gross.
\newblock Incidence matrices and interval graphs.
\newblock {\em Pacific J. Math.}, 15(3):835--855, 1965.
\newblock URL: \url{http://projecteuclid.org/euclid.pjm/1102995572}.

\bibitem{Goldberg1988}
Andrew~V. Goldberg and Robert~E. Tarjan.
\newblock A new approach to the maximum-flow problem.
\newblock {\em J. Assoc. Comput. Mach.}, 35(4):921--940, 1988.
\newblock URL: \url{http://dx.doi.org/10.1145/48014.61051}, \href
  {http://dx.doi.org/10.1145/48014.61051} {\path{doi:10.1145/48014.61051}}.

\bibitem{Golumbic2004}
Martin~Charles Golumbic.
\newblock {\em Algorithmic graph theory and perfect graphs}, volume~57 of {\em
  Annals of Discrete Mathematics}.
\newblock Elsevier Science B.V., Amsterdam, second edition, 2004.
\newblock With a foreword by Claude Berge.

\bibitem{Groflin84}
Heinz Gr\"oflin.
\newblock Path-closed sets.
\newblock {\em Combinatorica}, 4(4):281--290, 1984.
\newblock URL: \url{https://doi.org/10.1007/BF02579138}, \href
  {http://dx.doi.org/10.1007/BF02579138} {\path{doi:10.1007/BF02579138}}.

\bibitem{Ho89}
Chin~Wen Ho and Richard C.~T. Lee.
\newblock Counting clique trees and computing perfect elimination schemes in
  parallel.
\newblock {\em Inform. Process. Lett.}, 31(2):61--68, 1989.
\newblock URL: \url{http://dx.doi.org/10.1016/0020-0190(89)90070-7}, \href
  {http://dx.doi.org/10.1016/0020-0190(89)90070-7}
  {\path{doi:10.1016/0020-0190(89)90070-7}}.

\bibitem{Hopcroft73}
John Hopcroft and Robert Tarjan.
\newblock Algorithm 447: Efficient algorithms for graph manipulation.
\newblock {\em Commun. ACM}, 16(6):372--378, June 1973.
\newblock URL: \url{http://doi.acm.org/10.1145/362248.362272}, \href
  {http://dx.doi.org/10.1145/362248.362272} {\path{doi:10.1145/362248.362272}}.

\bibitem{Ibarra09}
Louis Ibarra.
\newblock The clique-separator graph for chordal graphs.
\newblock {\em Discrete Appl. Math.}, 157(8):1737--1749, 2009.
\newblock URL: \url{http://dx.doi.org/10.1016/j.dam.2009.02.006}, \href
  {http://dx.doi.org/10.1016/j.dam.2009.02.006}
  {\path{doi:10.1016/j.dam.2009.02.006}}.

\bibitem{Jamison81}
Robert~E. Jamison-Waldner.
\newblock Partition numbers for trees and ordered sets.
\newblock {\em Pacific J. Math.}, 96(1):115--140, 1981.
\newblock URL: \url{http://projecteuclid.org/euclid.pjm/1102734951}.

\bibitem{Jamison82}
Robert~E. Jamison-Waldner.
\newblock A perspective on abstract convexity: classifying alignments by
  varieties.
\newblock In {\em Convexity and related combinatorial geometry ({N}orman,
  {O}kla., 1980)}, volume~76 of {\em Lecture Notes in Pure and Appl. Math.},
  pages 113--150. Dekker, New York, 1982.

\bibitem{Douglas12}
Douglas~M. King, Sheldon~H. Jacobson, Edward~C. Sewell, and Wendy K.~Tam Cho.
\newblock Geo-graphs: An efficient model for enforcing contiguity and hole
  constraints in planar graph partitioning.
\newblock {\em Operations Research}, 60(5):1213--1228, 2012.
\newblock URL: \url{https://doi.org/10.1287/opre.1120.1083}, \href
  {http://dx.doi.org/10.1287/opre.1120.1083}
  {\path{doi:10.1287/opre.1120.1083}}.

\bibitem{Korte_Lovasz_1989_bis}
Bernhard Korte and L{\'a}szl{\'o} Lov{\'a}sz.
\newblock Polyhedral results for antimatroids.
\newblock {\em Annals New York Academy of Sciences}, 555:283--295, 1989.

\bibitem{Korte_Lovasz_Schrader_1991}
Bernhard Korte, L{\'a}szl{\'o} Lov{\'a}sz, and Rainer Schrader.
\newblock {\em Greedoids}, volume~4 of {\em Algorithms and Combinatorics}.
\newblock Springer-Verlag, Berlin, 1991.
\newblock URL: \url{http://dx.doi.org/10.1007/978-3-642-58191-5}, \href
  {http://dx.doi.org/10.1007/978-3-642-58191-5}
  {\path{doi:10.1007/978-3-642-58191-5}}.

\bibitem{Koshevoy_1999}
Gleb~A. Koshevoy.
\newblock Choice functions and abstract convex geometries.
\newblock {\em Math. Social Sci.}, 38(1):35--44, 1999.
\newblock URL: \url{http://dx.doi.org/10.1016/S0165-4896(98)00044-4}, \href
  {http://dx.doi.org/10.1016/S0165-4896(98)00044-4}
  {\path{doi:10.1016/S0165-4896(98)00044-4}}.

\bibitem{mckee99}
Terry~A. McKee and F.~R. McMorris.
\newblock {\em Topics in intersection graph theory}.
\newblock SIAM Monographs on Discrete Mathematics and Applications. Society for
  Industrial and Applied Mathematics (SIAM), Philadelphia, PA, 1999.
\newblock URL: \url{https://doi.org/10.1137/1.9780898719802}.

\bibitem{Orlin87}
D.~Orlin.
\newblock Optimal weapons allocation against layered defenses.
\newblock {\em Naval Research Logistics (NRL)}, 34(5):605--617, 1987.
\newblock URL:
  \url{https://onlinelibrary.wiley.com/doi/abs/10.1002/1520-6750\%28198710\%2934\%3A5\%3C605\%3A\%3AAID-NAV3220340502\%3E3.0.CO\%3B2-L},
  \href
  {http://dx.doi.org/10.1002/1520-6750(198710)34:5<605::AID-NAV3220340502>3.0.CO;2-L}
  {\path{doi:10.1002/1520-6750(198710)34:5<605::AID-NAV3220340502>3.0.CO;2-L}}.

\bibitem{Oxley_2006}
James Oxley.
\newblock {\em Matroid theory}, volume~21 of {\em Oxford Graduate Texts in
  Mathematics}.
\newblock Oxford University Press, Oxford, second edition, 2011.
\newblock URL:
  \url{http://dx.doi.org/10.1093/acprof:oso/9780198566946.001.0001}, \href
  {http://dx.doi.org/10.1093/acprof:oso/9780198566946.001.0001}
  {\path{doi:10.1093/acprof:oso/9780198566946.001.0001}}.

\bibitem{Parkinson2012}
Anita~Frances Parkinson.
\newblock {\em Essays on sequence optimization in block cave mining and
  inventory policies with two delivery sizes}.
\newblock PhD thesis, University of British Columbia, 2012.
\newblock URL:
  \url{https://open.library.ubc.ca/cIRcle/collections/24/items/1.0073023},
  \href {http://dx.doi.org/http://dx.doi.org/10.14288/1.0073023}
  {\path{doi:http://dx.doi.org/10.14288/1.0073023}}.

\bibitem{Patil86}
H.~P. Patil.
\newblock On the structure of {$k$}-trees.
\newblock {\em J. Combin. Inform. System Sci.}, 11(2-4):57--64, 1986.

\bibitem{Picard_1976}
Jean-Claude Picard.
\newblock Maximal closure of a graph and applications to combinatorial
  problems.
\newblock {\em Management Sci.}, 22(11):1268--1272, 1975/76.

\bibitem{ryhs70}
J.~M.~W. Rhys.
\newblock A selection problem of shared fixed costs and network flows.
\newblock {\em Management Science}, 17(3):200--207, 1970.
\newblock URL:
  \url{https://EconPapers.repec.org/RePEc:inm:ormnsc:v:17:y:1970:i:3:p:200-207}.

\bibitem{sidney75}
Jeffrey~B. Sidney.
\newblock Decomposition algorithms for single-machine sequencing with
  precedence relations and deferral costs.
\newblock {\em Operations Research}, 23(2):283--298, 1975.
\newblock URL: \url{http://www.jstor.org/stable/169530}.

\bibitem{Williams2003}
Justin~C Williams.
\newblock Convex land acquisition with zero-one programming.
\newblock {\em Environment and Planning B: Planning and Design},
  30(2):255--270, 2003.
\newblock URL: \url{https://doi.org/10.1068/b12925}, \href
  {http://arxiv.org/abs/https://doi.org/10.1068/b12925}
  {\path{arXiv:https://doi.org/10.1068/b12925}}, \href
  {http://dx.doi.org/10.1068/b12925} {\path{doi:10.1068/b12925}}.

\end{thebibliography}

\appendix

\section{Proof of Theorem~\ref{OrderOK}}
\label{app:proofORD}

Theorem~\ref{OrderOK} directly follows from the two lemmas below, respectively stating that the relation is antisymmetric and transitive, and whose proofs are illustrated in Figure~\ref{fig:antitran}. The reflexivity of the relation is obvious. 

\begin{lem}[Antisymmetry]
\label{AntisymOK}
For $G=(V,E)$ a chordal graph and $K$ a maximal clique of $G$, the relation $\leqslant_K$ is antisymmetric.
\end{lem}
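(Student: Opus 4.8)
The plan is to assume, for contradiction, that $u \leqslant_K v$ and $v \leqslant_K u$ with $u \neq v$, and to manufacture a chordless cycle of length at least four, contradicting the chordality of $G$. First I would dispose of the degenerate endpoint cases: if $u = d_K$ (or $v = d_K$), then one of the two relations forces $u = v$, since the only chordless path from $d_K$ to $d_K$ is the single vertex $d_K$. So we may assume $u, v \neq d_K$. Unfolding the definition, $u \leqslant_K v$ yields a chordless path $P = (v,\ldots,u,\ldots,d_K)$ whose suffix $A = (u,\ldots,d_K)$ is a chordless $u$--$d_K$ path avoiding $v$ and whose prefix is a chordless $u$--$v$ path avoiding $d_K$; symmetrically $v \leqslant_K u$ yields a chordless path $Q = (u,\ldots,v,\ldots,d_K)$ whose suffix $B = (v,\ldots,d_K)$ is a chordless $v$--$d_K$ path avoiding $u$.

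Next I would splice $A$ and $B$ together. Walking along $A$ from $u$, let $w$ be the first vertex that also lies on $B$; it exists because $d_K$ belongs to both, and $w \notin \{u,v\}$ since $u \notin B$ and $v \notin A$. Writing $\alpha$ for the part of $A$ from $u$ to $w$ and $\beta$ for the part of $B$ from $v$ to $w$, the choice of $w$ makes $\alpha$ and $\beta$ meet exactly at $w$. Together with a chordless $u$--$v$ path $\gamma$ avoiding $d_K$ (the prefix of $P$), the three paths $\alpha, \beta, \gamma$ form a closed walk through the three \emph{distinct} vertices $u,v,w$.

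To turn this closed walk into a contradiction I would fix a counterexample minimizing $|P| + |Q|$. Minimality is what lets me argue the resulting cycle on $V(\alpha) \cup V(\beta) \cup V(\gamma)$ is \emph{induced}: any chord would allow me to shortcut one of the witnessing paths and produce a strictly smaller counterexample. It then only remains to rule out that the cycle is a triangle, and this is the one genuinely clean step: a triangle would require $u \sim w$ with $\alpha = (u,w)$ and $v \sim w$, forcing $w$ onto $P$ immediately after $u$, so that the edge $\{v,w\}$ becomes a chord of the chordless path $P$ (the vertices $v$ and $w$ are non-consecutive on $P$ because $u$ lies strictly between them) --- impossible. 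Hence the cycle has length at least four and is chordless, contradicting chordality.

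The step I expect to be the main obstacle is exactly this chord-elimination inside the minimality argument: controlling how the vertices and edges of the two \emph{independently chosen} witnesses $P$ and $Q$ interact when their fragments are reassembled into one cycle. The endpoint cases and the triangle exclusion are short; the delicate part is guaranteeing that the cycle built from pieces of $P$ and $Q$ is genuinely induced, which is what forces the minimal-counterexample setup (equivalently, a careful extraction of shortest paths within the subgraph induced by $V(A) \cup V(B) \cup V(\gamma)$). This is presumably the configuration depicted in Figure~\ref{fig:antitran}.
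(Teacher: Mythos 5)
Your overall strategy --- contradicting chordality by exhibiting a chordless cycle on at least four vertices --- is the same as the paper's, but your construction of that cycle is different, and the difference is exactly where a genuine gap sits. You assemble a closed walk from three fragments $\alpha\subseteq P$, $\beta\subseteq Q$, $\gamma\subseteq P$ glued at $u,v,w$, and you defer the claim that this yields an \emph{induced} cycle to a minimal-counterexample argument. That deferral does not close. First, the walk need not even be a cycle: you control $V(\alpha)\cap V(\beta)=\{w\}$ and $V(\alpha)\cap V(\gamma)=\{u\}$, but $\beta$ and $\gamma$ come from the two independently chosen witnesses $Q$ and $P$ and may share internal vertices. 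Second, and more seriously, a chord of the assembled cycle typically joins an internal vertex $x$ of $\gamma$ (or of $\alpha$), which lies on $P$, to an internal vertex $y$ of $\beta$, which lies on $Q$; such an edge is a chord of neither $P$ nor $Q$, so it shortens neither witnessing path, and rerouting (say, replacing a piece of $P$ by $x$, $y$ and a piece of $\beta$) produces a $v$--$d_K$ path that no longer passes through $u$ and hence no longer witnesses $u\leqslant_K v$. So ``any chord would allow me to shortcut one of the witnessing paths'' is unsubstantiated --- you flag this yourself as the main obstacle, but you do not resolve it. The triangle exclusion likewise only treats a triangle on exactly $\{u,v,w\}$, which need not be what the minimization produces.

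The paper's proof avoids the whole difficulty by connecting a different pair of endpoints. Writing the two witnesses as $(u_1,\ldots,u_j,\ldots,u_n)$ and $(v_1,\ldots,v_l,\ldots,v_m)$ with $u_j=v_1=b$ (your $u$ or $v$, whichever sits in the middle of the other's witness), it forms the single walk $(u_{j+1},\ldots,u_n,v_{m-1},\ldots,v_l,u_2,\ldots,u_{j-1})$ from $u_{j+1}$ to $u_{j-1}$ avoiding $b$. The key point is that $b$ lies on \emph{both} chordless witnessing paths, so its only neighbors among all vertices of this walk are $u_{j-1}$ and $u_{j+1}$; consequently \emph{any} chordless path extracted from the walk closes up through $b$ into a chordless cycle, and that cycle has at least four vertices because $\{u_{j-1},u_{j+1}\}\notin E$. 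In other words, the paper uses $b$ as an apex with a controlled neighborhood, which makes the extraction step robust with no case analysis; your three-path configuration has no such apex, and that is precisely why the induced-cycle step resists. To keep your setup you would need an explicit analysis of the cross-edges between $V(P)$ and $V(Q)$, at which point you essentially rediscover the paper's walk.
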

\begin{proof}
For $a$ and $b$ in $V$, we show that we cannot have $a\leqslant_K b$, $b\leqslant_K a$ and $a\neq b$. Suppose $a\leqslant_K b$ and $b\leqslant_K a$, so there are two chordless paths $(u_1,\ldots, u_j, \ldots, u_n)$ and  $(v_1,\ldots, v_l, \ldots, v_m)$ with $u_1=v_l=a$, $u_j=v_1=b$ and $u_n=v_m=d_K$.  If we take the path
\[(u_{j+1},\ldots, u_n,v_{m-1},\ldots, v_l,u_2,\ldots,u_{j-1}),\]
we can extract a chordless path $p$ with starting vertex $u_{j+1}$ and ending vertex $u_{j-1}$. The path $p$ has at least three vertices because $\{u_{j+1},u_{j-1}\}$ is not in $E$. The vertex $b$ is not in $p$ because $a\neq b$, so we can add it to $p$. But then, a contradiction arises, because we obtain a chordless cycle with more than three vertices due to the fact that $b$ only forms an edge  with $u_{j-1}$ and $u_{j+1}$ among the considered vertices.
\end{proof}

\begin{lem}[Transitivity]
\label{TransitOK}
For $G=(V,E)$ a chordal graph and $K$ a maximal clique of $G$, the relation $\leqslant_K$ is transitive.
\end{lem}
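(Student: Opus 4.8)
The plan is to argue directly with chordless paths, in the spirit of the proof of Lemma~\ref{AntisymOK}. Assume $a\leqslant_K b$ and $b\leqslant_K c$. Unpacking the definition, there is a chordless path $P_1=(b=p_0,p_1,\ldots,p_g=a,\ldots,p_m=d_K)$ containing $a$, and a chordless path $P_2=(c,\ldots,b,\ldots,d_K)$ containing $b$; I must produce a chordless path from $c$ to $d_K$ through $a$. First I would clear the degenerate cases: if $a\in\{c,d_K\}$ or $b=c$ the conclusion is immediate, while $b=d_K$ would force $a=d_K$ through $a\leqslant_K b$. So I may assume $a,b,c,d_K$ are pairwise distinct with $a,b\notin\{c,d_K\}$; in particular $0<g<m$, so $a$ is internal on $P_1$ and hence not simplicial, and the suffix $Q=(p_g,\ldots,p_m)$ is a chordless $a$--$d_K$ path.

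The construction I would use is to glue the prefix of $P_2$ running from $c$ to $b$ to the prefix $(p_0,\ldots,p_{g-1})$ of $P_1$, obtaining a walk from $c$ to the vertex $\beta:=p_{g-1}$, from which I extract a chordless path $\pi=(c,\ldots,\beta)$; I then form the candidate $\Phi=(\pi,a,p_{g+1},\ldots,p_m)$, i.e. $\pi$ followed by the edge $\beta a$ and the tail $Q$. This walk runs from $c$ to $d_K$ and visits $a$, so if it is a chordless path we are done. The point of entering $a$ through $\beta$ rather than through an arbitrary neighbour is the structural fact that, since $P_1$ is chordless, $\beta=p_{g-1}$ is adjacent to no vertex of $Q$ other than $a$; hence the junction $\beta,a,p_{g+1}$ carries no chord and, more generally, $\beta$ cannot be short-circuited into the $Q$-tail. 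Consequently the only possible chords of $\Phi$ are edges between an interior vertex of $\pi$ and a vertex of $Q\setminus\{a\}$, and the only possible extra coincidences are shared vertices of the same kind.

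The hard part is exactly this interaction between $\pi$ and $Q$: extraction may bypass $a$ or create a chord, and in a chordal graph the obstruction is typically a triangle (a single shortcut chord) rather than a long induced cycle, so the clean chordless-cycle contradiction of Lemma~\ref{AntisymOK} does not transfer verbatim. I would handle it by a minimal-counterexample argument. Suppose no chordless $c$--$d_K$ path contains $a$, and choose the chord $t_i p_{g+j}$ (with $t_i$ on $\pi$ and $j\geqslant1$) for which $j$ is smallest and then $i$ largest. If an interior vertex of $\pi$ is adjacent to $a$ itself, I splice $\pi$ there and still pass through $a$, strictly shortening the prefix; otherwise the two chordless pieces close up into a cycle $(t_i,\ldots,\beta,a,p_{g+1},\ldots,p_{g+j},t_i)$, whose minimality forces it to be chordless, and whenever it has length at least four this contradicts chordality. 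The residual triangle case is where the genuine bookkeeping lies: there one reroutes using the safe neighbour $\beta$ (which, by the adjacency fact above, cannot be the shortcut endpoint) and shows that a suitable measure---the number of chords, or the distance from the first shortcut to $a$---strictly decreases, so the process terminates in a chordless $a$-through path. Figure~\ref{fig:antitran} is intended to guide this surgery.

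As a consistency check and an alternative, I note the remark following the statement: $\leqslant_K$ is the $C$-factor relation of Edelman and Jamison with $C=\{d_K\}$, and such factor relations are partial orders in every convex geometry, which yields transitivity once the correspondence is verified. A self-contained chordal-graph argument of the above form is nonetheless preferable here and matches the treatment of antisymmetry.
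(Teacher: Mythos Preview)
Your proposal has the right shape---concatenate pieces of the two chordless paths, extract, and derive a chordless cycle in the bad case---but it is left genuinely incomplete, and the difficulty is created by an unlucky choice of pivot vertex. You pivot at $\beta=p_{g-1}$, whose adjacency is controlled only along $P_1$; you have no handle on how the $P_2$-portion of $\pi$ meets the tail $Q$, which is exactly why the argument degenerates into an unspecified ``suitable measure'' and a ``residual triangle case''. In fact your cycle $(t_i,\ldots,\beta,a,p_{g+1},\ldots,p_{g+j},t_i)$ always has at least four vertices since $t_i\neq\beta$ and $j\geqslant 1$, so there is no triangle case at all; but you have not established chordlessness either (e.g.\ when $t_i=c$ happens to be adjacent to $a$), you have not dealt with shared vertices between $\pi$ and $Q$, and your splice at an interior vertex adjacent to $a$ destroys the one property of $\beta$ you were relying on for the next round.

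The paper pivots at $b$, the vertex common to both paths. Concatenate the $c$-to-$b$ prefix of $P_2$ with all of $P_1$ from $b$ to $d_K$ (this contains $a$) and extract a chordless $c$--$d_K$ path. If $a$ survives, done. Otherwise the extraction produces a shortcut edge from some vertex on the $c$-side of $b$ on $P_2$ to some vertex of $P_1$ strictly beyond $a$; one then walks from one $P_2$-neighbour of $b$ along $P_2$ to $d_K$, back along $P_1$ to the far end of that shortcut, across it, and along $P_2$ to the other $P_2$-neighbour of $b$, extracts a chordless path between these two neighbours, and adds $b$. Because $b$ lies on \emph{both} chordless paths, its only neighbours among the walk's vertices are precisely those two endpoints (its sole $P_1$-neighbour $p_1$ does not appear, since the shortcut lands beyond $a$), so this yields a chordless cycle of length at least four, contradicting chordality. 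Pivoting at the shared vertex $b$ rather than at the one-sided vertex $\beta$ is the missing idea that eliminates all the case analysis.
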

\begin{proof}
For $a$, $b$ and $c$ three different vertices in $V$, suppose we have $a\leqslant_K b$ and $b\leqslant_K c$. So we have two chordless paths $(u_1,\ldots, u_j, \ldots, u_n)$ and  $(v_1,\ldots, v_l, \ldots, v_m)$ with $u_1=a$, $u_j=v_1=b$, $v_l=c$ and $u_n=v_m=d_K$. From the path $(u_1,\ldots,u_j,v_2,\ldots,v_m)$ we extract a chordless path $(u_1,\ldots, u_x,v_y,\ldots,v_m)$. If $y\in \{1,\ldots, l\}$ then $c$ is in the chordless path and we have $a\leqslant_K c$. If $y\in \{l+1,\ldots, m\}$, then from the following path:
\[(u_{j+1},\ldots, u_n,v_m,v_{m-1},\ldots, v_y,u_x,\ldots,u_{j-1}),\]
we can extract a chordless path $p$ with starting vertex $u_{j+1}$ and ending vertex $u_{j-1}$. The path $p$ has at least three vertices because $\{u_{j-1},u_{j+1}\}$ is not in $E$. If we add $b$ to $p$ we got a contradiction, because it gives a chordless cycle with more than three vertices due to the fact that $b$ only forms an edge with $u_{j-1}$ and $u_{j+1}$ among the considered vertices.
\end{proof}
 
\begin{figure}[ht]
        \centering
  \begin{tikzpicture}[scale=0.8]
\draw (0,0) node[circle,draw,fill=mydgreen,thick,inner sep=1.75pt,label={[align=center]left :{ \footnotesize \color{mydgreen} $a$, \\ \footnotesize \color{mydgreen}$u_1$, \\ \footnotesize \color{mydgreen} $ v_l$}}] (a) {};
\draw (3,0) node[circle,draw,fill=bleu,thick,inner sep=1.75pt, label=below :{\footnotesize \color{bleu}$b,u_j,v_1$}] (b) {};
\draw (6,0) node[circle,draw,fill=mydgreen,thick,inner sep=1.75pt, label={[align=center]right :{  \footnotesize \color{mydgreen}$d_K$}}] (z) {};
\draw (6,-1) node[circle,draw,fill=jaune,thick,inner sep=1.75pt, label={[align=center]right :{ \footnotesize \color{jaune} $ v_{m-1}$}}] (z2) {};
\draw (2,0) node[circle,draw,fill=bleu,thick,inner sep=1.5pt,label=above left :{\footnotesize \color{bleu}$u_{j-1}$}] (ukm1) {};
\draw (4,0) node[circle,draw,fill=bleu,thick,inner sep=1.5pt,label=above right :{\footnotesize \color{bleu}$u_{j+1}$}] (ukp1) {};
\draw[very thick, dotted, bleu] (a)--(ukm1) (ukp1)--(z);
\draw[bleu] (ukm1)--(b)--(ukp1);
\draw[very thick, dotted, jaune] (b) to[in=90,out=90] (a);
\draw[very thick, dotted, jaune] (a) to[in=-123,out=-90] (z2);
\draw[jaune] (z)--(z2);     
\draw (8,0) node[circle,draw,fill=bleu,thick,inner sep=1.75pt,label={[align=center]left :{ \footnotesize \color{bleu} $a$, \\ \footnotesize \color{bleu}$u_1$}}] (ap) {};
\draw (11,0) node[circle,draw,fill=mydgreen,thick,inner sep=1.75pt, label=below :{\footnotesize \color{mydgreen}$b,u_j,v_1$}] (bp) {};
\draw (14,1) node[circle,draw,fill=jaune,thick,inner sep=1.75pt, label={[align=center]right :{  \footnotesize \color{jaune}$v_{m-1}$}}] (z2p) {};
\draw (14,0) node[circle,draw,fill=mydgreen,thick,inner sep=1.75pt, label={[align=center]right :{ \footnotesize \color{mydgreen} $ d_K$}}] (zp) {};
\draw (10,0) node[circle,draw,fill=bleu,thick,inner sep=1.5pt,label=below left :{\footnotesize \color{bleu}$u_{j-1}$}] (ukm1p) {};
\draw (12,0) node[circle,draw,fill=bleu,thick,inner sep=1.5pt,label=below right :{\footnotesize \color{bleu}$u_{j+1}$}] (ukp1p) {};
\draw[very thick, dotted, bleu] (ap)--(ukm1p) (ukp1p)--(zp);
\node (vyp) at (12.5,1.5) {};
\draw[very thick, dotted, jaune,shorten >=-4pt] (bp) to[in=180,out=90] (vyp);
\draw[very thick, dotted, jaune,shorten <=-4pt] (vyp) to[in=123,out=0] (z2p);
\draw (8.5,0) node[circle,draw,fill=bleu,thick,inner sep=1.5pt,label=above :{\footnotesize \color{bleu}$u_x$}] (uxp) {};
\draw (11.5,1.175) node[circle,draw,fill=jaune,thick,inner sep=1.5pt,label=above left :{\footnotesize \color{jaune}$v_y$}] (cp) {};
\draw (11.15,0.707) node[circle,draw,fill=jaune,thick,inner sep=1.5pt,label=left :{\footnotesize \color{jaune}$c$}] (truecp) {};
\draw[bleu] (ukm1p)--(bp)--(ukp1p);
\draw (uxp)--(cp); 
\draw[jaune] (zp)--(z2p);
\end{tikzpicture} 
        \caption{Illustrations of the proofs of antisymmetry and transitivity for the relation $\leqslant_K$.}\label{fig:antitran}
\end{figure}
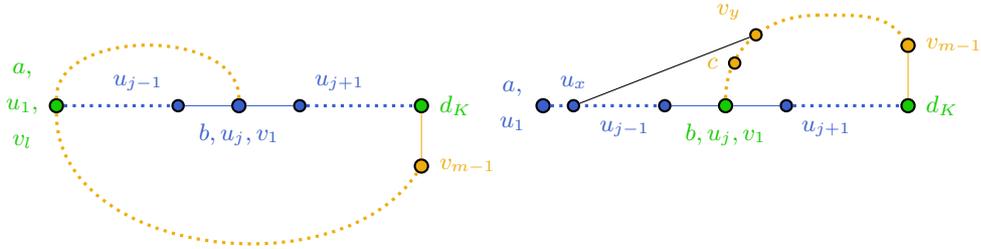

\section{Example}

\label{app:example}

Looking at Algorithms~\ref{Algo1} and~\ref{Algo2}, it seems possible to merge them to save computation time. But the situation is not that simple, because in Theorem~\ref{thm_agr}, the assumption that $a$ is $K$-blocking cannot be removed. To illustrate the mechanism of the two algorithms, we give an example. Figure~\ref{fig:explbad} shows a chordal graph $G$, its clique-separator graph $\GGG$ and a weight function. The goal is to compute $\opt(G,d_{K_1})$.

 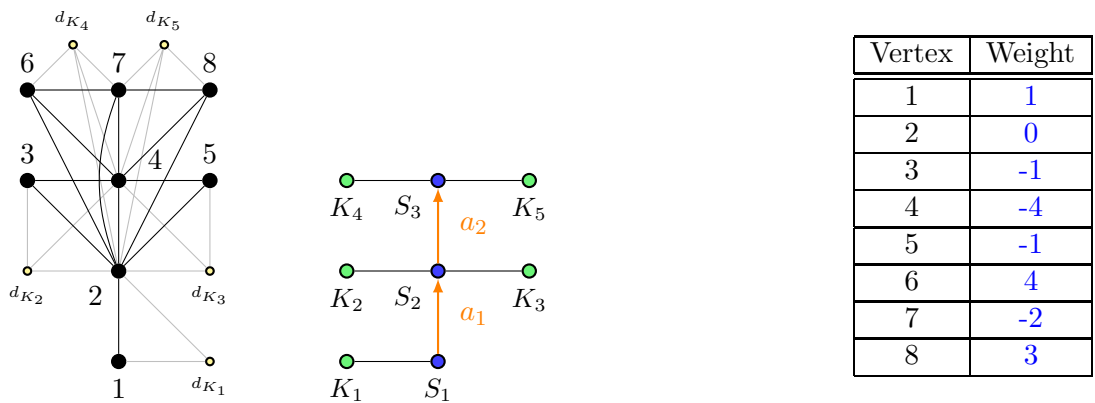
\begin{figure}[ht]
        \centering
         \begin{minipage}[c]{.74\linewidth}
  \begin{tikzpicture}[scale=0.6] 
  \tikzstyle{vertex}=[circle,draw,fill=black,thick,inner sep=1.75pt]
     \tikzstyle{vertex2}=[circle,draw,fill=yellow!50,thick,inner sep=1pt]
 \draw (-5,0) node[vertex, label=below :{$1$}]  (r1) {};
 \draw (-5,2) node[vertex, label=below left :{$2$}]  (a1) {};
 \draw (-7,4) node[vertex, label=above  :{$3$}]  (b1) {};
 \draw (-5,4) node[vertex, label={[label distance=0.15cm]5:$4$}]  (c1) {};
 \draw (-3,4) node[vertex, label=above :{$5$}]  (d1) {};
  \draw (-7,6) node[vertex, label=above :{$6$}]  (6) {};
 \draw (-5,6) node[vertex, label=above :{$7$}]  (7) {};
  \draw (-3,6) node[vertex, label=above :{$8$}]  (8) {};
 \draw (-7,2) node[vertex2, label=below :{\tiny $d_{K_2}$}]  (dk1) {};
 \draw (-3,2) node[vertex2, label=below :{{\tiny $d_{K_3}$}}]  (dk2) {};
 \draw (-3,0) node[vertex2, label=below :{{\tiny $d_{K_1}$}}]  (dk3) {};
  \draw  (-6,7) node[vertex2, label=above :{\tiny $d_{K_4}$}]  (dk4) {};
 \draw (-4,7) node[vertex2, label=above :{{\tiny $d_{K_5}$}}]  (dk5) {};
  \draw[gray!50] (r1)--(dk3)--(a1);
 \draw[gray!50] (a1)--(dk2)--(c1) (dk2)--(d1);
 \draw[gray!50] (b1)--(dk1)--(a1) (dk1)--(c1);
  \draw[gray!50] (6)--(dk4)--(7) (a1)--(dk4)--(c1);
 \draw[gray!50] (8)--(dk5)--(7) (a1)--(dk5)--(c1);
 \draw (r1)--(a1)--(b1)--(c1)--(a1)--(d1)--(c1);
 \draw (6)--(7) (6)--(c1) (6)--(a1);
 \draw (8)--(7) (8)--(c1) (8)--(a1);
 \draw (7)--(c1);
 \draw (7) to[in=110,out=-110] (a1);
 \tikzstyle{cli}=[circle,,draw,fill=vert2!75,thick,inner sep=1.75pt]
 \tikzstyle{sep}=[circle,draw,fill=blue!75,thick,inner sep=1.75pt] 
     \draw (0,0) node[cli, label=below :{\small{$K_1$}}]  (K1) {}; 
     \draw (2,0) node[sep, label=below  :{\small{$S_1$}}]  (S1) {}; 
     \draw (2,2) node[sep, label=below left:{\small{$S_2$}}]  (S2) {}; 
     \draw (2,4) node[sep, label=below left:{\small{$S_3$}}]  (S3) {}; 
     \draw (0,2) node[cli, label=below :{\small{$K_2$}}]  (K2) {}; 
     \draw (4,2) node[cli, label=below :{\small{$K_3$}}]  (K3) {}; 
     \draw (0,4) node[cli, label=below :{\small{$K_4$}}]  (K4) {}; 
     \draw (4,4) node[cli, label=below :{\small{$K_5$}}]  (K5) {};
     \draw[thick,-latex,orange] (S1)->(S2)node[midway,label=right:{$a_1$}] {};
     \draw[thick,-latex,orange] (S2)->(S3)node[midway,label=right:{$a_2$}] {};
     \draw (K1)--(S1) (K2)--(S2)--(K3) (K4)--(S3)--(K5);
\end{tikzpicture}
 \end{minipage}
 \begin{minipage}[c]{.25\linewidth}
\begin{tabular}{|c|c|}
  \hline
   Vertex & Weight  \\
  \hline
  \hline
   $1$ & {\color{blue} 1}\\  \hline
   $2$ & {\color{blue} 0}\\  \hline
   $3$ & {\color{blue} -1}\\  \hline
   $4$ & {\color{blue} -4}\\ \hline 
   $5$ & {\color{blue} -1}\\  \hline
   $6$ & {\color{blue} 4}\\  \hline
   $7$ & {\color{blue} -2}\\  \hline
   $8$ & {\color{blue} 3}\\
     \hline
\end{tabular}
 \end{minipage}
        \caption{A chordal graph and its clique-separator graph for illustrating Algorithms~\ref{Algo1} and~\ref{Algo2}.}
        \label{fig:explbad}
\end{figure}

We look at the preprocessing phase first, and we use Algorithm~\ref{Algo2} to label $a_1$ and $a_2$. Because $G\ominus a_2\subset G\ominus a_1$, we  begin by computing $\lbl(a_2)$. In other word, we want to compute a maximum-weight $S_2$-rooted convex set of $G\ominus a_2$, illustrated in Figure~\ref{fig:exp2}. There is no $K$-blocking arc in $G\ominus a_2$ for any $K$ in $\KKK_{a_2}$, we can directly use Picard's algorithm on each clique in $\KKK_{a_2}$ after temporarily changing the weight of the vertices in $S_2$ in order to impose that $S_2$ be contained in the solution. So Picard's algorithm is used two times, with $K_4$ and $K_5$, and we keep the best solution among the outputs. The result will be a convex set of weight $1$, for instance $\{2,4,6,7,8,d_{K_5}\}$, which becomes $\lbl(a_2)$.

 \begin{figure}[ht]
        \centering
   \begin{tikzpicture}[scale=0.6] 
  \tikzstyle{vertex}=[circle,draw,fill=black,thick,inner sep=1.75pt]
     \tikzstyle{vertex2}=[circle,draw,fill=yellow!50,thick,inner sep=1pt]
 \draw (-5,2) node[vertex, label=below left :{$2$}]  (a1) {};
 \draw (-5,4) node[vertex, label={[label distance=0.15cm]5:$4$}]  (c1) {};
  \draw (-7,6) node[vertex, label=above :{$6$}]  (6) {};
 \draw (-5,6) node[vertex, label=above :{$7$}]  (7) {};
  \draw (-3,6) node[vertex, label=above :{$8$}]  (8) {};
  \draw  (-6,7) node[vertex2, label=above :{\tiny $d_{K_4}$}]  (dk4) {};
 \draw (-4,7) node[vertex2, label=above :{{\tiny $d_{K_5}$}}]  (dk5) {};
  \draw[gray!50] (6)--(dk4)--(7) (a1)--(dk4)--(c1);
 \draw[gray!50] (8)--(dk5)--(7) (a1)--(dk5)--(c1);
 \draw (c1)--(a1);
 \draw (6)--(7) (6)--(c1) (6)--(a1);
 \draw (8)--(7) (8)--(c1) (8)--(a1);
 \draw (7)--(c1);
 \draw (7) to[in=110,out=-110] (a1);
 \draw[black,fill=red,fill opacity=0.50] (-5,3) ellipse (0.33cm and 1.5cm);
  \draw (-4,3) node[] {{\color{red}$S_2$}};
   \tikzstyle{cli}=[circle,,draw,fill=vert2!75,thick,inner sep=1.75pt]
 \tikzstyle{sep}=[circle,draw,fill=blue!75,thick,inner sep=1.75pt]  
     \draw (2,4) node[sep, label=below:{\small{$S_3$}}]  (S3) {}; 
     \draw (0,4) node[cli, label=below :{\small{$K_4$}}]  (K4) {}; 
     \draw (4,4) node[cli, label=below :{\small{$K_5$}}]  (K5) {};
     \draw (K4)--(S3)--(K5);
\end{tikzpicture}
        \caption{The graph $G\ominus a_2$ and its clique-separator graph.}
        \label{fig:exp2}
\end{figure}
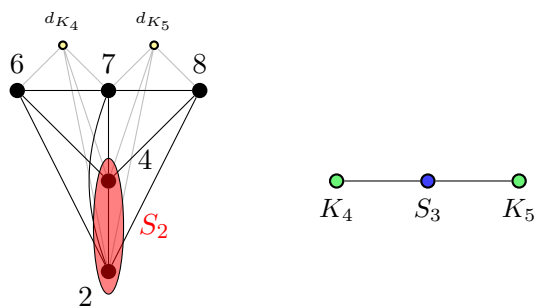

 Now we compute $\lbl(a_1)$, so we are looking for an $S_1$-rooted convex set in $G\ominus a_1$ represented in Figure~\ref{fig:exp3}.  We temporarily change the weight of the vertices in $S_1$ and we run Algorithm~\ref{Algo1} on the graph $G\ominus a_1$ four times, with the cliques $K_2$, $K_3$, $K_4$ and $K_5$. For the clique $K_4$ there is no $K_4$-blocking arc and we use Picard's algorithm.  The same process is applied with $K_5$. For the clique $K_2$, the arc $a_2$ is $K_2$-blocking but already labeled. So we identify the vertices of $(G\ominus a_2) - S_2$ to a vertex $z_{a_2}$ with weight $w(\lbl(a_2))-w(S_2)=1$. After this operation, there is no $K_2$-blocking arc and we use Picard's algorithm. Figure~\ref{fig:exp4} shows a visual representation of the transformation. The same process is applied with $K_3$.  For the labeling of $a_1$ we have used Picard's algorithm four times. A best $S_1$-rooted convex set of $G\ominus a_1$ is $\{2,6,d_{K_4}\}$ with weight $4$.

 \begin{figure}[ht]
        \centering
   \begin{tikzpicture}[scale=0.6] 
 \tikzstyle{vertex}=[circle,draw,fill=black,thick,inner sep=1.75pt]
     \tikzstyle{vertex2}=[circle,draw,fill=yellow!50,thick,inner sep=1pt]
 \draw (-5,2) node[vertex, label=below left :{$2$}]  (a1) {};
 \draw (-7,4) node[vertex, label=above  :{$3$}]  (b1) {};
 \draw (-5,4) node[vertex, label={[label distance=0.15cm]5:$4$}]  (c1) {};
 \draw (-3,4) node[vertex, label=above :{$5$}]  (d1) {};
  \draw (-7,6) node[vertex, label=above :{$6$}]  (6) {};
 \draw (-5,6) node[vertex, label=above :{$7$}]  (7) {};
  \draw (-3,6) node[vertex, label=above :{$8$}]  (8) {};
 \draw (-7,2) node[vertex2, label=below :{\tiny $d_{K_2}$}]  (dk1) {};
 \draw (-3,2) node[vertex2, label=below :{{\tiny $d_{K_3}$}}]  (dk2) {};
  \draw  (-6,7) node[vertex2, label=above :{\tiny $d_{K_4}$}]  (dk4) {};
 \draw (-4,7) node[vertex2, label=above :{{\tiny $d_{K_5}$}}]  (dk5) {};
 \draw[gray!50] (a1)--(dk2)--(c1) (dk2)--(d1);
 \draw[gray!50] (b1)--(dk1)--(a1) (dk1)--(c1);
  \draw[gray!50] (6)--(dk4)--(7) (a1)--(dk4)--(c1);
 \draw[gray!50] (8)--(dk5)--(7) (a1)--(dk5)--(c1);
 \draw  (a1)--(b1)--(c1)--(a1)--(d1)--(c1);
 \draw (6)--(7) (6)--(c1) (6)--(a1);
 \draw (8)--(7) (8)--(c1) (8)--(a1);
 \draw (7)--(c1);
 \draw (7) to[in=110,out=-110] (a1);
 \tikzstyle{cli}=[circle,,draw,fill=vert2!75,thick,inner sep=1.75pt]
 \tikzstyle{sep}=[circle,draw,fill=blue!75,thick,inner sep=1.75pt] 
     \draw (2,2) node[sep, label=below left:{\small{$S_2$}}]  (S2) {}; 
     \draw (2,4) node[sep, label=below left:{\small{$S_3$}}]  (S3) {}; 
     \draw (0,2) node[cli, label=below :{\small{$K_2$}}]  (K2) {}; 
     \draw (4,2) node[cli, label=below :{\small{$K_3$}}]  (K3) {}; 
     \draw (0,4) node[cli, label=below :{\small{$K_4$}}]  (K4) {}; 
     \draw (4,4) node[cli, label=below :{\small{$K_5$}}]  (K5) {};
     \draw[thick,-latex,orange] (S2)->(S3)node[midway,label=right:{$a_2$}] {};
     \draw  (K2)--(S2)--(K3) (K4)--(S3)--(K5);
 \draw[black,fill=red,fill opacity=0.50] (-5,2) ellipse (0.33cm and 0.33cm);
  \draw (-4,1.5) node[] {{\color{red}$S_1$}};
\end{tikzpicture}
        \caption{The graph $G\ominus a_1$ and its clique-separator graph.}
        \label{fig:exp3}
\end{figure}
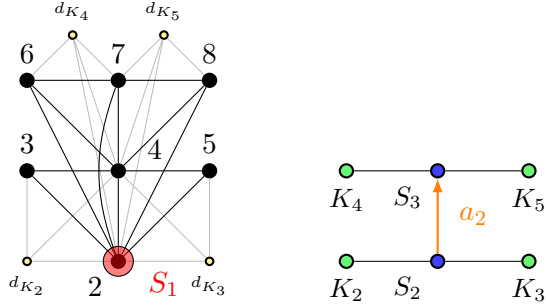

 \begin{figure}[ht]
        \centering
        \begin{minipage}[c]{.74\linewidth}
   \begin{tikzpicture}[scale=0.6] 
 \tikzstyle{vertex}=[circle,draw,fill=black,thick,inner sep=1.75pt]
     \tikzstyle{vertex2}=[circle,draw,fill=yellow!50,thick,inner sep=1pt]
 \draw (-5,2) node[vertex, label=below left :{$2$}]  (a1) {};
 \draw (-7,4) node[vertex, label=above  :{$3$}]  (b1) {};
 \draw (-5,4) node[vertex, label={[label distance=0.06cm]57:$4$}]  (c1) {};
 \draw (-3,4) node[vertex, label=above :{$5$}]  (d1) {};
  \draw (-6,6) node[vertex2, label=above :{$d_{K_6}$}]  (6) {};
  \draw (-4,6) node[vertex, label=above :{$z_{a_2}$}]  (8) {};
 \draw (-7,2) node[vertex2, label=below :{\tiny $d_{K_2}$}]  (dk1) {};
 \draw (-3,2) node[vertex2, label=below :{{\tiny $d_{K_3}$}}]  (dk2) {};
 \draw[gray!50] (a1)--(dk2)--(c1) (dk2)--(d1);
 \draw[gray!50] (b1)--(dk1)--(a1) (dk1)--(c1);
  \draw[gray!50]  (a1)--(6)--(c1) (6)--(8);
 \draw  (a1)--(b1)--(c1)--(a1)--(d1)--(c1);
 \draw   (8)--(c1) (8)--(a1);
 \tikzstyle{cli}=[circle,,draw,fill=vert2!75,thick,inner sep=1.75pt]
 \tikzstyle{sep}=[circle,draw,fill=blue!75,thick,inner sep=1.75pt] 
     \draw (2,2) node[sep, label=below :{\small{$S_2$}}]  (S2) {}; 
     \draw (2,4) node[cli, label=above:{\small{$\{d_{K_6},z_{a_2}\}\cup S_2$}}]  (S3) {}; 
     \draw (0,2) node[cli, label=below :{\small{$K_2$}}]  (K2) {}; 
     \draw (4,2) node[cli, label=below :{\small{$K_3$}}]  (K3) {}; 
     \draw  (K2)--(S2)--(K3) (S2)--(S3);
 \draw[black,fill=red,fill opacity=0.50] (-5,2) ellipse (0.33cm and 0.33cm);
  \draw (-4,1.5) node[] {{\color{red}$S_1$}};
\end{tikzpicture}
 \end{minipage}
 \begin{minipage}[c]{.25\linewidth}
\begin{tabular}{|c|c|}
  \hline
   Vertex & Weight  \\
  \hline
  \hline
   $z_{a_2}$ & {\color{blue} 1}\\ 
     \hline
\end{tabular}
 \end{minipage}
        \caption{The graph after the identification  of   vertices in order to remove $K_2$-blocking arcs.}
        \label{fig:exp4}
\end{figure}
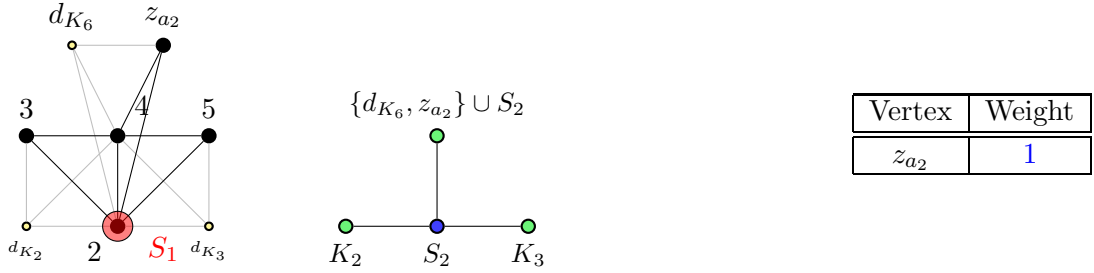

Now every arc is labeled, and we look at the computing phase. Because we want a maximum-weight $d_{K_1}$-rooted convex set, we run Algorithm~\ref{Algo1}. There is only one $K_1$-blocking arc, namely $a_1$. So we identify the vertices of $(G\ominus a_1) - S_1$ to a vertex $z_{a_1}$ with weight $w(\lbl(a_1))-w(S_1)=4$. After this operation, there is no $K_1$-blocking arc as shown in Figure~\ref{fig:exp5}, and we use Picard's algorithm to find $\{1,2,z_{a_1},d_{K_1}\}$ as a maximum-weight $K_1$-rooted convex set, which gives rise to the convex set $C^{*}=\{1,2,d_{K_1}\}\cup (\lbl(a_1) \setminus S_1)=\{1,2,6,d_{K_1},d_{K_4}\}$ of $G$, with $w(C^{*})=5$.  
 
 \begin{figure}[ht]
        \centering
         \begin{minipage}[c]{.74\linewidth}
  \begin{tikzpicture}[scale=0.6] 
  \tikzstyle{vertex}=[circle,draw,fill=black,thick,inner sep=1.75pt]
     \tikzstyle{vertex2}=[circle,draw,fill=yellow!50,thick,inner sep=1pt]
 \draw (-5,0) node[vertex, label=below :{$1$}]  (r1) {};
 \draw (-5,2) node[vertex, label=below left :{$2$}]  (a1) {};
 \draw (-5,4) node[vertex, label=above:{$z_{a_1}$}]  (c1) {};
 \draw (-3,0) node[vertex2, label=below :{{\tiny $d_{K_1}$}}]  (dk3) {};
 \draw (-7,2) node[vertex2, label=below :{{\tiny $d_{K_7}$}}]  (dk7) {};
  \draw[gray!50] (r1)--(dk3)--(a1) (c1)--(dk7)--(a1);
 \draw (r1)--(a1)--(c1);
 \tikzstyle{cli}=[circle,,draw,fill=vert2!75,thick,inner sep=1.75pt]
 \tikzstyle{sep}=[circle,draw,fill=blue!75,thick,inner sep=1.75pt] 
     \draw (0,2) node[cli, label=below :{\small{$K_1$}}]  (K1) {}; 
     \draw (2,2) node[sep, label=below  :{\small{$S_1$}}]  (S1) {}; 
     \draw (4,2) node[cli, label=below:{\small{$\{2,z_{a_1},d_{K_7}\}$}}]  (S2) {}; 
     \draw (K1)--(S1)--(S2);
\end{tikzpicture}
 \end{minipage}
 \begin{minipage}[c]{.25\linewidth}
\begin{tabular}{|c|c|}
  \hline
   Vertex & Weight  \\
  \hline
  \hline
   $1$ & {\color{blue} 1}\\  \hline
   $2$ & {\color{blue} 0}\\  \hline
   $z_{a_1}$ & {\color{blue} 4}\\  \hline
\end{tabular}
 \end{minipage}
        \caption{The state of the graph after removing the $K_1$ blocking arc.}
        \label{fig:exp5}
\end{figure}
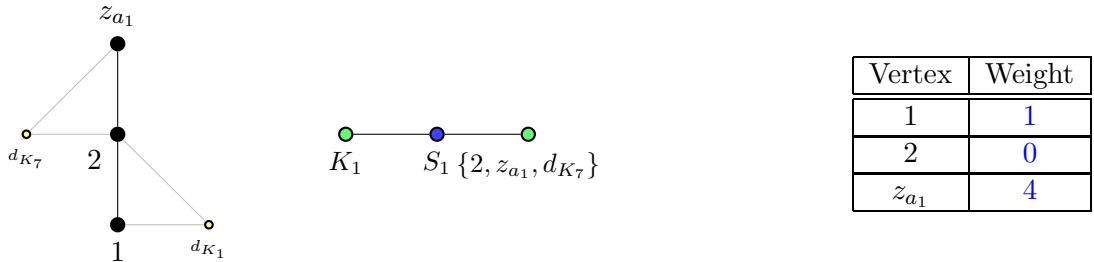

\section{Detailed time complexity}
\label{app:cmplx}

We now prove that for a chordal graph $G=(V,E)$ given by its adjacency matrix, and a weight function on the vertices, 
the running time of our algorithm is in $O(|V|^{2}|E|^{2}\log(\frac{|V|^{2}}{|E|}))$, in the worst-case. We need  to check that the information needed for the execution of Algorithms~\ref{Algo1} and~\ref{Algo2} can be computed in advance, only once for a given graph. We recall that for a graph $G$, finding a connected component of $G- X$ that contains $v$ where $X\subseteq V$ and $v\in V\setminus X$ can be done in time $O(|V|+|E|)$ as stated by Hopcroft and Tarjan~\cite{Hopcroft73}. We will use the following lemma.

\begin{lem}\label{lem:L0}
Given a (connected) chordal graph $G=(V,E)$  with clique-separator graph $\GGG$, we have $|V|-1\leqslant |E|$ and $|\mathit{Ar}_{\GGG}|\leqslant |E|$.
\end{lem}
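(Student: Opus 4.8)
The first inequality is immediate: a connected graph on $|V|$ vertices contains a spanning tree, which already uses $|V|-1$ edges, so $|E|\geqslant|V|-1$.

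For the second inequality my plan is to prove the sharper chain $|\mathit{Ar}_{\GGG}|\leqslant\sum_{S\in\SSS_G}|S|\leqslant|E|$. It is worth noting that the tempting shortcut $|\mathit{Ar}_{\GGG}|\leqslant|\SSS_G|\leqslant|V|-1$ is not available: the arcs are exactly the cover relations of the inclusion order on $\SSS_G$, and this order need not be a forest (one separator may cover, and be covered by, several pairwise incomparable ones), so $|\mathit{Ar}_{\GGG}|$ can exceed both $|\SSS_G|$ and $|V|-1$. The quantity $\sum_{S\in\SSS_G}|S|$ is the right intermediate.

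The bound $\sum_{S\in\SSS_G}|S|\leqslant|E|$ I would prove with a clique tree. Being connected and chordal, $G$ has a clique tree $T$ on $\KKK_G$ whose edge labels $K\cap K'$ are exactly the minimal vertex separators of $G$ (standard clique-tree theory, in the spirit of Fulkerson and Gross and of Blair and Peyton). Root $T$ arbitrarily; for a non-root clique $K$ write $\sigma(K)=K\cap\father(K)$ and $N(K)=K\setminus\sigma(K)$ for its fresh vertices, so that $N(K)\neq\varnothing$ by maximality of $K$ and the sets $N(K)$ partition $V$. Assign each edge of $G$ to the clique closest to the root containing both its endpoints; this partitions $E$ into classes $E_K$. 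For a non-root $K$, every edge joining a fresh vertex of $K$ to a vertex of $\sigma(K)$ lies in $E_K$ (a fresh vertex of $K$ belongs to no clique above $K$), hence $|E_K|\geqslant|N(K)|\cdot|\sigma(K)|\geqslant|\sigma(K)|$. Summing over non-root cliques yields $\sum_{K}|\sigma(K)|\leqslant|E|$, and since each minimal separator equals $\sigma(K)$ for at least one non-root $K$, with distinct separators forcing distinct such $K$, we obtain $\sum_{S\in\SSS_G}|S|\leqslant|E|$.

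It remains to prove $|\mathit{Ar}_{\GGG}|\leqslant\sum_{S\in\SSS_G}|S|$, which is the heart of the matter. I would charge each arc $(S_1,S_2)$ to a distinct \emph{incidence}, that is a pair $(S,v)$ with $v\in S$ and $S\in\{S_1,S_2\}$; as there are exactly $\sum_{S\in\SSS_G}|S|$ incidences, such an injection finishes the proof. By Hall's theorem the injection exists if and only if every subfamily $\mathcal U\subseteq\SSS_G$ satisfies the sparsity bound that the number of arcs with both endpoints in $\mathcal U$ is at most $\sum_{S\in\mathcal U}|S|$, and establishing this is the main obstacle. A promising ingredient is that, above a fixed separator $S_1$, the differences $S_2\setminus S_1$ over all covers $S_1\lessdot S_2$ should be pairwise disjoint: a shared vertex $v$ would give $S_1\subsetneq S_1\cup\{v\}\subseteq S_2\cap S_2'$, and if the intersection of the two separators is again a separator (which one would verify from the clique-tree description) it would sit strictly between $S_1$ and one of the targets, contradicting the cover relation. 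Converting such local statements into the global Hall condition — presumably once more through the clique tree — is the step that I expect to require the most care.
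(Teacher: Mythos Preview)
Your treatment of $|V|-1\leqslant|E|$ is fine, and the clique-tree argument for $\sum_{S\in\SSS_G}|S|\leqslant|E|$ is correct. The problem is precisely where you flag it: the inequality $|\mathit{Ar}_{\GGG}|\leqslant\sum_{S\in\SSS_G}|S|$ is not proved. The ``promising ingredient'' you propose---pairwise disjointness of the sets $S_2\setminus S_1$ over the covers $S_1\lessdot S_2$---does not, even if true, bound the out-degree of $S_1$ by $|S_1|$ (those differences live outside $S_1$, so their disjointness says nothing about how many there are relative to $|S_1|$), and you give no mechanism for turning such local statements into the global Hall inequality. The auxiliary claim that $S_2\cap S_2'$ is again a minimal vertex separator is also left unverified. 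As written, the second half of the chain is a plan, not a proof.

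For comparison, the paper's argument is a direct injection from $\mathit{Ar}_{\GGG}$ into $E$: since $S_1\subsetneq S_2$ are both cliques of $G$, each arc $(S_1,S_2)$ is represented by an edge $\{s_1,s_2\}$ with $s_1\in S_1$, $s_2\in S_2$, and one asserts that distinct arcs can be given distinct such edges. There is no intermediate quantity $\sum_S|S|$, no clique tree, and no Hall argument. The paper is admittedly terse on why the assignment can be made injective, but the intended route is a one-step map $\mathit{Ar}_{\GGG}\to E$; your detour through incidences is both longer and, in its current state, incomplete at the crucial step.
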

\begin{proof}
The first inequality follows from connectedness. For the second inequality, notice each arc $(S_1,S_2) \in \mathit{Ar}_{\GGG}$ is generated by two cliques $S_1$ and $S_2$ such that $S_1\subset S_2$ and  there is no vertex $S_3$ such that $S_1 \subset S_3 \subset  S_2$. We assign to each arc $a=(S_1,S_2)$ a unique edge $\{s_1,s_2\}$ in $G$ with $s_1 \in S_1$ and $s_2\in S_2$. This implies $|\mathit{Ar}_{\GGG}|\leqslant |E|$.
\end{proof}

\begin{lem}\label{lem:L1}
Given a chordal graph $G=(V,E)$ and its clique-separator graph $\GGG$, listing $G\ominus a$ for all $a$ in $\mathit{Ar}_{\GGG}$ can be done in $O(|E|^{2})$.
\end{lem}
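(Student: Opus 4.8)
The plan is to process each arc independently, computing the vertex set (and induced edges) of $G \ominus a$ by a single graph search, and then to sum the cost over all arcs using the two bounds from Lemma~\ref{lem:L0}. The only structural fact I need is that "the connected component that intersects $S_2$" is unambiguous, which I establish first.

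Fix an arc $a = (S_1, S_2)$ in $\mathit{Ar}_{\GGG}$. By the definition of the arcs of a clique-separator graph we have $S_1 \subset S_2$, so $S_2 \setminus S_1$ is nonempty. Since every minimal vertex separator of a chordal graph is a clique, $S_2$ is a clique and the vertices of $S_2 \setminus S_1$ are pairwise adjacent; hence they all lie in a single connected component of $G - S_1$. Consequently the connected component of $G - S_1$ intersecting $S_2$ is well defined, and I would obtain it by picking any representative $s_2 \in S_2 \setminus S_1$ and running the connected-component search of Hopcroft and Tarjan~\cite{Hopcroft73} from $s_2$ in $G - S_1$. That search costs $O(|V| + |E|)$; adjoining $S_1$ to the component returned, and reading off the edges of $G$ with both endpoints in that vertex set, yields $G \ominus a$ within the same bound.

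For the total running time I would invoke Lemma~\ref{lem:L0} twice. First, the connectedness of $G$ gives $|V| - 1 \leqslant |E|$, so the per-arc cost $O(|V| + |E|)$ simplifies to $O(|E|)$. Second, the same lemma bounds the number of arcs by $|\mathit{Ar}_{\GGG}| \leqslant |E|$. Multiplying the per-arc cost by the number of arcs then gives $O(|E| \cdot |E|) = O(|E|^{2})$, as claimed. The argument is largely routine bookkeeping; the one point requiring care is the well-definedness of the target component, which is exactly what the clique property of $S_2 \setminus S_1$ supplies (and which is consistent with Theorem~\ref{ibarrath1p1}, describing how removing the separators contained in $S_1$ splits both $G$ and $\GGG$ into matching components). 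I expect no genuine obstacle beyond this observation and the two simplifications drawn from Lemma~\ref{lem:L0}.
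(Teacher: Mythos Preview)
Your proposal is correct and follows essentially the same approach as the paper: a Hopcroft--Tarjan search per arc at cost $O(|V|+|E|)$, simplified to $O(|E|)$ via $|V|-1\leqslant |E|$, then multiplied by the bound $|\mathit{Ar}_{\GGG}|\leqslant |E|$ from Lemma~\ref{lem:L0}. The paper's version is terser and omits your well-definedness check for the component intersecting $S_2$, but the argument is the same.
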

\begin{proof}
For each arc $a=(S_1,S_2)$, we need to find the connected component of $G - S_1$ that intersects $S_2$. Using Lemma~\ref{lem:L0}, we have $O(|E|^{2})$ as total running time.
\end{proof}

\begin{lem}\label{lem:L2}
Given a chordal graph $G=(V,E)$ with clique-separator graph $\GGG$, obtaining the list, for each  $a$ in $\mathit{Ar}_{\GGG}$, of the cliques in $\KKK_G$ for which $a$ is $K$-blocking takes  $O(|E|^{2}|V|^{2})$.
\end{lem}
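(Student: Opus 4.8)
The plan is to run the obvious triple loop and to reduce the whole computation to repeated tests of the form \emph{``is $S_1$ a minimal $s_2d_K$-separator?''}, each of which I will perform by a single traversal of $G$. By definition, an arc $a=(S_1,S_2)\in\mathit{Ar}_{\GGG}$ is $K$-blocking exactly when $S_1$ is a minimal $s_2d_K$-separator for every $s_2\in S_2\setminus S_1$; so for a fixed arc $a$ and a fixed clique $K\in\KKK_G$ I would iterate over all $s_2\in S_2\setminus S_1$, run the test, and declare $a$ to be $K$-blocking iff all of these tests succeed.

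The key tool is the classical characterisation of minimal separators: if $u$ and $v$ lie in two distinct connected components $C_u$ and $C_v$ of $G-S$, then $S$ is a minimal $uv$-separator if and only if $N(C_u)=S=N(C_v)$ (both components are \emph{full}). One direction is immediate: if some $w\in S$ had no neighbour in $C_u$, then $C_u$ would still be a component of $G-(S\setminus\{w\})$ not containing $v$, so $S\setminus\{w\}$ would already separate $u$ from $v$, contradicting minimality. Conversely, if $N(C_u)=S=N(C_v)$ and $S'\subsetneq S$, pick $w\in S\setminus S'$; since $C_u$ and $C_v$ survive in $G-S'$ and $w$ has a neighbour in each, there is a path from $u$ to $v$ through $w$ in $G-S'$, so $S'$ is not a $uv$-separator. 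Note that $N(C)\subseteq S$ holds automatically for any component $C$ of $G-S$, so testing $N(C)=S$ reduces to comparing cardinalities.

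Concretely, for a fixed triple $(S_1,s_2,K)$ I would: compute the connected components of $G-S_1$ in $O(|V|+|E|)$ time~\cite{Hopcroft73}; locate the components $C_{s_2}$ and $C_{d_K}$ containing $s_2$ and $d_K$; reject at once if $C_{s_2}=C_{d_K}$ (then $S_1$ does not separate $s_2$ from $d_K$); and otherwise accept iff $N(C_{s_2})=S_1=N(C_{d_K})$, which I read off while scanning the neighbourhoods. Every step is a single pass over the vertices and edges, so one test costs $O(|V|+|E|)$, which is $O(|E|)$ by Lemma~\ref{lem:L0}.

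It remains to multiply the costs. There are at most $|E|$ arcs (Lemma~\ref{lem:L0}), at most $|V|$ maximal cliques since $|\KKK_G|\leqslant|V|$, and for each arc at most $|V|$ choices of $s_2\in S_2\setminus S_1$, each test running in $O(|E|)$ time; hence the total running time is $O(|E|\cdot|V|\cdot|V|\cdot|E|)=O(|E|^{2}|V|^{2})$, as claimed. The only genuinely delicate point is the minimality test, and once the full-component characterisation is in place it collapses to a neighbourhood comparison; everything else is routine bookkeeping. (Computing the components of $G-S_1$ only once per pair $(a,K)$ rather than once per $s_2$ would even save a factor of $|V|$, but the stated bound already follows from the naive loop.)
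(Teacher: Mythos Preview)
Your proof is correct and achieves the stated bound, but it takes a different route than the paper. The paper first invokes Lemma~\ref{lem:L1} to precompute $G\ominus a$ for every arc, and then, for each pair $(a,K)$ with $a=(S_1,S_2)$, tests two conditions directly on these precomputed sets: (i) $d_K\notin G\ominus a$, which certifies that $S_1$ separates $d_K$ from every $s_2\in S_2\setminus S_1$; and (ii) for no arc $a'=(S_3,S_1)$ into $S_1$ does $d_K$ lie outside $G\ominus a'$, which certifies minimality by ruling out any strictly smaller separator sitting below $S_1$ in $\GGG$. In contrast, you bypass the clique-separator graph entirely for the minimality test and instead appeal to the classical full-component characterisation of minimal $uv$-separators, redoing a component computation for each triple $(a,K,s_2)$.

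What each approach buys: the paper's version reuses structure already on hand (the $G\ominus a$ sets and the arcs of $\GGG$), so the minimality check becomes a lookup over at most $|E|$ incoming arcs rather than a fresh traversal; this is why their argument is terser. Your version is more self-contained---it does not depend on the correctness of the arc-based minimality criterion---and the full-component lemma you prove is a standard, reusable tool. Both fit comfortably inside the $O(|E|^{2}|V|^{2})$ budget. As you note yourself, since $S_2$ is a clique, all of $S_2\setminus S_1$ lies in a single component of $G-S_1$, so the inner loop over $s_2$ is redundant and you could shave a factor of $|V|$; but that is not needed for the claimed bound.
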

\begin{proof}
By Lemma~\ref{lem:L1} we can obtain the  list of $G\ominus a$ for all $a$ in $\mathit{Ar}_{\GGG}$ in $O(|E|^{2})$. Then, for each $a=(S_1,S_2)$ in $\mathit{Ar}_{\GGG}$, and for each $K$ in $\KKK_G$, we check two conditions. First, that $d_K$ is not in $G\ominus a$ (i.e. $S_1$ is a $d_ks_2$-separator for $s_2$ in $S_2$). Second, that there is no $a'=(S_3,S_1)$ in  $\mathit{Ar}_{\GGG}$ such that $d_K$ is not in $G\ominus a'$ (i.e.  $S_1$ is minimal among the $s_2d_K$-separator for every $s_2$ in $S_2$).
\end{proof}

\begin{lem}\label{lem:L3}
Given a chordal graph $G=(V,E)$ and its clique-separator graph $\GGG$, obtaining the relations $\leqslant_K$ for all $K$ in $\KKK_G$ takes time $O(|V|^{2}|E|^{2})$.
\end{lem}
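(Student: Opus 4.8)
The plan is to exploit that $|\KKK_G|\leqslant|V|$, so that it suffices to compute each relation $\leqslant_K$ separately, and in fact to settle $\leqslant_K$ one pair at a time. For a fixed maximal clique $K$ and a fixed ordered pair $(u,v)\in V\times V$, I want to decide whether $u\leqslant_K v$ — that is, whether $u$ lies on some chordless path joining $v$ to $d_K$ — using only $O(|E|)$ work. Summed over the $|V|^2$ ordered pairs and the at most $|V|$ maximal cliques, this yields $O(|V|^3|E|)$, which is $O(|V|^2|E|^2)$ since $|V|-1\leqslant|E|$ by Lemma~\ref{lem:L0}. Thus everything reduces to an $O(|E|)$-time local test.

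The engine of the test is an analysis around $u$. After disposing of the degenerate cases ($u\in\{v,d_K\}$, where the answer is immediate since $G$ is connected, and the case where $v$ or $d_K$ lies in $N(u)$, handled by inspecting the at most two relevant path-neighbours of $u$), I compute with a single graph search (which costs $O(|V|+|E|)$, as recalled above) the connected components of $G-(N(u)\cup\{u\})$. Let $A$ and $B$ be the components containing $v$ and $d_K$ respectively. I claim that $u\leqslant_K v$ holds \emph{if and only if} $A\neq B$ and $u$ has two non-adjacent neighbours $p,q$ with $p$ adjacent to some vertex of $A\cup\{v\}$ and $q$ adjacent to some vertex of $B\cup\{d_K\}$. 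Both the component computation and the search for such a pair $(p,q)$ cost $O(|E|)$, so the whole predicate is evaluated in $O(|E|)$, as required.

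The heart of the argument, and the step I expect to be the main obstacle, is proving that this purely local condition is equivalent to the global existence of a chordless $v$–$d_K$ path through $u$. For the \emph{if} direction one splices an induced $v$-to-$p$ fragment inside the subgraph induced by $A\cup\{p\}$ with an induced $q$-to-$d_K$ fragment inside the subgraph induced by $B\cup\{q\}$, joined through $p,u,q$; here chordality is the property used to reroute the fragments so that no chord appears between them, so that the spliced walk reduces to an honest chordless path still containing $u$. For the \emph{only if} direction, a chordless $v$–$d_K$ path through $u$ meets $u$ in two non-adjacent path-neighbours $p,q$ and in no other vertex of $N(u)$; were $v$ and $d_K$ connected in $G-(N(u)\cup\{u\})$, concatenating such a connection with the two path-fragments around $u$ would close a chordless cycle of length at least four through $u$, contradicting chordality and forcing $A\neq B$, while the ``facing'' condition on $p$ and $q$ is read directly off the path. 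I will emphasize that this condition is strictly stronger than ``$u$ belongs to a minimal $v$–$d_K$ separator'': since minimal separators are cliques, two adjacent neighbours of $u$ may already separate $v$ from $d_K$ and short-circuit $u$, so mere separator membership is not the correct test. Once this equivalence is established, the stated $O(|V|^2|E|^2)$ bound follows from the accounting above, completing the lemma.
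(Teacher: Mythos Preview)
Your approach is correct but differs substantially from the paper's. The paper never tests individual pairs; it assembles $\leqslant_K$ wholesale from the clique-separator data already computed in Lemmas~\ref{lem:L1} and~\ref{lem:L2}, by first setting $d_K\leqslant_K k$ for $k\in K$ and then, for each $K$-blocking arc $a=(S_1,S_2)$, setting $s\leqslant_K u$ for all $s\in S_1$ and $u\in (G\ominus a)-S_1$. So the paper's route is structural and piggybacks on the separator decomposition, whereas yours is a self-contained local test (``does some chordless $v$--$d_K$ path pass through $u$?'') that never touches the clique-separator graph. Your argument is more elementary and stands on its own; the paper's is shorter given what has already been computed.

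Two technical remarks on your write-up. First, the ``if'' direction is cleaner than you suggest: because $N_A:=N(A)$ and $N_B:=N(B)$ are minimal vertex separators of the chordal graph $G$ and hence cliques, the spliced path $(v,\ldots,p,u,q,\ldots,d_K)$ with $P_1\subseteq G[A\cup\{p\}]$ and $P_2\subseteq G[B\cup\{q\}]$ is \emph{already} chordless---a smallest-index chord from $p$ into the $B$-side would close a chordless cycle of length at least four---so no ``rerouting'' is needed, and there is no danger of losing $u$ when shortening. Second, this same clique property is what makes the search for a non-adjacent pair genuinely $O(|E|)$: the test reduces to whether $N_A\cup N_B$ is a clique, and $|N_A|\cdot|N_B|=O(|E|)$ since each factor has size $O(\sqrt{|E|})$. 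You should make this explicit, since as written the $O(|E|)$ cost of the pair search is asserted without justification.
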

\begin{proof}
By Lemmas~\ref{lem:L1} and~\ref{lem:L2} we can obtain, for each arc $a$ in $\mathit{Ar}_{\GGG}$, the vertices in $G\ominus a$ and the cliques in $\KKK_G$ for which $a$ is $K$-blocking in $\mathit{Ar}_{\GGG}$ in $O(|E|^{2}|V|^{2})$. First, for all $K$ in  $\KKK_G$, we set $d_k\leqslant_K k$ for all $k$ in $K$. Second, for all $K$ in  $\KKK_G$, and for all $a=(S_1,S_2)$ in  $\mathit{Ar}_{\GGG}$ such that $a$ is $K$-blocking,  we set $s\leqslant_K u$ for all $s\in S_1$ and $u\in (G\ominus a) - S_1$.
\end{proof}

\begin{lem}\label{lem:L4}
Given a chordal graph $G=(V,E)$ and its clique-separator graph $\GGG$, sorting the arcs of $\mathit{Ar}_{\GGG}$  such that  $G\ominus a \subseteq G\ominus a'$ implies $a<a'$ can be done in $O(|V||E|^{2})$ time.
\end{lem}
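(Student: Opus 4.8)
The plan is to reduce the task to computing a linear extension of the partial order that the $\ominus$-operation induces on the arcs. By Lemma~\ref{lem:L1}, I would first list the vertex set of $G\ominus a$ for every $a$ in $\mathit{Ar}_{\GGG}$ in $O(|E|^2)$ time, storing each such set as a characteristic vector indexed by $V$, so that a subset test between two of them later costs $O(|V|)$.

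The relation $G\ominus a \subseteq G\ominus a'$ is the inclusion order on these vertex sets, hence a partial order, and any total order with the required property is precisely a linear extension of it. To build one, I would form the directed graph $D$ on node set $\mathit{Ar}_{\GGG}$ with an arc from $a$ to $a'$ whenever $G\ominus a \subsetneq G\ominus a'$. Determining these arcs amounts to comparing the characteristic vectors of every ordered pair of arcs; each comparison is $O(|V|)$ and, by Lemma~\ref{lem:L0}, there are at most $|\mathit{Ar}_{\GGG}|^2 \leqslant |E|^2$ ordered pairs, so this phase runs in $O(|V||E|^2)$. Since strict inclusion is irreflexive and transitive, $D$ is acyclic, and a topological sort of $D$ — which has at most $|E|$ nodes and $O(|E|^2)$ arcs and therefore costs $O(|E|^2)$ — yields the desired total order: whenever $G\ominus a \subsetneq G\ominus a'$ the arc $a\to a'$ forces $a$ to precede $a'$. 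Summing the three phases gives $O(|E|^2)+O(|V||E|^2)+O(|E|^2)=O(|V||E|^2)$.

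The only delicate point is the treatment of distinct arcs $a\neq a'$ with $G\ominus a = G\ominus a'$: these are incomparable under strict inclusion and may be ordered arbitrarily, so I read the requirement as asking the total order to respect every strict inclusion (equal sets imposing no constraint, since $a<a'$ and $a'<a$ cannot both hold). I note that one could sidestep the pairwise comparison altogether: as $G\ominus a \subsetneq G\ominus a'$ implies $|G\ominus a| < |G\ominus a'|$, simply sorting the arcs by the number of vertices of $G\ominus a$ already produces a valid total order, in $O(|E|^2 + |E|\log|E|)=O(|E|^2)$ time once the sets of Lemma~\ref{lem:L1} are available; the coarser $O(|V||E|^2)$ bound stated here is what the more direct, comparison-based construction delivers, and is the step I expect to dominate the running time.
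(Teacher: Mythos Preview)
Your argument is correct and matches the paper's approach: both invoke Lemma~\ref{lem:L1} to precompute the sets $G\ominus a$ and then bound the cost of the sort by the $O(|E|^2)$ pairwise comparisons at $O(|V|)$ each. You supply more detail than the paper's two-line proof (explicitly building the inclusion DAG and taking a topological sort, and handling ties), and your aside that sorting by $|G\ominus a|$ already suffices is a nice sharpening not present in the original.
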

\begin{proof}
We use Lemma~\ref{lem:L1} to obtain a list of $G\ominus a$ for all $a$ in $\mathit{Ar}_{\GGG}$. The comparison between $G\ominus a$ and $G\ominus a'$ for $a$, $a'$ in  $\mathit{Ar}_{\GGG}$ takes $O(|V|)$ time. Hence sorting takes $O(|V||E|^{2})$ time.
\end{proof}

\begin{lem}\label{lem:L5}
Given a chordal graph $G=(V,E)$ with clique-separator graph $\GGG$ and the list of $G\ominus a$ for all $a$ in $\mathit{Ar}_{\GGG}$, obtaining elements in $\KKK_a$ for all $a$ in $\mathit{Ar}_{\GGG}$ takes $O(|E||V|^{2})$ time.
\end{lem}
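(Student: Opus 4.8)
The plan is to obtain each $\KKK_a$ by a direct double loop over all arcs and all maximal cliques of $G$, relying on the fact that both $\mathit{Ar}_{\GGG}$ and $\KKK_G$ are small. Recall that, for an arc $a=(S_1,S_2)$, the set $\KKK_a$ consists of exactly those maximal cliques $K\in\KKK_G$ with $S_1\subseteq K$ and $K\subseteq V(G\ominus a)$, where $V(G\ominus a)$ denotes the vertex set of $G\ominus a$. Thus the whole task reduces to repeatedly testing set containment between subsets of $V$.

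First I would note that the two sets entering each test are available cheaply: the vertex set $V(G\ominus a)$ is part of the input to the lemma, while the vertex set of each maximal clique $K$ can be read directly from the clique-separator graph $\GGG$. Encoding each such set as a characteristic vector over $V$ (a one-time cost that is absorbed in the final bound) makes every membership query run in $O(1)$ time, so a single containment test $X\subseteq Y$ with $X,Y\subseteq V$ costs $O(|V|)$.

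Then, for each arc $a=(S_1,S_2)$ and each $K\in\KKK_G$, I would check $S_1\subseteq K$ and $K\subseteq V(G\ominus a)$, each in $O(|V|)$ time, and place $K$ in $\KKK_a$ exactly when both conditions hold. By Lemma~\ref{lem:L0} we have $|\mathit{Ar}_{\GGG}|\leqslant|E|$, and $|\KKK_G|\leqslant|V|$ from the preliminaries, so there are $O(|E||V|)$ pairs $(a,K)$ to examine; multiplying by the $O(|V|)$ cost per pair gives the claimed $O(|E||V|^{2})$ running time. I do not expect any genuine obstacle here: the only point deserving care is to guarantee that each containment test is linear rather than quadratic in $|V|$, which is exactly what the characteristic-vector encoding provides, and the remainder is the counting argument just given.
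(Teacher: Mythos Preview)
Your proposal is correct and is essentially the same argument as the paper's: both iterate over all arcs $a=(S_1,S_2)$ and all maximal cliques $K$, test whether $S_1\subseteq K$ and whether $K$ lies in $G\ominus a$, and bound the cost by $|\mathit{Ar}_{\GGG}|\cdot|\KKK_G|\cdot O(|V|)\leqslant O(|E||V|^2)$. You are simply more explicit about using characteristic vectors to make each containment test linear in $|V|$, which is a harmless elaboration of the paper's terse proof.
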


\begin{proof}
For each arc $a=(S_1,S_2)$ in $\mathit{Ar}_{\GGG}$, we look at each $K\in \KKK_G$ such that $S_1\subset K$, and we check if the clique is in  $G\ominus a$.
\end{proof}

\end{spacing}

\end{document}